\newcommand{\C}{\mathbb{C}}
\newcommand{\N}{\mathbb{N}}
\newcommand{\R}{\mathbb{R}}
\newcommand{\Q}{\mathbb{Q}}
\newcommand{\bigO}[1]{O\left(#1\right)}
\newcommand{\bfX}{\ensuremath{\mathbf{X}}}
\newcommand{\bfR}{\ensuremath{\mathbf{R}}}
\newcommand{\bfA}{\ensuremath{\mathbf{A}}}
\renewcommand{\emptyset}{\varnothing}
\newcommand{\fVC}{f_{|V}}
\newcommand{\fVCbfA}{f^\bfA_{|V^\bfA}}
\newcommand{\fV}{f_{|V\cap\R^n}}
\newcommand{\fVbfA}{f^\bfA_{|V^\bfA\cap\R^n}}
\newcommand{\fZbfA}{f^\bfA_{|Z^\bfA}}
\newcommand{\F}{\mathbf{F}}
\newcommand{\G}{\mathbf{G}}
\newcommand{\suitex}[1]{x^{(#1)}}
\DeclareMathOperator{\Singop}{\textnormal{\textsf{Sing}}}
\newcommand{\Sing}[1]{{\Singop\left(#1\right)}}
\newcommand{\fonction}[5]{\begin{aligned}#1: \\ ~\end{aligned}\begin{aligned} #2 & \longrightarrow #3 & \\ #4
    & \longmapsto #5 \end{aligned}}
\DeclareMathOperator{\Critop}{\textnormal{\textsf{Crit}}}
\newcommand{\Crit}[1]{\Critop\left(#1\right)}
\newcommand{\crit}{\Crit}
\newcommand{\critfV}{\Crit{f, V}}
\newcommand{\critfVbfA}{\Crit{f^\bfA, V^\bfA}}
\newcommand{\Roots}[1]{\mathrm{Roots}_\R\left(#1\right)}
\newcommand{\NEZOSJSCP}{\mathscr{O}_1}
\newcommand{\NEZOSISSAC}{\mathscr{O}_2}
\newcommand{\NEZOSC}{\mathcal{Q}}
\newcommand{\jac}[2]{\mathsf{Jac}\left(#1,#2\right)}
\newcommand{\jacc}[1]{\mathsf{Jac}\left(#1\right)}
\newcommand{\ProjLeft}[1]{\pi_{\leq #1}}
\newcommand{\Proj}[1]{\pi_{#1}}
\newcommand{\ProjISSAC}[1]{\varphi_{#1}}
\newcommand{\ZariskiClosure}[1]{\overline{#1}^\mathcal{Z}}
\newcommand{\Variety}[1]{V\left(#1\right)}
\newcommand{\PolRing}{\mathbb{Q}\left[\bfX\right]}
\newcommand{\PolRingC}{\mathbb{C}\left[\bfX\right]}
\newcommand{\IdealAngle}[1]{\left\langle #1 \right\rangle}
\newcommand{\nV}{s}
\newcommand{\GLnQ}{\mathrm{GL}_n(\Q)}
\newcommand{\GLnC}{\mathrm{GL}_n(\C)}
\newcommand{\PropReg}[1]{\mathfrak{R}\left(#1\right)}
\newcommand{\PropPJSC}[1]{\mathfrak{P}_1\left(#1\right)}
\newcommand{\PropPISSAC}[1]{\mathfrak{P}_2\left(#1\right)}
\newcommand{\PropOpt}[1]{\textnormal{Opt}\left(#1\right)}
\newcommand{\NEZOS}{\mathscr{O}}
\newcommand{\SetContaining}{\textnormal{\textsf{SetContainingLocalExtrema}}}
\newcommand{\Optimize}{\textnormal{\textsf{Optimize}}}
\newcommand{\FindInfimum}{\textnormal{\textsf{FindInfimum}}}
\newcommand{\Par}[1]{\left(#1\right)}
\newcommand{\ClosedN}{\mathcal{U}}
\newcommand{\locextr}{\ell}
\newcommand{\eps}{\varepsilon}
\newcommand{\ClosedBall}[1]{\overline{B}\left(#1\right)}
\newcommand{\assumptR}{\bfR}
\newcommand{\listS}{\mathsf{ListSamplePoints}}
\newcommand{\listP}{\mathsf{ListCriticalPoints}}
\newcommand{\listC}{\mathsf{L}_\mathsf{sat}}
\newcommand{\polNP}{P_\mathsf{NP}}
\newcommand{\SamplePoints}{\textnormal{\textsf{RealSamplePoints}}}
\newcommand{\SetOfNonProperness}{\textnormal{\textsf{Set\-Of\-Non\-Properness}}}
\newcommand{\RealRootIsolation}{\textnormal{\textsf{RealRootIsolation}}}
\newcommand{\IsEmpty}{\textnormal{\textsf{IsEmpty}}}
\newcommand{\Rpuiseux}{\R\left\langle\eps\right\rangle}
\newcommand{\limzero}{\lim_0}
\newcommand{\ext}[1]{\mathsf{ext}\left(#1\right)}
\newcommand{\CCbfA}{C^\bfA}
\newcommand{\LinSpace}[1]{\Variety{\bfX_{\leq #1}}}
\newcommand{\NP}[2]{{\mathsf{NP}\left(#1,#2\right)}}
\newcommand{\VVA}[3]{{\mathscr{C}\left(#1,#2,#3\right)}}
\newcommand{\VPC}[3]{{\mathscr{P}\left(#1,#2,#3\right)}}
\newcommand{\VVAtot}[2]{{\mathscr{C}\left(#1,#2\right)}}
\newcommand{\VPCtot}[2]{{\mathscr{P}\left(#1,#2\right)}}
\newcommand{\VSP}[1]{\mathscr{S}\left(#1\right)}
\let\oldinf\inf
\renewcommand{\inf}{\displaystyle\oldinf}
\newtheorem{remark}[theorem]{Remark}
\newtheorem{ex}{Example}
\newenvironment{example}[1]
{\begin{ex}[#1] \normalfont}
{\end{ex}}
\title{Probabilistic Algorithm for Polynomial Optimization over a Real Algebraic Set}
\begin{document}
\renewcommand{\thefootnote}{\fnsymbol{footnote}}
\author{Aur\'elien Greuet \footnotemark[1] \ \footnotemark[2] \ \footnotemark[3] \ \footnotemark[4]
  \and Mohab Safey El Din \footnotemark[2] \ \footnotemark[4]}

\footnotetext[1]{{Laboratoire de Math\'ematiques (LMV-UMR8100)

    Universit\'e de Versailles-Saint-Quentin

    45 avenue des \'Etats-unis, 78035 Versailles Cedex, France
}}
\footnotetext[2]{Sorbonne Universit\'es, Univ. Pierre et Marie Curie (Paris 6)

  INRIA, Paris Rocquencourt Center, POLSYS Project, 

  LIP6/CNRS, UMR 7606, 

  Institut Universitaire de France, 

  {Mohab.Safey@lip6.fr} }

\footnotetext[1]{{Universit\'e de Lille 1
    
    Cité scientifique - bâtiment M3
    
    59655 Villeneuve d'Ascq, France

    {Aurelien.Greuet@univ-lille1.fr}}}

\footnotetext[4]{Mohab Safey El Din and Aur\'elien Greuet are
  supported by the GEOLMI grant (ANR 2011 BS03 011 06) of the French
  National Research Agency.}

\renewcommand{\thefootnote}{\arabic{footnote}}

\maketitle
\begin{abstract}
  Let $f, f_1, \ldots, f_\nV$ be $n$-variate polynomials with rational
  coefficients of maximum degree $D$ and let $V$ be the set of common
  complex solutions of $\F=(f_1,\ldots, f_\nV)$. We give an algorithm
  which, up to some regularity assumptions on $\F$, computes an {\em
    exact} representation of the global infimum $f^\star$ of the
  restriction of the map $x\to f\Par{x}$ to ${V\cap\R^n}$, i.e. a
  univariate polynomial vanishing at $f^\star$ and an isolating
  interval for $f^\star$. Furthermore, it decides whether $f^\star$ is
  reached and if so, it returns $x^\star\in V\cap\R^n$ such that
  $f\Par{x^\star}=f^\star$.

  This algorithm is {\em probabilistic}. It makes use of the notion of
  polar varieties. Its complexity is essentially {\em cubic} in
  $\Par{\nV D}^n$ and linear in the complexity of evaluating the
  input. This fits within the best known {\em deterministic}
  complexity class $D^{O(n)}$.

  We report on some practical experiments of a first implementation
  that is available as a {\sc Maple} package. It appears that it can
  tackle global optimization problems that were unreachable by
  previous exact algorithms and can manage instances that are hard to
  solve with purely numeric techniques. As far as we know, even under
  the extra genericity assumptions on the input, it is the first
  probabilistic algorithm that combines practical efficiency with good
  control of complexity for this problem.
\end{abstract}

\begin{keywords}
  Global optimization, polynomial optimization, polynomial system solving, real solutions
\end{keywords}

\begin{AMS}\\
90C26 Nonconvex programming, global optimization.\\
13P25 Applications of commutative algebra (e.g., to statistics, control theory, optimization, etc.).\\
14Q20 Effectivity, complexity.\\
68W30 Symbolic computation and algebraic computation.\\
68W05 Nonnumerical algorithms.\\
13P15 Solving polynomial systems; resultants.
\end{AMS} 

\section{Introduction}

Let $\bfX=X_1, \ldots, X_n$ be indeterminates, $f, f_1, \ldots, f_\nV$
be polynomials in $\PolRing$ of maximal degree $D$ and $V=V(\F)$ be
the set of common complex solutions of $\F=(f_1,\ldots, f_\nV)$. We
focus on the design and the implementation of {\em exact} algorithms
for solving the polynomial optimization problem which consists in
computing and exact representation of the global infimum
$f^\star=\inf_{x\in V\cap\R^n} f\left(x\right)$.  It is worth to note
that, at least under some genericity assumptions, polynomial
optimization problems whose constraints are non-strict inequalities
can be reduced to the one with polynomial equations (see
e.g. \cite{BGHS14} and references therein).


\paragraph*{Motivation and prior work}
While polynomial optimization is well-known to be {\cal NP}-hard (see
e.g. \cite{nesterov00}), it has attracted a lot of attention since it
appears in various areas of engineering sciences (e.g. control theory
\cite{PositivePolControl,HSK03}, static analysis of programs
\cite{Cousot05,Monniaux10}, computer vision \cite{AhAgTh12, AhStTh11},
economics, etc.). In this area, one challenge is to
combine practical efficiency with reliability for polynomial
optimization solvers.

One way to reach this goal is to relax the polynomial optimization
problem by computing algebraic certificates of positivity proving
lower bounds on $f^\star$. This is achieved with methods computing
sums of squares decompositions of polynomials. In this context, one
difficulty is to overcome the fact that a nonnegative polynomial is
not necessarily a sum of squares. Various techniques have been
studied, see e.g. \cite{DeNiPo07, GrGuSaZh, Vui-constraints,
  GloptiPoly, lasserre01globaloptimization, NiDeSt,
  schweighofer}. These approaches use semi-definite programming
relaxations (\cite{parriloPHD, shor}) and numerical solvers of
semi-definite programs.  Sometimes, a sum of squares decomposition
with rational coefficients can be recovered from such a decomposition
computed with floating point coefficients (see \cite{KLYZ12, PaPe}).
Algorithms for computing sums of squares decompositions with rational
coefficients have also been designed \cite{QSZ13, SZ10}. Some cases of
ill-conditionedness have been identified (\cite{gcv_sos}), but there
is no general method to overcome them. It should also be noticed that
techniques introduced to overcome situations where a non-negative
polynomial is not a sum of squares rely on using gradient varieties
\cite{DeNiPo07, GrGuSaZh, NiDeSt} which are close to polar varieties
introduced in the context of symbolic computation for studying real
algebraic sets (see e.g. \cite{BaGiHeMb97, BaGiHeMb01, BGHSS, SaSc}),
quantifier elimination (see e.g. \cite{HS09, HS12}) or connectivity
queries (see e.g. \cite{SaSc11, din2013nearly}).


Another way to combine reliability and practical efficiency is to
design algorithms relying on symbolic computation that solve the
polynomial optimization problem. Indeed, it can be seen as a special
quantifier elimination problem over the reals and a goal would be to
design a dedicated algorithm whose complexity meets the best known
bounds and whose practical behaviour reflects its complexity.

Quantifier elimination over the reals can be solved by the cylindrical
algebraic decomposition algorithm \cite{Collins75}. This algorithm
deals with general instances and has been studied and improved in many
ways (see e.g. \cite{Brown99, Collins98, CollinsHong, Hong92,
  McCallum98}). However, its complexity is doubly exponential in the
number of variables. In practice, its best implementations are limited
to non trivial problems involving $4$ variables at most.

In \cite{BPR96}, a deterministic algorithm whose complexity is singly
exponential in the number of alternations of quantifiers is given. On
polynomial optimization problems, this specializes to an algorithm for
polynomial optimization that runs in time $D^{O(n)}$ (see
\cite[Chapter 14]{BaPoRo06}). The techniques used to get such
complexity results such as infinitesimal deformations did not provide
yet practical results that reflect this complexity gain. While in some
special cases, practical algorithms for one-block quantifier
elimination problems have been derived by avoiding the use of
infinitesimals \cite{HS09, HS12}, the problem of obtaining fast
algorithms in theory and in practice for polynomial optimization
remained open.

Thus, our goal is to obtain an efficient algorithm for solving the
polynomial optimization problem in theory and in practice. Thus, we
expect its complexity to lie within $D^{O(n)}$ operations but with a
good control on the complexity constant in the exponent. We allow to
have regularity assumptions on the input that are reasonable in
practice (e.g. rank conditions on the Jacobian matrix of the input
equality constraints). We also allow probabilistic algorithms provided
that probabilistic aspects do not depend on the input but on random
choices performed when running the algorithm.



A first attempt towards this goal is in \cite{Safey08}. Given a
$n$-variate polynomial $f$ of degree $D$, a probabilistic algorithm
computing $\displaystyle\inf_{x\in\R^n}f\Par{x}$ in $\bigO{n^7D^{4n}}$
operations in $\Q$ is given. Moreover, it is practically efficient
and has solved problems intractable before (up to $6$ variables). Our
goal is to generalize this approach to the case of equality
constraints and get an algorithm whose complexity is essentially cubic
in $\Par{\nV D}^n$ and linear in the evaluation complexity of the input.

\paragraph*{Main results} We provide a probabilistic algorithm based
on symbolic computation solving the polynomial optimization problem up
to some regularity assumptions on the equality constraints whose
complexity is essentially cubic in $\Par{\nV D}^n$. We also provide an
implementation of it and report on its practical behaviour which
reflects its complexity and allows to solve problems that are either
hard from the numerical point of view or unreachable by previous
algorithms based on symbolic computation.

Before describing these contributions in detail, we start by stating
our regularity assumptions which hold on the equality constraints. In
most of applications, the Jacobian matrix of $\F=(f_1, \ldots, f_\nV)$
has maximal rank at all points of the set of common solutions of
$\F$. In algebraic terms, this implies that this solution set is
smooth of co-dimension $\nV$, complete intersection and the ideal
generated by $\F$ (i.e. the set of algebraic relations generated by
$\F$) is radical.

Our regularity assumptions are a bit more general than the situation
we just described. In the sequel, we say that $\F$ satisfies 
assumption $\assumptR$ if the following holds:
\begin{itemize}
\item the ideal $\IdealAngle{\F}$ is radical,
\item $\Variety{\F}$ is equidimensional of dimension $d>0$,
\item $\Variety{\F}$ has finitely many singular points.
\end{itemize}

Under these assumptions, we provide an algorithm that decides the
existence of $f^\star= \inf_{x\in\Variety{\F}\cap\R^n}f\Par{x}$ and,
whenever $f^\star$ exists, it computes an exact representation of it (i.e. a
univariate polynomial vanishing at $f^\star$ and an isolating interval
for $f^\star$).  It can also decide if $f^\star$ is reached and if
this is the case it can return a minimizer $x^\star$ such that
$f\Par{x^\star}=f^\star$. 

We count arithmetic operations $+, -, \times, \div$ in $\Q$ and sign
evaluation at unit cost. We use the soft-O notation:
$\widetilde{O}(a)$ indicates the omission of polylogarithmic factors
in $a$.  The complexity of the algorithm described in this paper is
essentially cubic in $\Par{\nV D}^n$ and linear in the complexity of
evaluating $f$ and $\F$. For instance if the Jacobian matrix of $\F$
has full rank at all points of $\Variety{\F}$ (this is a bit more
restrictive than $\assumptR$) then the algorithm performs
\[\widetilde{O}\left(
    L\left(\sqrt[3]{2} \left(s+1\right)D\right)^{3(n+2)}\right)\]
arithmetic operations in $\Q$ (see Theorem \ref{thm:complexity} for
the general case in Section \ref{sec:complexity}).

Note that this algorithm is a strict generalization of the one given
in \cite{Safey08}. Note also that when the infimum is reached, we
compute a minimizer without any assumption on the dimension of the set
of minimizers.

Our algorithm follows a classical pattern which is used for quantifier
elimination over the reals. It first performs a change of coordinates
to ensure some technical assumptions that are satisfied in general
position. Then, roughly speaking, it computes an ordered finite set of
real numbers containing $f^\star$. Moreover, for any interval between
two consecutive numbers in this set is either contained in
$f\left(\Variety{\F}\cap\R^n\right)$ or has an empty intersection with
$f\left(\Variety{\F}\cap\R^n\right)$.

To compute this set, we use geometric objects which are close to the
notion of polar varieties which, under $\assumptR$, are critical loci
of some projections ; we refer to \cite{BGHSS} for more details on 
several properties of polar varieties and to \cite{BGHS14} for
geometric objects similar to the ones we manipulate in a more
restrictive context. Our modified polar varieties are defined
incrementally and have a degree which is well controlled (essentially
singly exponential in $n$). Algebraic representations of these
modified polar varieties can be computed using many algebraic
algorithms for polynomial system solving. Properties of the systems
defining these modified polar varieties are exploited by some
probabilistic algebraic elimination algorithms (see e.g. the geometric
resolution algorithm \cite{GiLeSa01} and references therein) which
allows to state our complexity results.

Our implementation is based on Gr\"obner basis computations which have
a good behaviour in practice (see also \cite{FGS13, Sp13} for
preliminary complexity estimates explaining this behaviour). Recall
that most of algorithms for computing Gr\"obner bases are
deterministic.  This implementation is available at
\url{http://www-polsys.lip6.fr/~greuet/}. We describe it in Section
\ref{sec:practical}; in particular, we show how to check if the
generic assumptions required for the correctness are satisfied after
performing a linear change of coordinates.  We report on experiments
showing that its practical performances outperform other
implementations of previous algorithms using symbolic computation and
can handle non-trivial problems which are difficult from the numerical
point of view.


\paragraph*{Plan of the paper}
We introduce notations and definitions of geometric objects in Section
\ref{sec:notations}. Section \ref{sec:algo} describes the algorithm
and its subroutines. Section \ref{sec:correctness} is devoted to the
proof of correctness of the algorithm, under some geometric
assumptions on some geometric objects depending on the input. Then in
Section \ref{sec:generic}, we prove that these assumptions are true up
to a generic change of coordinates on the input. Finally, the
complexity is analyzed in Section \ref{sec:complexity}. Some details
on the implementation and practical results are presented in Section
\ref{sec:practical}.



\section{Notations and Basic Definitions}\label{sec:notations}

This section introduces basic geometric objects that our algorithm
manipulates. We also make explicit the regularity assumptions that are
needed to ensure correctness of the algorithm. It is probabilistic
because it requires some generic linear change of variables that are
necessary to ensure some properties that are made explicit too. 

\subsection{Standard notions}
We start by defining basic objects we consider in the sequel. Most of
the notions presented below are described in detail in
\cite{Shafarevich77}. This culminates with the notion of singular and
critical points of a polynomial map. In our context, these notions are
important since the polynomial map under consideration reaches its
extrema at these points.

\paragraph{Algebraic sets}
Let $\bfX=\Par{X_1,\dots,X_n}$ and
$\F=\left\{f_1,\dots,f_\nV\right\}\subset\PolRing$. The algebraic
variety $\Variety{\F}$ is the set $\left\{x\in\C^n\mid
  f_1\Par{x}=\cdots=f_\nV\Par{x}=0\right\}$. The \emph{Zariski
  topology} on $\C^n$ is a topology where the closed sets are the
algebraic varieties. Given a set $U\subset\C^n$, the Zariski closure
of $U$, denoted by $\ZariskiClosure{U}$, is the closure of $U$ for the
Zariski topology. It is the smallest algebraic variety containing $U$.
A Zariski open set is the complement of a Zariski closed set. An
algebraic variety $V$ is $\Q$-\emph{reducible} if it can be written as
the union of two proper algebraic varieties defined by polynomials
with coefficients in $\Q$, \emph{irreducible} else. In this paper, all
the algebraic sets we will consider will be defined by polynomials
with coefficients in $\Q$; thus the notion of reducibility will refer
to $\Q$-reducibility. 

For any variety $V$, there exist irreducible varieties
$V_1,\dots,V_\ell$ such that for $i\neq j$, $V_i\not\subset V_j$ and
such that $V = V_1\cup\dots\cup V_s$. The algebraic varieties $V_i$
(for $1\leq i \leq \ell$) are the irreducible components of $V$. The
decomposition of $V$ as the union of its irreducible components is
unique. 
The dimension of $V=\Variety{f_1,\dots,f_s}$ is the Krull dimension of
its coordinate ring, that is the maximal length of the chains
$p_0\subset p_1\subset\dots\subset p_d$ of prime ideals of the
quotient ring $\PolRingC/\IdealAngle{f_1,\dots,f_s}$ (see
\cite[Chapter 8]{eisenbud}). We write $\dim V = d$. The variety is
\emph{equidimensional} of dimension $d$ if and only if its irreducible
components have dimension $d$.

\paragraph{Polynomial mapping and Jacobian matrices}
Given $f\in\PolRing$, by abuse of notation, we write $f$ for the
polynomial mapping $x\mapsto f\Par{x}$. Given
$\F=\left\{f_1,\dots,f_\nV\right\}\subset\PolRing$, $\jacc{\F}$ is the
Jacobian matrix $\displaystyle\left(\frac{\partial f_i}{\partial
    X_j}\right )_{\begin{subarray}{c}1\leq i\leq \nV\\1\leq j \leq n
  \end{subarray}}$. Likewise, $\jac{\F}{k}$ denotes the truncated
Jacobian matrix of size $p\times\left(n-k+1\right)$ with respect to
the variables $X_k,\dots,X_n$.

\paragraph{Projections}
Let $f\in\PolRing$ and $T$ be a new indeterminate. For $1\leq i\leq
n$, $\ProjLeft{i}$ is the projection 
\[ 
\begin{aligned}
  \Variety{f-T}\cap \left(V\times\C\right) & \longrightarrow \C^{i+1}\\
  (x_1, \ldots, x_n,t) & \longmapsto \left(x_1,\dots,x_i,t\right).
\end{aligned}.\] For $i=0$, the projection
$\ProjLeft{0}\colon\Par{x_1,\dots,x_n, t}\longmapsto t$ is denoted by
$\Proj{T}$.

Given a set $W\subset\C^{n}$, the set of non-properness of the
restriction of $\Proj{T}$ to
$\left(W\times\C\right)\cap\Variety{f\!-\!T}$ is denoted by
$\NP{\Proj{T}}{W}$. This is the set of values $t\in\C$ such that for
all closed neighbourhoods $\mathcal{O}$ of $t$ (for the euclidean
topology), $\Proj{T}^{-1}\Par{\mathcal{O}}\cap
\left(W\times\C\right)\cap\Variety{f-T}$ is not closed and bounded.

\paragraph{Change of coordinates}
Given $\bfA\in\GLnQ$, $f^\bfA$ (resp. $\F^\bfA$, $V^\bfA$) is the the
polynomial $f\left(\bfA\bfX^{}\right)$ (resp. the family
$\left\{f_1^\bfA,\dots,f_\nV^\bfA\right\}$, the variety
$\Variety{\F^\bfA}$). We also denote by $f^\bfA$ the polynomial
mapping $x\mapsto f^\bfA\Par{x}$. A property on an algebraic set
$\Variety{g_1,\dots,g_p}$ is called generic if there exists a
non-empty Zariski open subset of $\GLnC$ such that for all matrices
$\bfA\in\GLnQ$ in this open set, the property holds for
$\Variety{g^\bfA_1,\dots,g^\bfA_p}$.

\paragraph{Regular and singular points}
The Zariski tangent space to $V$ at $x\in V$ is the vector space $T_x
V$ defined by the equations
  \[\frac {\partial f}{\partial X_1}(x) v_1
  + \cdots + \frac {\partial f}{\partial X_n}(x) v_n=0,\] for all
  polynomials $f$ that vanish on $V$. If $V$ is equidimensional, the
  \emph{regular points} on $V$ are the points $x\in V$ where $\dim(T_x
  V)=\dim(V)$; the \emph{singular points} are all other points. The
  set of singular points of $V$ is denoted by $\Sing{V}$. If
  $V=\Variety{\F}$ is equidimensional of dimension $d$ then the set of
  singular points is the set of points in $V$ where the minors of size
  $n-d$ of $\jacc{\F}$ vanish.

\paragraph{Critical points}
Assume that $V=\Variety{\F}$ is equidimensional of dimension $d$. A
point $x\in V\setminus\Sing{V}$ is a critical point of $\fVC$, the
restriction of $f$ to $V$, if it lies in the variety defined by all
the minors of size $n-d+1$ of $\jacc{[f,\F]}$.

We denote by $\Crit{f,V}$ the algebraic variety defined as the
vanishing set of
\begin{itemize}
\item the polynomials in $\F$,
\item and the minors of size $n-d+1$ of $\jacc{[f,\F]}$.
\end{itemize}

\subsection{Definitions}
Our algorithm works under some regularity assumptions that are
reasonable from the application viewpoint. These assumptions are
based on some rank conditions of the Jacobian matrix of the input
constraints $\F$. These rank conditions are sufficient to be able to
characterize from $\F$ the critical points of the restriction of the
map $x \to f(x)$ to $V(\F)$.

We start by defining these regularity assumptions and
next, we introduce basic geometric objects (modified polar varieties)
that are built upon the notions of singular and critical points and
that we use further to construct objects of dimension at most $1$ on
which we can ``read'' the global infimum of a polynomial map.

\paragraph{Assumptions of regularity}

Let $\F\subset\PolRing$ be a polynomial family such that
$\IdealAngle{\F}$ is radical and $V=\Variety{\F}$ is equidimensional
of dimension $d$. In this context, the set of singular points of $V$
is the variety $\Sing{V}$ defined as the vanishing set of
\begin{itemize}
\item the polynomials in $\F$,
\item and the minors of size $n-d$ of $\jacc{\F}$,
\end{itemize}
The algebraic variety $V$ is \emph{smooth} if $\Sing{V}=\emptyset$.

It is well known that local extrema are critical or singular
points. Thus, it is natural to compute them. In order to be able to
compute these points, we will assume some properties of regularity on
the inputs.

The polynomial family $\F$ satisfies assumptions $\assumptR$ if
\begin{itemize}
\item the ideal $\IdealAngle{\F}$ is radical,
\item $\Variety{\F}$ is equidimensional of dimension $d>0$,
\item $\Variety{\F}$ has finitely many singular points.
\end{itemize}

Remark that if $V$ satisfies assumptions $\assumptR$ then the variety
$\Crit{f,V}$ defined above as the zero-set of minors of the Jacobian
matrix of the system is the union of the critical points of $\fVC$ and
$\Sing{V}$. Hence, it contains all the points at which the local
extrema are reached.

In this paper, we consider a polynomial family
$\F=\left\{f_1,\dots,f_\nV\right\}$ that satisfies assumptions
$\assumptR$. We denote by $V$ the algebraic variety $\Variety{\F}$.

\paragraph{Sample points and modified polar varieties }
From a computational point of view, the characterization of critical
and singular points as solutions of a polynomial system is not
sufficient. When they are in finite number, we will compute
parametrizations of these sets.

However, an infimum is not necessarily reached. It can be an
asymptotic value, that is the limit of a sequence
$f\left(x_\ell\right)$, where $\left(x_\ell\right)_{\ell\in\N}\subset
V\cap\R^n$ tends to infinity.

Our goal is then to construct geometric objects that can be used to
compute a parametrization of some critical points and asymptotic
values. This is the motivation of the following definition.

\begin{definition}\label{def:geomobjects}
  For $1\leq i\leq d-1$, let $\VVA{f}{\F}{i}$ be the algebraic variety
  defined as the vanishing set of 
  \begin{itemize}
  \item the polynomials in $\F$,
  \item the minors of size $n-d+1$ of $\jac{\left[f,\F\right]}{i+1}$,
  \item and the variables $X_1,\dots,X_{i-1}$.
  \end{itemize}
  By convention, $\VVA{f}{\F}{d} = V\cap\Variety{X_1,\dots,X_{d-1}}$. Let
  \[\VVAtot{f}{\F} = \bigcup_{1\leq i\leq d} \VVA{f}{\F}{i}.\]

  For $1\leq i\leq d-1$, let
  $\VPC{f}{\F}{i}=\ZariskiClosure{\VVA{f}{\F}{i}\setminus\critfV}\cap\critfV$. For
  $i=d$, let $\VPC{f}{\F}{d} = \VVA{f}{\F}{d}$. Finally, let
  \[\VPCtot{f}{\F} = \bigcup_{1\leq i\leq d} \VPC{f}{\F}{i}.\]
\end{definition}
We will prove that up to a generic linear change of coordinates
$\ZariskiClosure{\VVA{f}{\F}{i}\setminus\critfV}$
(resp. $\VPC{f}{\F}{i}$) has dimension at most $1$ (resp. $0$).

Remark that under assumptions $\assumptR$, $\VVAtot{f}{\F}$ is the
union of
\begin{itemize}
\item the set of singular points $\Sing{V}$,
\item the intersection of $\Variety{X_1,\dots,X_{i}}$ and the critical
  locus of the projection $\ProjLeft{i}$ restricted to
  $\left(V\times\C\right)\cap\Variety{f-T}$, for $1\leq i\leq d$.
\end{itemize}

This definition is inspired by the one of the polar varieties (see
\cite{BaGiHeMb97, BaGiHeMb01,BGHSS,SaSc, SaSc04}). 

We denote by $\VSP{\F}$ any finite set that contains at least one
point in each connected component of $V\cap\R^n$. Such a set can be
efficiently computed using e.g. \cite{SaSc}.

\subsection{Some useful properties}\label{sec:propertiesoptimization}
As already mentioned, we will prove that up to a generic change of
coordinates, $\VVAtot{f}{\F}\setminus \Crit{f,V}$ has dimension at
most one. {F}rom this, we will deduce that the set of asymptotic
values of $f$ is the set of non-properness of the restriction to
$\Variety{f-T}\cap\left(\VVAtot{f}{\F}\times\C\right)$ of the
projection $\pi_T$. Since $\VVAtot{f}{\F}\setminus \Crit{f,V}$ has
dimension at most one, this set of non-properness is finite and can be
computed algorithmically \cite{LaRo07, SaSc04}.

Moreover, we will prove that up to a generic change of coordinates,
$\VPCtot{f}{\F}$ can be used to compute a finite set of points whose
image by $f$ contains all the reached local extrema.

In order to identify the global infimum among these values of
non-properness and the reached local extrema, some properties are
needed. For simplicity, these properties are summarized in the
following definition.

\begin{definition}\label{def:propopt}
  Let $V$ be an algebraic set and $W$ be a subset of $\R$, we say that
  property $\PropOpt{W, V}$ holds if:
  \begin{enumerate}
  \item $W$ is finite,\label{item:prooptfinite}
  \item $W$ contains every local extremum of $\fV$,\label{item:prooptextr}
  \item let $W=\left\{a_1,\dots,a_k\right\}$, $a_0=-\infty$ and
    $a_{k+1}=+\infty$. There exists a non-empty Zariski open set
    $\NEZOSC\subset\C$ such that for all $0\leq i \leq k$ and all
    couples $(t, t')$ in $\left]a_i,a_{i+1}\right[$
\[
f^{-1}\left( t \right)\cap V\cap\R^n=\emptyset\Longleftrightarrow f^{-1}\left( t' \right)\cap V\cap\R^n=\emptyset. 
\]
  \end{enumerate}
\end{definition}

Now, we define the set $W(f,\F)$ (or simply $W$ when there is no
ambiguity on $\F$) as the union of $f\left(\VSP{\F}\right)$,
$f\left(\VPCtot{f}{\F}\right)$ and the set of non-properness of the
restriction of the projection $\pi_T$ to
$\Variety{f-T}\cap\left(\VVAtot{f}{\F}\times\C\right)$.  Assuming that
$\F$ satisfies $\assumptR$, our goal is to prove that
$\PropOpt{W(f,\F), V}$ is satisfied, up to a generic change of
coordinates.

\subsection{Genericity properties}\label{sec:genericnotations}
In order to do this, we will use some geometric properties that are
true up to a generic linear change of coordinates. We define these
properties in the next paragraph. Assuming these generic properties,
we prove that $\PropOpt{W(f, \F), V}$ holds in Section
\ref{sec:correctness}.

A value $c\in\R$ is isolated in $f\Par{V\cap\R^n}$ if and only if
there exists a neighborhood $\mathcal{B}$ of $c$ such that
$\mathcal{B}\cap f\Par{V\cap\R^n}=\left\{c\right\}$. Given
$f\in\PolRing$ and $\F\subset\PolRing$, we consider the following
properties.
\begin{itemize}
\item $\PropReg{f,\F}$: for all $t\in \R\setminus
  f\Par{\critfV\cup\Sing{V}}$, the ideal $\IdealAngle{\F, f-t}$ is
  radical, equidimensional and $\Variety{\F,f-t}$ is either smooth of
  dimension $d-1$ or is empty.
\item $\PropPJSC{f,\F}$: there exists a non-empty
  Zariski open set $\NEZOSC\subset\C$ such that for all $t\in \R\cap
  \NEZOSC$, the restriction of $\ProjLeft{i-1}$ to $V\cap\Variety{f -
    t}\cap\VVA{f}{\F}{i}$ is proper for $1\leq i \leq d$.
\item $\PropPISSAC{f,\F}$: for any critical value $c$ of
  $\fV$ that is not isolated in $f\left(V\cap\R^n\right)$, there
  exists $x_c\in\VPCtot{f}{\F}$ such that $f\Par{x_c}=c$.
\end{itemize}

Assume that $\F$ satisfies assumptions $\assumptR$. We will prove that
up to a generic change of coordinates, the above properties are
satisfied. Properties $\PropReg{f,\F}$ and $\PropPJSC{f,\F}$ will be
used to prove that $\VVAtot{f}{\F}\setminus\Crit{f,V}$ has dimension
at most $1$ and $\VPCtot{f}{\F}$ is finite. This implies that the
first assertion in Definition \ref{def:propopt} holds, for the set $W(f,\F)$
defined in Section \ref{sec:propertiesoptimization}. They will also be
used to prove that the third assertion of Definition \ref{def:propopt}
is satisfied.

Finally, the second assertion of Definition \ref{def:propopt} will be
proved to be satisfied using Properties $\PropPJSC{f,\F}$ and
$\PropPISSAC{f,\F}$.

Theorem \ref{thm:generic} establishes that up to generic change of
coordinates, properties $\PropReg{f,\F}$, $\PropPJSC{f,\F}$ and
$\PropPISSAC{f,\F}$ hold.

\section{Algorithm}\label{sec:algo}

This section is structured as follows. After an outline of the
algorithm, we explain its specification. Next, we describe the
subroutines it uses and this section ends with a formal description of
the algorithm.

\subsection{Outline}\label{sec:algooutline}
There are basically two main steps in our algorithm. After a generic
change of coordinates, the first one is the computation of finite sets
from which a set containing all the local extrema of $\fV$ can be
recovered by deciding the emptiness of real algebraic sets. Reusing
the notations in the previous section, these sets are the following:
\begin{itemize}
\item a set $\VSP{\F}$ of sample points of $\Variety{\F}\cap\R^n$,
\item the set $\VPCtot{f}{\F}\cap\R^n$,
\item and the set of non-properness of the restriction to
  $\Variety{f-T}\cap\left(\VVAtot{f}{\F}\times\C\right)$ of the
  projection $\pi_T$.
\end{itemize}

Let $W=W(f, \F)$ be defined as the union of the above set of
non-properness, $f\left(\VSP{\F}\right)$ and
$f\left(\VPCtot{f}{\F}\cap\R^n\right)$. We prove in Section
\ref{sec:correctness} that Property $\PropOpt{W, V}$ holds. In
particular, this means that $W$ contains all the local extrema of
$\fV$ and is finite.

The second main step of the algorithm is then to detect the global
infimum among the values in $W$. By definition, $f^\star$ is the
smallest value $c$ in $V\cap\R^n$ such that
\begin{enumerate}
\item if $t<c$ then $t\not\in f\Par{V\cap\R^n}$ and
\item for all $t\geq c$, $\left[c,t\right]$ meets $f\Par{V\cap\R^n}$.
\end{enumerate}

Let $a_1<\dots<a_k$ be the values in $W$, let $a_0 = -\infty$ and
$a_{k+1}=+\infty$.

If $f^\star=-\infty$, then for any value $t\in ]-\infty, a_1[$,
$f^{-1}(t)\cap V\cap\R^n$ is not empty. This can be decided using any
algorithm for deciding the emptiness of real algebraic sets.

Now, let $0\leq i\leq k+1$ and assume that the infimum $f^\star$ is not
$a_0,\dots,a_{i-1}$. Since $W$ contains all the local extrema,
$f^\star\geq a_i$. If $a_i$ lies in $f\left(\VSP{\F}\right)$ or in
$f\left(\VPCtot{f}{\F}\cap\R^n\right)$ then this is a value attained
by $f$. In this case, $f^\star$ is necessarily $a_i$.

Else, if $a_i$ is an asymptotic value then there are values attained
by $f$ in every neighborhood of $a_i$. Since $f^\star\geq a_i$, the
third assertion of Property $\PropOpt{W, V}$ implies that for almost all
$t\in \left]a_{i}, a_{i+1}\right[$, $t$ is a value attained by $f$. In
particular, if $a_i$ is an asymptotic value then for a random number
$t\in \left]a_{i}, a_{i+1}\right[$, the variety
$\Variety{f-t}\cap\Variety{\F}\cap\R^n$ is non-empty. Thus, if
$\Variety{f-t}\cap\Variety{\F}\cap\R^n$ is not empty for a random
value $t$ in $\left]a_{i}, a_{i+1}\right[$ then $a_i$ is an asymptotic
value, that is necessarily $f^\star$. Else, $a_i$ is not relevant for
the optimization problem under consideration and we have $f^\star\geq
a_{i+1}$.

\subsection{Specifications}\label{section:spec}
In the descriptions of the algorithms, 
algebraic sets are represented with polynomial families that define
them and ideals are represented by a finite list of generators (e.g. a
Gröbner basis).

Let $Z\subset\R^n$ be a finite real algebraic set defined by
polynomials in $\Q\left[\bfX\right]$. It can be represented by a
rational parametrization, that is a sequence of polynomials
$q,q_0,q_1,\dots,q_n \in \Q\left[U\right]$ such that for all
$x=\Par{x_1,\dots,x_n}\in Z$, there exists $u\in\R$ such that
\[\left\{\begin{array}{ccc}
    q(u) & = & 0 \\
    x_1 & = & q_1(u)/q_0(u) \\
    & \vdots & \\
    x_n & = & q_n(u)/q_0(u)
  \end{array}\right.\]
with the convention that $q=1$ when $Z=\emptyset$. 
Moreover, a single point in $Z$ can be represented using isolating
intervals. Note that such a representation can be computed from a
Gr\"obner basis \cite{fglm, fm11, FGHR, RouillierRUR} and algorithms computing such a
representation are implemented in computer algebra systems.

Also, a real algebraic number $\alpha$ is represented by a
univariate polynomial $P$ and an isolating interval $I$.

\subsection{Subroutines}
In this paragraph we describe the main subroutines \SetContaining{}
and \FindInfimum{} that will are used in the main algorithm. They
correspond to the two main steps sketched in Section
\ref{sec:algooutline}.

We start with some standard subroutines on which these both
subroutines are based. Given a univariate polynomial $P$, we denote by
$\Roots{P}$ the set of its real roots.

\subsubsection*{\SamplePoints{}}
Given $\F\subset\Q\left[\bfX\right]$ satisfying assumptions $\bfR$,
\SamplePoints{} returns a list of equations $\listS\subset\PolRing$
such that $\Variety{\listS}$ contains at least one point in each
connected component of $\Variety{\F}\cap\R^n$. We refer to \cite{SaSc}
and references therein for an efficient algorithm performing this
task.

\subsubsection*{\SetOfNonProperness{}}
The routine \SetOfNonProperness{} takes as input $f\in\Q[\bfX]$ and
$\G\subset\Q[\bfX]$ such that $\dim\Variety{\G}\leq 1$. It returns a
univariate polynomial in $T$ whose set of roots contains the set of
non-properness of the restriction of $\Proj{T}$ to
$\Variety{f-T}\cap\left(\Variety{\G}\times\C\right)$.  Such an
algorithm is given in \cite{LaRo07, SaSc04}.

\subsubsection*{\RealRootIsolation}
Given $P\in\Q\left[T\right]$ whose set of real roots is
$a_1<\dots<a_k$, this routine returns a sorted list of $k$ pairwise
disjoint intervals with rational endpoints $\left[q_i,q_{i+1}\right]$
such that $a_i\in\left[q_i,q_{i+1}\right]$ (since the intervals are
disjoint, the list is sorted for the natural order : $[a,b] < [c,d]$
if and only if $b < c$). We refer to \cite{BaPoRo06, RouZim} for an
algorithm with this specification.

\subsubsection*{\IsEmpty}
Given $\G\subset\Q\left[\bfX\right]$ satisfying assumptions $\bfR$,
this routine returns either \textsf{true} if $\Variety{\G}\cap\R^n$ is
empty of \textsf{false} if it is nonempty. This is a weakened variant
of \SamplePoints{}.

\subsubsection*{\SetContaining{}}
It takes as input $f\in\PolRing$ and $\F\subset\PolRing$ satisfying
assumptions $\assumptR$, $\PropPJSC{f,\F}$, $\PropPISSAC{f,\F}$ and
$\PropReg{f,\F}$. We denote by $V$ the algebraic set defined by $\F$.

It returns a list
$\listS\subset\Q\left[\bfX\right]$, a list
$\listP\subset\Q\left[\bfX\right]$ and a polynomial
$\polNP\in\Q\left[T\right]$ such that the property
\[\PropOpt{f\Par{\Variety{\listS}}\cup
  f\Par{\Variety{\listP}}\cup\Roots{\polNP}, V}\] holds.
The list of polynomials $\listS$ and $\listP$ represent respectively
at least one point in each connected component of $V\cap \R^n$ and the
set of critical points of the restriction of the map $x\to f(x)$ to
$V$ and some fibers.
 

\vspace{1em}
\hrule
\vspace{0.5em}
\noindent$\SetContaining{\Par{f, \F}}$
\begin{itemize}[leftmargin=1.5em]
\item $\listS\leftarrow\SamplePoints\Par{\F}$;
\item $\polNP\leftarrow 1$;
\item for $1\leq i\leq d$ do
  \begin{itemize}
  \item $\listC[i]\leftarrow$ a list of equations defining
    $\ZariskiClosure{\VVA{f}{\F}{i}\setminus\crit{f,\Variety{\F}}}$;
  \item $\polNP\leftarrow$ the univariate polynomial
    $\polNP\times\SetOfNonProperness\Par{f, \listC[i]}$;
  \item $\listP[i]\leftarrow$ a list of equations defining
    $\VPC{f}{\F}{i}$.
  \end{itemize}
  \item return $\Par{\listS,\listP,\polNP}$;
\end{itemize}
\vspace{0.5em}
\hrule
\vspace{1em}

Its correctness is stated in Proposition
\ref{prop:SetContainingcorrect}. Its proof relies on intermediate
results given in Section \ref{sec:SetContainingcorrect}.

\subsubsection*{\FindInfimum{}}
The routine \FindInfimum{} takes as input:
\begin{itemize}
\item $f\in\Q[\bfX]$,
\item $\F\subset\Q[\bfX]$ satisfying assumptions $\assumptR$ and
  $\PropReg{f,\F}$; we let $V\subset \C^n$ be the algebraic set it
  defines,
\item $\listS\subset\Q\left[\bfX\right]$,
  $\listP\subset\Q\left[\bfX\right]$ and $\polNP\in\Q\left[T\right]$
  such that $\PropOpt{f\Par{\Variety{\listS}}\!\cup\!
  f\Par{\Variety{\listP}}\cup\Roots{\polNP}, V}$ holds.
\end{itemize}

It returns
\begin{itemize}
\item $+\infty$ if $\Variety{\F}\cap\R^n$ is empty;
\item $-\infty$ if $f$ is not bounded below on $\Variety{\F}\cap\R^n$;
\item if $f^\star$ is finite and not reached:
  $\polNP\in\Q\left[T\right]$ and an interval $I$ isolating $f^\star$;
\item if $f^\star$ is reached, a rational parametrization encoding
  $x^\star$ and $f^\star = f\Par{x^\star}$.
\end{itemize}


\vspace{1em}
\hrule
\vspace{0.5em}
\noindent$\FindInfimum{\Par{f,\F,\listS,\listP,\polNP}}$
\begin{itemize}[leftmargin=1.5em]
  \item $a_1<\!\cdots\!<a_k\leftarrow f\Par{\Variety{\listS}}\cup f\Par{\Variety{\listP}}\cup\Roots{\polNP}$;
  \item $a_{k+1}= +\infty$;
  \item $q_0\leftarrow$ a random rational $<a_1$;
  \item if \IsEmpty$\left(\left\{ f-q_0,\F\right\} \right)$=false then
    \begin{itemize}[leftmargin=1em]
    \item return $-\infty$; 
    \end{itemize}
  \item $i\leftarrow 1$;
  \item while $i\leq k$ do
  \begin{itemize}[leftmargin=1em]
  \item if $a_i\in f\Par{\Variety{\listS}}\cup f\Par{\Variety{\listP}}$ then
    \begin{itemize}
    \item RP $\leftarrow$ a rational parametrization encoding a
      minimizer $x^\star$ and $f\left(x^\star\right)=a_i$;
    \item return RP
    \end{itemize}
    else
    \begin{itemize}
    \item $q_i\leftarrow$ a random rational in $\left]a_i,a_{i+1}\right[$;
    \item if \IsEmpty$\left(\left\{ f-q_i,\F\right\} \right)$=false then
      \begin{itemize}
      \item return $\left(\polNP,\left] q_{i-1}, q_i  \right[\right)$
      \end{itemize}
      else
      \begin{itemize}
      \item $i\leftarrow i+1$
      \end{itemize}
    \end{itemize}
  \end{itemize}
  \item return $a_{k+1}$
\end{itemize}
\vspace{0.5em}
\hrule
\vspace{1em}

Its correctness is stated by Proposition \ref{prop:FindInfimumcorrect}
whose proof is in Section \ref{prop:FindInfimumcorrect}.

By assumption on the inputs, $\Variety{\listS}\cup
\Variety{\listP}$ is finite. As explained in Section
\ref{section:spec}, a single point $x$ that lies in this variety can
be represented by a rational parametrization $q,q_0,q_1,\dots,q_n$ and
an interval isolating the corresponding root of $q$. From this
parametrization and the isolating interval, an interval isolating
$f\Par{x}$ can be computed. Likewise, the roots of $\polNP$ are
represented by isolating intervals. These intervals can be computed
such that they do not intersect. Hence, they can be sorted so that the
$i$-th interval corresponds to $a_i$.

Then, testing whether $a_i\in f\Par{\Variety{\listS}}\cup
f\Par{\Variety{\listP}}$ is done by testing whether the interval
corresponding with $a_i$ comes from the parametri\-zation of
$\Variety{\listS}\cup \Variety{\listP}$. If so, the parametrization
and the isolating interval of $q$ corresponding with $a_i$ are an
encoding for $a_i$ and a point $x_{a_i}$ such that
$f\Par{x_{a_i}}=a_i$.

\subsection{Main Algorithm}


The main routine \Optimize{} takes as input $f \in \Q[\bfX]$ and
$\F\subset\Q[\bfX]$ satisfying assumptions $\assumptR$. It returns
\begin{itemize}
\item $+\infty$ if $\Variety{\F}\cap\R^n$ is empty;
\item $-\infty$ if $f$ is not bounded below on $\Variety{\F}\cap\R^n$;
\item if $f^\star$ is finite and not reached:
  $\polNP\in\Q\left[T\right]$ and an interval $I$ isolating $f^\star$;
\item if $f^\star$ is reached, a rational parametrization encoding
  $x^\star$ and $f^\star = f\Par{x^\star}$.
\end{itemize}

\vspace{1em}
\hrule
\vspace{0.5em}
\noindent
$\Optimize{\Par{f,\F}}$.
\begin{itemize}[leftmargin=1.5em]
\item $\bfA\leftarrow$ a random matrix in $\GLnQ$;
\item $\Par{\listS, \listP,\polNP}\leftarrow
  \SetContaining\Par{f^\bfA,\F^\bfA}$;
\item \textsf{Infimum} $\leftarrow$ \FindInfimum$\Par{f^\bfA,\F^\bfA,
    \listS, \listP, \polNP}$;
\item return \textsf{Infimum}.
\end{itemize}
\vspace{0.5em}
\hrule

  

  
    

\section{Proof of correctness}\label{sec:correctness}
We first assume the following theorem, it is proved in Section
\ref{sec:generic}. It states that the properties $\PropReg{f, \F}$,
$\PropPJSC{f, \F}$ and $\PropPISSAC{f, \F}$ defined in Section
\ref{sec:genericnotations} are satisfied up to a generic change of
coordinates.  

\begin{theorem}\label{thm:generic}
  Let $f\in\PolRing$ and $\F\subset\PolRing$ satisfying assumptions
  $\assumptR$. There exists a non-empty Zariski open set
  $\NEZOS\subset\GLnC$ such that for all $\bfA\in\GLnQ\cap\NEZOS$, the
  properties $\PropReg{f^\bfA,\F^\bfA}$, $\PropPJSC{f^\bfA,\F^\bfA}$
  and $\PropPISSAC{f^\bfA,\F^\bfA}$ hold.
\end{theorem}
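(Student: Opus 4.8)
The plan is to prove the three properties separately, each by exhibiting a non-empty Zariski open subset of $\GLnC$ on which it holds, and then intersect the three open sets. Since a finite intersection of non-empty Zariski open sets is non-empty (as $\GLnC$ is irreducible), and $\GLnQ$ is Zariski dense in $\GLnC$, this yields the theorem. So the core work is three genericity lemmas, each with the same flavor: we have a geometric condition (radicality + smoothness of fibers; properness of a coordinate projection restricted to a curve; the reached-non-isolated critical values being captured by the finite set $\VPCtot{f}{\F}$) that we want to make true after a generic linear change of coordinates on $\bfX$.

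First, for $\PropReg{f^\bfA, \F^\bfA}$, I would argue as follows. Under assumptions $\assumptR$, $V$ is equidimensional of dimension $d$ with finitely many singular points. For a value $t$ outside the finite set $f(\critfV \cup \Sing V)$, the fiber $\Variety{\F, f-t}$ avoids $\Sing V$ and the critical locus of $\fVC$, so at each of its points the Jacobian of $[\F, f-t]$ has full rank $s+1$; this gives smoothness of dimension $d-1$ and, by the Jacobian criterion together with equidimensionality, that $\IdealAngle{\F, f-t}$ is radical and equidimensional — and this argument does not actually require any change of coordinates, since it is the generic fiber of the map $f$ restricted to $V$ minus its critical/singular locus (a Sard-type / generic smoothness statement over $\Q$, using that the base is one-dimensional so "all but finitely many $t$" is exactly "$t \notin f(\critfV \cup \Sing V)$"). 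The subtlety is whether emptiness is allowed; that is already built into the property statement. So $\PropReg{}$ holds for \emph{every} $\bfA \in \GLnQ$, or at worst on a dense open set if one needs a change of coordinates to guarantee that $\critfV$ is finite — but that finiteness is part of the downstream work and can be folded into the other two properties.

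Second, for $\PropPJSC{f^\bfA, \F^\bfA}$: the modified polar variety $\VVA{f}{\F}{i}$ is, under $\assumptR$, essentially $\Sing V$ together with $\Variety{X_1,\dots,X_i}$ intersected with the critical locus of $\ProjLeft{i}$ on $(V\times\C)\cap\Variety{f-T}$; intersecting further with the fiber $f = t$ one expects a curve, and one wants the projection $\ProjLeft{i-1}$ restricted to it to be proper (finite) for generic $t$. The standard tool here is a Noether-normalization-type argument: after a generic linear change of coordinates, the projection onto any fixed subset of coordinates becomes proper (finite) on a variety of complementary dimension. I would quantify over the relevant incidence variety in $\bfX$-space times $\GLnC$, use that properness of a linear projection restricted to an algebraic set of the right dimension is a Zariski-open condition on the linear map (this is classical — see the polar-variety references \cite{BGHSS, SaSc}), and then specialize. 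The one-dimensionality needed for "proper $\iff$ finite fibers" comes from the dimension count on $\VVA{f}{\F}{i}\cap\Variety{f-t}$, which itself should be established generically.

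Third, and this is where I expect the main obstacle, is $\PropPISSAC{f^\bfA, \F^\bfA}$: every non-isolated critical value $c$ of $\fV$ must be hit by some point of $\VPCtot{f}{\F}$. Recall $\VPC{f}{\F}{i} = \ZariskiClosure{\VVA{f}{\F}{i}\setminus\critfV}\cap\critfV$, so this is genuinely a statement about how the modified polar varieties "see" the real critical locus on their boundary. The argument should run: if $c$ is critical and not isolated in $f(V\cap\R^n)$, then near a real critical point $x$ with $f(x)=c$ there is a whole real arc on $V$ along which $f$ is non-constant and tends to $c$; pushing this arc into the appropriate $\VVA{f}{\F}{i}$ (chosen by the position of $x$ relative to the flags $\Variety{X_1,\dots,X_{i-1}}$, which is where the generic change of coordinates is used — to put $x$, and the polar structure, in general position) and taking the limit forces $x \in \ZariskiClosure{\VVA{f}{\F}{i}\setminus\critfV}$, hence $x \in \VPC{f}{\F}{i}$. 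Making the arc and the limit rigorous — in particular ensuring the arc really lies in $\VVA{f}{\F}{i}\setminus\critfV$ and not entirely inside $\critfV$, which is exactly the genericity one must arrange — is the delicate point, and is presumably handled via curve selection (possibly over the real Puiseux series $\Rpuiseux$, for which the macros suggest the authors have set up machinery) together with the dimension bounds on $\VVAtot{f}{\F}\setminus\Crit{f,V}$ proved earlier. Once all three open sets are produced, intersect them and invoke density of $\GLnQ$ in $\GLnC$ to conclude.
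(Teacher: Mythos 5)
Your decomposition into three genericity lemmas, each producing a non-empty Zariski open subset of $\GLnC$ that you then intersect and specialize to $\GLnQ$, is exactly the paper's structure, and your outlines for $\PropReg{f^\bfA,\F^\bfA}$ and $\PropPJSC{f^\bfA,\F^\bfA}$ agree with it: the paper invokes minor generalizations of Lemmas~2.2 and~2.3 of \cite{GrGuSaZh}, and in particular $\PropReg{f^\bfA,\F^\bfA}$ holds for \emph{every} $\bfA$ under $\assumptR$, as you observe.

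The gap is in your sketch of $\PropPISSAC{f^\bfA,\F^\bfA}$. You correctly anticipate the Puiseux-series machinery and flag that the delicate point is making the arc lie in $\VVA{f}{\F}{i}\setminus\critfV$, but you never say how to produce such an arc, and the naive choice (any real arc on $V$ through a critical point along which $f$ is non-constant and tends to $c$) has no reason to sit inside the polar-type set $\VVA{f}{\F}{i}$. The mechanism the paper uses is specific and load-bearing. Given the non-isolated critical value $c$, fix a connected component $C^\bfA$ of $\Variety{f^\bfA-c}\cap V^\bfA\cap\R^n$, choose $i$ by the condition $C^\bfA\cap\LinSpace{i-1}\neq\emptyset$ and $C^\bfA\cap\LinSpace{i}=\emptyset$ (a condition on the whole component, not on a preselected point), and use the properness and boundary statements of Theorem~\ref{thm:propernessISSAC} to obtain a closed image under $\ProjISSAC{i}$ and hence an extremizer $x^\star$ of $\ProjISSAC{i}$ on $C^\bfA\cap\LinSpace{i-1}$. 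Then pass to the regular level $f^\bfA=c\pm\eps$ over $\Rpuiseux$ and take the extremizer $x_\eps$ of $\ProjISSAC{i}$ on the deformed component converging to $x^\star$. The Lagrange rank-drop condition for a coordinate extremizer places $x_\eps$ \emph{tautologically} in $\ext{\VVA{f^\bfA}{\F^\bfA}{i}}$, and since $c\pm\eps$ is not a critical value of $f^\bfA$ one gets $x_\eps\notin\ext{\critfVbfA}$ for free; applying $\limzero$ then yields $x^\star\in\ZariskiClosure{\VVA{f^\bfA}{\F^\bfA}{i}\setminus\critfVbfA}\cap\critfVbfA=\VPC{f^\bfA}{\F^\bfA}{i}$. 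Without the observation that extremizers of a coordinate projection over a regular fiber automatically satisfy the defining minor equations of $\VVA{f^\bfA}{\F^\bfA}{i}$, the ``push the arc into the polar variety'' step in your sketch has no engine, so this part of the argument is not yet a proof.
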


Let $\NEZOS\subset\GLnC$ be the Zariski open set given in Theorem
\ref{thm:generic}. We prove in the sequel that if the random matrix
chosen in $\Optimize$ lies in $\NEZOS$ then $\Optimize$ is correct.

The correctness of \Optimize{} is a consequence of the correctness of
the subroutines \SetContaining{} and \FindInfimum{}. The correctness
of \SetContaining{} is given in Section \ref{sec:SetContainingcorrect}
below. The proof of correctness of \FindInfimum{} is given in Section
\ref{sec:FindInfimumcorrect} page \pageref{sec:FindInfimumcorrect}.

\subsection{Correctness of \SetContaining{}}\label{sec:SetContainingcorrect}
We first state the correctness of $\SetContaining\Par{f^\bfA,\F^\bfA}$.
\begin{proposition}\label{prop:SetContainingcorrect}
  Let $f\in\PolRing$ and
  $\F=\left\{f_1,\dots,f_\nV\right\}\subset\PolRing$ satisfying
  assumptions $\assumptR$. Let $\NEZOS\subset\GLnC$ be the
  Zariski open set defined in Theorem \ref{thm:generic}. Then for all
  $\bfA\in\GLnQ\cap\NEZOS$, $\SetContaining\Par{f^\bfA,\F^\bfA}$ returns a list
$\listS\subset\Q\left[\bfX\right]$, a list
$\listP\subset\Q\left[\bfX\right]$ and a polynomial
$\polNP\in\Q\left[T\right]$ such that the property
\[\PropOpt{f^\bfA\Par{\Variety{\listS}}\cup
  f^\bfA\Par{\Variety{\listP}}\cup\Roots{\polNP}, V^\bfA}\] holds.
\end{proposition}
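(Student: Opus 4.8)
The plan is to unwind the definition of $\Optimize$'s subroutine $\SetContaining{}$ and match its output to the three clauses of $\PropOpt{W, V^\bfA}$ from Definition \ref{def:propopt}. Fix $\bfA\in\GLnQ\cap\NEZOS$. By Theorem \ref{thm:generic}, the properties $\PropReg{f^\bfA,\F^\bfA}$, $\PropPJSC{f^\bfA,\F^\bfA}$ and $\PropPISSAC{f^\bfA,\F^\bfA}$ all hold, so $(f^\bfA,\F^\bfA)$ satisfies the input specification of $\SetContaining{}$. By inspection of the pseudo-code, the returned triple satisfies $\Variety{\listS}=\VSP{\F^\bfA}$ (the sample-point set), $\Variety{\listP}=\bigcup_{i=1}^d\VPC{f^\bfA}{\F^\bfA}{i}=\VPCtot{f^\bfA}{\F^\bfA}$, and $\Roots{\polNP}$ contains the set of non-properness of the restriction of $\pi_T$ to $\Variety{f^\bfA-T}\cap(\VVAtot{f^\bfA}{\F^\bfA}\times\C)$, since $\polNP$ is the product over $i$ of the outputs of $\SetOfNonProperness{}$ applied to $(f^\bfA,\listC[i])$ with $\Variety{\listC[i]}=\ZariskiClosure{\VVA{f^\bfA}{\F^\bfA}{i}\setminus\critfVbfA}$. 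Hence the set $W:=f^\bfA(\Variety{\listS})\cup f^\bfA(\Variety{\listP})\cup\Roots{\polNP}$ contains $W(f^\bfA,\F^\bfA)$ as defined in Section \ref{sec:propertiesoptimization}; since adding finitely many extra real roots preserves all three clauses of Definition \ref{def:propopt} (clause 1 is stable under finite unions, clause 2 is stated as ``$W$ contains\dots'', and for clause 3 one simply keeps the Zariski open set $\NEZOSC$ and restricts to the finer subdivision), it suffices to prove $\PropOpt{W(f^\bfA,\F^\bfA),V^\bfA}$.

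For clause \ref{item:prooptfinite}, finiteness: I would invoke the intermediate results announced in Section \ref{sec:genericnotations} — namely that $\PropReg{f^\bfA,\F^\bfA}$ and $\PropPJSC{f^\bfA,\F^\bfA}$ force $\VVAtot{f^\bfA}{\F^\bfA}\setminus\Crit{f^\bfA,V^\bfA}$ to have dimension at most $1$ and $\VPCtot{f^\bfA}{\F^\bfA}$ to be finite. Then $f^\bfA(\Variety{\listP})$ is finite; $f^\bfA(\Variety{\listS})$ is finite because $\VSP{\F^\bfA}$ is a finite set by its defining property; and $\Roots{\polNP}$ is finite because each factor $\SetOfNonProperness{}(f^\bfA,\listC[i])$ is a nonzero univariate polynomial — this is legitimate precisely because $\dim\Variety{\listC[i]}\le 1$, which is what the dimension bound on $\ZariskiClosure{\VVA{f^\bfA}{\F^\bfA}{i}\setminus\critfVbfA}$ provides, matching the input specification of $\SetOfNonProperness{}$.

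For clause \ref{item:prooptextr}, I would argue that every local extremum of $\fVbfA$ lies in $W$: a local extremum is either a critical or singular point of $f^\bfA$ on $V^\bfA$ (reached local extremum), in which case it lies in $\Crit{f^\bfA,V^\bfA}$, and one must show its value is captured either by $f^\bfA(\VPCtot{f^\bfA}{\F^\bfA})$ or by the sample points; this is where $\PropPISSAC{f^\bfA,\F^\bfA}$ enters (non-isolated critical values are hit by $\VPCtot{f^\bfA}{\F^\bfA}$) together with the fact that isolated values and values on bounded components are caught by $\VSP{\F^\bfA}$ — or it is an asymptotic extremum, hence an asymptotic value, which lands in $\Roots{\polNP}$ via the identification of the asymptotic values with the set of non-properness over $\VVAtot{f^\bfA}{\F^\bfA}$. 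Clause \ref{item:prooptextr} as literally stated, however, refers to local extrema of $\fV$ (the map on $V\cap\R^n$), and these are by definition \emph{reached}, so the asymptotic case is only needed for clause 3; I would make this precise by splitting into reached extrema (handled by $\VPCtot{}$ and $\VSP{}$) and deferring the asymptotic-value bookkeeping. Finally, for clause \ref{item:prooptconnected} (the connectedness/constancy clause), the same open set $\NEZOSC$ supplied by $\PropPJSC{f^\bfA,\F^\bfA}$ works, after shrinking it by finitely many points of $W$; the key input is again that $\VVAtot{f^\bfA}{\F^\bfA}$ ``reads'' all of $f^\bfA(V^\bfA\cap\R^n)$ outside $W$, so that emptiness of a fiber is constant on each open interval of the complement of $W$.

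\textbf{Main obstacle.} The delicate point is clause \ref{item:prooptextr} together with clause \ref{item:prooptconnected}: one must show that the three computable ingredients ($\VSP{\F^\bfA}$, $\VPCtot{f^\bfA}{\F^\bfA}$, and the non-properness set of $\pi_T$ over $\VVAtot{f^\bfA}{\F^\bfA}$) jointly capture \emph{all} boundary behaviour of the real image $f^\bfA(V^\bfA\cap\R^n)$ — both the reached local extrema and the asymptotic values that can produce new connected components of the image. This requires the structural analysis of the modified polar varieties $\VVA{f^\bfA}{\F^\bfA}{i}$ under $\assumptR$ and the generic properties, i.e. precisely the intermediate results deferred to Section \ref{sec:SetContainingcorrect}; the proposition itself is then a bookkeeping assembly of those lemmas against Definition \ref{def:propopt}, as indicated by the remark that ``its proof relies on intermediate results given in Section \ref{sec:SetContainingcorrect}''.
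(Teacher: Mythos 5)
Your approach is essentially the paper's: the actual proof is a three-line assembly citing Proposition~\ref{prop:localextr} (clause~2 of Definition~\ref{def:propopt}), Proposition~\ref{prop:finite} (clause~1), and Proposition~\ref{prop:topo} (clause~3), each stated and proved separately for the set $W^\bfA$ earlier in Section~\ref{sec:SetContainingcorrect}, and your inline sketches of those three arguments track them faithfully, including the case split into isolated / non-isolated-reached / asymptotic values and the use of $\PropPJSC$, $\PropPISSAC$ and $\PropReg$ at the right spots. One point worth fixing: you assert that local extrema of $\fV$ are ``by definition reached,'' but the paper deliberately uses a broader convention (the boundary values $b_i$ of the intervals composing the semi-algebraic image $f^\bfA(V^\bfA\cap\R^n)$, whether or not the endpoint is included), and this broader convention is load-bearing --- Proposition~\ref{prop:topo}'s proof derives a contradiction precisely because a non-reached ``local infimum'' $b$ interior to $]a_i,a_{i+1}[$ must, by Proposition~\ref{prop:localextr}, already lie in $W^\bfA$. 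If you instead take the narrower definition and defer asymptotic values to clause~3, you must still prove that all such boundary values land in $\Roots{\polNP}$ (that is the content of case~3 of Proposition~\ref{prop:localextr}), so the deferral is cosmetic; but stating it as ``by definition reached'' risks dropping exactly the argument that closes clause~3. Your observation that $\Roots{\polNP}$ may strictly contain the non-properness set, and that $\PropOpt{\cdot,V^\bfA}$ is monotone under enlarging $W$ by finitely many points, is a small but genuine tightening that the paper glosses over.
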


Given $\bfA\in\GLnQ$, let $W^\bfA$ be the set of values
\[W^\bfA\!=\! f^\bfA\left(\VSP{\F^\bfA}\right)\cup
f^\bfA\left(\VPCtot{f^\bfA}{\F^\bfA}\right)\cup\NP{\Proj{T}}{\ZariskiClosure{\VVAtot{f^\bfA}{\F^\bfA}\setminus\crit{f^\bfA,V^\bfA}}}\subset\C\]
with $\VSP{\F^\bfA}=V\Par{\listS}$ and
$\VPCtot{f^\bfA}{\F^\bfA}=V\Par{\listP}$ and
$\Roots{\polNP}=\NP{\Proj{T}}{\ZariskiClosure{\VVAtot{f^\bfA}{\F^\bfA}\setminus\crit{f^\bfA,V^\bfA}}}$.

We start by establishing that $\PropOpt{W^\bfA, V^\bfA}$ holds: we
prove below that $W^\bfA$ contains all the local extrema (Proposition
\ref{prop:localextr}), that it is finite (Proposition
\ref{prop:finite}) and the last assertion in Definition
\ref{def:propopt} (Proposition \ref{prop:topo}). Finally, we conclude
with the proof of Proposition \ref{prop:SetContainingcorrect} page
\pageref{proof:SetContainingcorrect}.

Since $V^\bfA$ is an algebraic variety, the image
$f^\bfA\Par{V^\bfA\cap\R^n}$ is a semi-algebraic subset of
$\R$. Hence, it is a finite union of real disjoint intervals. They are
either of the form $\left[b_i,b_{i+1}\right]$,
$\left[b_i,b_{i+1}\right[$, $\left]b_i,b_{i+1}\right]$ or
$\left\{b_i\right\}$, for some $b_0\in\R\cup\left\{-\infty\right\}$,
$b_1,\dots, b_r\in\R$ and $b_{r+1}\in \R\cup\{+\infty\}$. Then the
local extrema of $\fVbfA$ are the $b_i$'s. If $b_i$ is an endpoint
included in the interval, then it is reached, meaning that it is
either a minimum or a maximum. If the interval is a single point then
$b_i$ is isolated in $f^\bfA\Par{V^\bfA\cap\R^n}$. Else, it is not
isolated. If $b_i$ is an endpoint that is not included in the
interval, then $b_i\not\in f^\bfA\Par{V^\bfA\cap\R^n}$ is an extremum
that is not reached.  Remark that our goal is to find $b_0$, that is
equal to $f^\star$.

\begin{proposition}\label{prop:localextr}
  Let $\NEZOS\subset\GLnC$ be the Zariski open set defined in Theorem
  \ref{thm:generic}. For all $\bfA\in\GLnQ\cap\NEZOS$, and any local
  extremum $\locextr\in\R$ of $\fVbfA$, the following holds.
  \begin{enumerate}
  \item If $\locextr$ is a value that is isolated in
    $f^\bfA\Par{V^\bfA\cap\R^n}$ then $\locextr\in
    f^\bfA\left(\VSP{\F^\bfA}\right)$;
  \item if $\locextr$ is a value that is not isolated in
    $f^\bfA\Par{V^\bfA\cap\R^n}$ such that there exists $x_\locextr\in
    V^\bfA\cap\R^n$ with $f^\bfA\Par{x_\locextr} = \locextr$ then
    $\locextr\in f^\bfA\left(\VPCtot{f^\bfA}{\F^\bfA}\right)$;
  \item if $\locextr\not\in f^\bfA\Par{V^\bfA\cap\R^n}$ then
    $\locextr\in\NP{\Proj{T}}{\ZariskiClosure{\VVAtot{f^\bfA}{\F^\bfA}\setminus\crit{f^\bfA,V^\bfA}}}$.
  \end{enumerate}
  As a consequence, every local extremum of
  $\fVbfA$ is contained in $W^\bfA$.
\end{proposition}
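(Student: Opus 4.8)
The plan is to verify the three items one at a time; they are pairwise incompatible (a value isolated in $f^\bfA\Par{V^\bfA\cap\R^n}$ is in particular attained) and, together with the alternative ``attained / not attained'', they account for every local extremum $\locextr$ of $\fVbfA$. Since $W^\bfA$ is, by construction, the union of $f^\bfA\left(\VSP{\F^\bfA}\right)$, $f^\bfA\left(\VPCtot{f^\bfA}{\F^\bfA}\right)$ and $\NP{\Proj{T}}{\ZariskiClosure{\VVAtot{f^\bfA}{\F^\bfA}\setminus\critfVbfA}}$, the closing statement will follow: by the description of $f^\bfA\Par{V^\bfA\cap\R^n}$ recalled just before the proposition, every local extremum is an endpoint $b_i$ of one of the intervals making up $f^\bfA\Par{V^\bfA\cap\R^n}$, hence falls under item~(1) if that interval is $\left\{b_i\right\}$, under item~(2) if $b_i$ is attained but not isolated, and under item~(3) otherwise, and in each case $b_i\in W^\bfA$.

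Item~(1) is immediate: if $\locextr$ is isolated, it is attained, say at $x$, and the connected component $C$ of $V^\bfA\cap\R^n$ through $x$ has $f^\bfA\Par{C}$ a subinterval of $f^\bfA\Par{V^\bfA\cap\R^n}$ containing $\locextr$, hence equal to $\left\{\locextr\right\}$; as $\VSP{\F^\bfA}$ meets $C$, we get $\locextr\in f^\bfA\left(\VSP{\F^\bfA}\right)$ (no genericity of $\bfA$ is used). For item~(2), if $\locextr$ is attained at $x_\locextr$ and not isolated, I would first check $x_\locextr\in\critfVbfA$: either $x_\locextr\in\Sing{V^\bfA}$, which lies in $\critfVbfA$ under $\assumptR$, or $V^\bfA$ is smooth at $x_\locextr$ and, being defined over $\R$, is cut out near $x_\locextr$ by $n-d$ members of $\F^\bfA$ with independent real gradients there, so $V^\bfA\cap\R^n$ is near $x_\locextr$ a smooth real $d$-manifold and the Lagrange condition at the extremum puts $\nabla f^\bfA\Par{x_\locextr}$ in the row space of $\jacc{\F^\bfA}\Par{x_\locextr}$, i.e. all $(n-d+1)$-minors of $\jacc{\left[f^\bfA,\F^\bfA\right]}$ vanish there. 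Thus $\locextr$ is a critical value of $\fVbfA$, not isolated in $f^\bfA\Par{V^\bfA\cap\R^n}$, and since $\bfA\in\GLnQ\cap\NEZOS$ Property $\PropPISSAC{f^\bfA,\F^\bfA}$ holds (Theorem~\ref{thm:generic}) and yields $x\in\VPCtot{f^\bfA}{\F^\bfA}$ with $f^\bfA\Par{x}=\locextr$.

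Item~(3) is the heart of the argument. Suppose $\locextr\notin f^\bfA\Par{V^\bfA\cap\R^n}$; up to a symmetric argument there is $\delta>0$ with $\left]\locextr,\locextr+\delta\right[\subset f^\bfA\Par{V^\bfA\cap\R^n}$ and $\locextr\notin f^\bfA\Par{V^\bfA\cap\R^n}$. A connected component $Y_0$ of $\left\{x\in V^\bfA\cap\R^n:\locextr<f^\bfA\Par{x}<\locextr+\delta\right\}$ satisfies $\inf_{Y_0}f^\bfA=\locextr$, and $Y_0$ must be unbounded, for otherwise $f^\bfA$ would attain $\locextr$ on the compact set $\overline{Y_0}\subset V^\bfA\cap\R^n$. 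Compactifying $\R^n$ through $x\mapsto x/\Par{1+\lVert x\rVert}$ and invoking the semi-algebraic curve selection lemma at the point of the sphere at infinity that realizes the value $\locextr$, I obtain a semi-algebraic arc $\tau\mapsto x(\tau)\in V^\bfA\cap\R^n$ on $\left]0,1\right[$ with $\lVert x(\tau)\rVert\to\infty$ and $f^\bfA\Par{x(\tau)}\to\locextr$ as $\tau\to 0$, i.e. $\locextr$ is an asymptotic value of $\fVbfA$. It then remains to prove that any asymptotic value of $\fVbfA$ lies in $\NP{\Proj{T}}{Z}$, where $Z:=\ZariskiClosure{\VVAtot{f^\bfA}{\F^\bfA}\setminus\critfVbfA}$; for this I would argue that (i) by $\PropReg{f^\bfA,\F^\bfA}$ the fibres $V^\bfA\cap\Variety{f^\bfA-t}$ are smooth of dimension $d-1$ for all but finitely many $t\in\left]\locextr,\locextr+\delta\right[$, and the arc above forces, along $t\downarrow\locextr$ through such values, these fibres to contain real points of arbitrarily large norm; (ii) the modified polar varieties $\VVA{f^\bfA}{\F^\bfA}{i}$, defined incrementally in $i$, meet the unbounded part of these fibres in a way made proper by $\PropPJSC{f^\bfA,\F^\bfA}$ --- the fibered analogue of the classical fact that a well-positioned polar variety meets every connected component of a smooth real algebraic set --- so, combined with the (forthcoming) fact that $\VVAtot{f^\bfA}{\F^\bfA}\setminus\critfVbfA$ has dimension at most~$1$, the curve $Z$ must carry a branch $\tau\mapsto z(\tau)$ with $\lVert z(\tau)\rVert\to\infty$ and $f^\bfA\Par{z(\tau)}\to\locextr$; and (iii) such a branch shows that $\Proj{T}^{-1}\Par{\mathcal{O}}\cap\Variety{f^\bfA-T}\cap\Par{Z\times\C}$ is unbounded for every closed neighbourhood $\mathcal{O}$ of $\locextr$, whence $\locextr\in\NP{\Proj{T}}{Z}$.

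I expect step~(ii) to be the main obstacle: proving that the behaviour at infinity of the level sets of $f^\bfA$ on $V^\bfA\cap\R^n$ is faithfully recorded by the incrementally defined $\VVA{f^\bfA}{\F^\bfA}{i}$, which I would establish by induction on $i$, using the properness property $\PropPJSC{f^\bfA,\F^\bfA}$ and the smoothness granted by $\PropReg{f^\bfA,\F^\bfA}$ at each stage. Items (1), (2) and steps (i) and (iii) of item (3) are routine by comparison, and once the three items are established the concluding assertion follows as explained in the first paragraph.
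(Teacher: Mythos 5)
Your overall decomposition into three cases is exactly the paper's, and items (1) and (2) are essentially identical to the published proof; in (2) you usefully spell out the Lagrange argument that justifies $\locextr$ being a critical value before invoking $\PropPISSAC{f^\bfA,\F^\bfA}$, a step the paper leaves implicit. The one genuine gap is your step~(ii) of item~(3). What you need there is not properness of some incidence between polar varieties and level sets, but an emptiness criterion on real fibres: for all $t$ in a dense Zariski open set $\NEZOSC^\bfA\subset\C$, one has $V^\bfA\cap\Variety{f^\bfA-t}\cap\R^n=\emptyset$ if and only if $\VVAtot{f^\bfA}{\F^\bfA}\cap\Variety{f^\bfA-t}\cap\R^n=\emptyset$. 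The paper offloads this precisely to \cite[Proposition 1.3]{GrGuSaZh}, granted $\PropReg{f^\bfA,\F^\bfA}$ and $\PropPJSC{f^\bfA,\F^\bfA}$, and your gloss (``meet the unbounded part of these fibres in a way made proper by $\PropPJSC$'') does not recover this statement; proving it by induction on $i$ as you propose is genuinely the hard content of \cite{GrGuSaZh}, not a routine lemma.

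Two further remarks. First, once you accept the fibre-emptiness criterion, your arc/curve-selection compactification is an unnecessary detour: the paper simply extracts an unbounded sequence $\suitex{k}\in V^\bfA\cap\R^n$ with $f^\bfA(\suitex{k})\to\locextr$, transfers it into $\VVAtot{f^\bfA}{\F^\bfA}\cap\R^n$ fibre by fibre, and uses Sard's theorem (finiteness of $f^\bfA\left(\crit{f^\bfA,V^\bfA}\right)$) to arrange that the transferred points avoid $\crit{f^\bfA,V^\bfA}$, which is exactly what you need to land in $\ZariskiClosure{\VVAtot{f^\bfA}{\F^\bfA}\setminus\crit{f^\bfA,V^\bfA}}$. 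You omit this Sard step, and it is not optional. Second, your appeal to the ``forthcoming'' fact that $\VVAtot{f^\bfA}{\F^\bfA}\setminus\crit{f^\bfA,V^\bfA}$ is at most a curve is not needed for this proposition and would introduce a spurious forward dependency on Proposition~\ref{prop:finite}; an unbounded sequence in the set with $f^\bfA$ converging to $\locextr$ already gives non-properness of $\Proj{T}$ regardless of the set's dimension.
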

\begin{proof}
  Let $\locextr\in\R$ be a local extremum.
  \paragraph{Case 1}
  Since $\locextr$ is isolated, there exists $x_\locextr\in
  V^\bfA\cap\R^n$ such that
  $f^\bfA\left(x_\locextr\right)=\locextr$. Let $C^\bfA$ be the
  connected component of $V^\bfA\cap\R^n$ containing $x_\locextr$. We
  prove that $f^\bfA$ is constant on $C^\bfA$. Let $x'\in C^\bfA$ and
  assume that $f^\bfA\left(x'\right)\neq \locextr$. Since $\locextr$
  is isolated, there exists a neighborhood $\mathcal{B}$ of $\locextr$
  such that $f^\bfA\left(C^\bfA\right)$ is the union of
  $\left\{\locextr\right\}$ and some set $S$ that contains
  $f^\bfA\left(x'\right)$ but  does not meet $\mathcal{B}$. In
  particular, $f^\bfA\left(C^\bfA\right)$ is not connected. This is a
  contradiction since $f^\bfA$ is continuous and $C^\bfA$ connected.

  The set $\VSP{\F^\bfA}$ is a set containing at least one point in
  each connected component of $V^\bfA\cap\R^n$. In particular it
  contains a point $y$ in the connected component $C^\bfA$ of
  $x_\locextr$. Since the restriction of $f^\bfA$ to $C^\bfA$ is
  constant, $f^\bfA\left(y\right)=\locextr$, so that $\locextr\in
  f^\bfA\left(\VSP{\F^\bfA}\right)$.

  \paragraph{Case 2}
  Since $\bfA\in\GLnQ\cap\NEZOS$, property
  $\PropPISSAC{f^\bfA,\F^\bfA}$ holds (Theorem \ref{thm:generic}).
  This means that there exists $x_\locextr\in\VPCtot{f^\bfA}{\F^\bfA}$
  such that $f^\bfA\Par{x_\locextr}=\locextr$, that is $\locextr\in
  f^\bfA\left(\VPCtot{f^\bfA}{\F^\bfA}\right)$.

  \paragraph{Case 3}
  If $\locextr\not\in f^\bfA\Par{V^\bfA\cap\R^n}$, by definition, as a
  local extremum, there exists a closed neighborhood $\ClosedN$ of
  $\locextr$ such that we can construct a sequence
  $(\suitex{k})_{k\in\N}\subset
  \left(f^\bfA\right)^{-1}\left(\ClosedN\right)\cap V^\bfA\cap\R^n$
  such that $f^\bfA\left(\suitex{k}\right)\rightarrow \locextr$. We
  first prove that we can not extract a converging subsequence from
  $(\suitex{k})$. Indeed, assume that there exists a converging
  subsequence $(x'^{(k)})$ and denote by $x$ its limit. Since
  $V^\bfA\cap\R^n$ and
  $\left(f^\bfA\right)^{-1}\left(\ClosedN\right)\cap\R^n$ are closed
  sets for the euclidean topology, $x$ lies in
  $\left(f^\bfA\right)^{-1}\left(\ClosedN\right)\cap V^\bfA\cap\R^n$.

  As a subsequence of $f^\bfA\left(\suitex{k}\right)$, the sequence
  $f^\bfA\left(x'^{(k)}\right)$ tends to $\locextr$. Moreover, by
  continuity of $f^\bfA$, $f^\bfA\left(x'^{(k)}\right)$ tends to
  $f^\bfA\left(x\right)$. This implies that $f^\bfA(x)=\locextr$, that
  is $\locextr$ is attained, which is a contradiction. Since this is
  true for any converging subsequence $(x'^{(k)})$ of $(\suitex{k})$,
  this implies that $(\suitex{k})$ can not be bounded. Finally, this
  proves that $\|(\suitex{k})\|$ tends to $\infty$.

  Let $\eps>0$. There exists $k_0\in\N$ such that for all $k\geq k_0$,
  $f^\bfA\left(\suitex{k}\right)\in\left[\locextr-\eps,\locextr+\eps\right]$. By
  construction of $\suitex{k}$,
  $\left(f^\bfA\right)^{-1}\left(f^\bfA\left(\suitex{k}\right)\right)\cap
  V^\bfA\cap\R^n\neq\emptyset$.

  By assumption, we have $\bfA\in\NEZOS$; then Theorem
  \ref{thm:generic} implies that $\PropReg{\F^\bfA}$ and
  $\PropPJSC{f^\bfA,\F^\bfA}$ hold. Thus \cite[Proposition
  1.3]{GrGuSaZh} ensures that for all $t\in \R\cap \NEZOSC^\bfA$,
  $V^\bfA\cap \Variety{f^\bfA-t}\cap\R^n$ is empty if and only if
  $\VVAtot{f^\bfA}{\F^\bfA}\cap\Variety{f^\bfA-t}\cap\R^n$ is empty.
  Since $\R\setminus\NEZOSC^\bfA$ is finite, one can assume without
  loss of generality that for all $k$,
  $f^\bfA\left(x^{(k)}\right)\in\NEZOSC^\bfA$.
  
  Then
  $\Par{f^\bfA}^{-1}\Par{f^\bfA\Par{\suitex{k}}}\cap\VVAtot{f^\bfA}{\F^\bfA}\cap\R^n\neq\emptyset$. Picking
  up a point $\widetilde{x}^{(k)}$ in this last set, for each $k\geq
  k_0$, leads to the construction of a sequence of points
  $\left(\widetilde{x}^{(k)}\right)$ in
  $\VVAtot{f^\bfA}{\F^\bfA}\cap\R^n$ such that
  $f\left(\widetilde{x}^{(k)}\right)$ tends to $\locextr$. Since
  $\VVAtot{f^\bfA}{\F^\bfA}\subset V^\bfA$ and $\locextr$ is not
  reached, this sequence is unbounded. Since
  $f^\bfA\left(\crit{f^\bfA,V^\bfA}\right)$ is finite by Sard's
  theorem, one can assume without loss of generality that for all $k$,
  $\widetilde{x}^{(k)}\not\in \crit{f^\bfA,V^\bfA}$.

  Then considering the sequence
  $\Par{\widetilde{x}^{(k)},t=f^\bfA\Par{\widetilde{x}^{(k)}}}$ proves
  that $\Proj{T}$ restricted to
  $\Variety{f^\bfA-T}\cap\left(\ZariskiClosure{\VVAtot{f^\bfA}{\F^\bfA}\setminus\crit{f^\bfA,\F^\bfA}}\times\C\right)$
  is not proper at $\locextr$; in other words
  $\locextr\in\NP{\Proj{T}}{\ZariskiClosure{\VVAtot{f^\bfA}{\F^\bfA}\setminus\crit{f^\bfA,\F^\bfA}}}$.
\end{proof}

\begin{proposition}\label{prop:finite}
  For all $\bfA\in\GLnQ\cap\NEZOS$, the set $W^\bfA$ is finite.
\end{proposition}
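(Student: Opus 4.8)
The plan is to show that each of the three sets whose union defines $W^\bfA$ is finite, and then conclude by noting that a finite union of finite sets is finite. The first set, $f^\bfA\Par{\VSP{\F^\bfA}}$, is the image under a polynomial map of the finite set $\VSP{\F^\bfA}$ (which is finite by construction, as a set of sample points computed by \SamplePoints{}), hence finite. The remaining two require controlling the dimension of the geometric objects $\VPCtot{f^\bfA}{\F^\bfA}$ and $\VVAtot{f^\bfA}{\F^\bfA}\setminus\crit{f^\bfA,V^\bfA}$, and this is where the genericity properties from Theorem \ref{thm:generic} enter.

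First I would treat $f^\bfA\Par{\VPCtot{f^\bfA}{\F^\bfA}}$. By Definition \ref{def:geomobjects}, $\VPCtot{f^\bfA}{\F^\bfA}$ is the union over $1\leq i\leq d$ of the sets $\VPC{f^\bfA}{\F^\bfA}{i}$. For $i=d$ this is $\VVA{f^\bfA}{\F^\bfA}{d}=V^\bfA\cap\Variety{X_1,\dots,X_{d-1}}$, which, up to a generic change of coordinates, is a finite set (intersecting an equidimensional variety of dimension $d$ with $d-1$ generic hyperplanes through the origin leaves a set of dimension $0$; this is a standard consequence of the genericity assumptions and is the kind of fact covered by the properties established in Theorem \ref{thm:generic}). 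For $1\leq i\leq d-1$, $\VPC{f^\bfA}{\F^\bfA}{i}=\ZariskiClosure{\VVA{f^\bfA}{\F^\bfA}{i}\setminus\critfVbfA}\cap\critfVbfA$ is contained in $\critfVbfA$, which under assumptions $\assumptR$ together with the genericity properties has dimension $0$ — indeed, as noted after the statement of Definition \ref{def:geomobjects}, the paper announces that $\VPC{f}{\F}{i}$ has dimension at most $0$ up to a generic change of coordinates, and $\bfA\in\NEZOS$ guarantees we are in that generic situation. Hence $\VPCtot{f^\bfA}{\F^\bfA}$ is a finite union of finite sets, so it is finite, and so is its image under $f^\bfA$.

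Next I would handle the set of non-properness $\NP{\Proj{T}}{\ZariskiClosure{\VVAtot{f^\bfA}{\F^\bfA}\setminus\crit{f^\bfA,V^\bfA}}}$. The key point is that, up to a generic change of coordinates, $\VVAtot{f^\bfA}{\F^\bfA}\setminus\crit{f^\bfA,V^\bfA}$ has dimension at most $1$ — this is exactly the statement announced in Section \ref{sec:propertiesoptimization} and its proof uses Properties $\PropReg{f^\bfA,\F^\bfA}$ and $\PropPJSC{f^\bfA,\F^\bfA}$, both of which hold because $\bfA\in\GLnQ\cap\NEZOS$ (Theorem \ref{thm:generic}). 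Taking the Zariski closure does not increase the dimension, so $\ZariskiClosure{\VVAtot{f^\bfA}{\F^\bfA}\setminus\crit{f^\bfA,V^\bfA}}\times\C$ has dimension at most $2$, and its intersection with the hypersurface $\Variety{f^\bfA-T}$ has dimension at most $1$. A classical fact (see \cite{LaRo07, SaSc04}, also used to specify the subroutine \SetOfNonProperness{}) is that the set of non-properness of the restriction of a projection $\Proj{T}$ to a variety of dimension at most $1$ is a finite set — it is contained in the set of roots of the univariate polynomial returned by \SetOfNonProperness{}. Hence this third set is finite as well.

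The main obstacle — or rather the one genuinely substantive ingredient — is the dimension bound on $\VVAtot{f^\bfA}{\F^\bfA}\setminus\crit{f^\bfA,V^\bfA}$ (and correspondingly the finiteness of $\VPCtot{f^\bfA}{\F^\bfA}$), since everything else is formal. Strictly speaking these dimension statements are announced but not yet proved at this point in the paper, so in a fully rigorous write-up I would either forward-reference the place where they are established or restate them here as consequences of Properties $\PropReg{}$, $\PropPJSC{}$ and the definition of the modified polar varieties together with assumptions $\assumptR$; for the purposes of proving Proposition \ref{prop:finite} I would invoke these dimension bounds as already available facts under the hypothesis $\bfA\in\GLnQ\cap\NEZOS$, and then the finiteness of $W^\bfA$ follows immediately from the finiteness of its three defining pieces.
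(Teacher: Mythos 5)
Your decomposition of $W^\bfA$ into three pieces and the handling of $f^\bfA\Par{\VSP{\F^\bfA}}$ match the paper, and your framing of the non-properness piece (dimension $\leq 1$ implies finitely many non-proper values, citing the Jelonek-type result) is essentially the route the paper takes. But the core of the proof is missing, and there is a dimension-count error.

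The main gap: you defer the dimension bound $\dim \ZariskiClosure{\VVA{f^\bfA}{\F^\bfA}{i}\setminus\critfVbfA} \leq 1$ to a ``forward reference'' or to ``facts already available''. But nowhere else in the paper is this bound established; the announcement after Definition~\ref{def:geomobjects} and in Section~\ref{sec:propertiesoptimization} is a promise that is discharged precisely in the proof of Proposition~\ref{prop:finite}. Invoking it here as given is circular. The actual proof has genuine content: it invokes \cite[Proposition 1.3]{GrGuSaZh} (legitimate under $\PropReg{f^\bfA,\F^\bfA}$ and $\PropPJSC{f^\bfA,\F^\bfA}$, which Theorem~\ref{thm:generic} gives you since $\bfA\in\NEZOS$) to conclude that the generic fiber $\Variety{f^\bfA-t}\cap\VVA{f^\bfA}{\F^\bfA}{i}$ is zero-dimensional, then applies the theorem on dimension of fibers to each irreducible component $Z^\bfA$ of $\ZariskiClosure{\VVA{f^\bfA}{\F^\bfA}{i}\setminus\critfVbfA}$. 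To make that argument work one must first rule out the possibility that $f^\bfA(Z^\bfA)$ is zero-dimensional; the paper does so by observing that a locally constant $\fVCbfA$ on $Z^\bfA$ would force $Z^\bfA\subset\critfVbfA$, a contradiction. Only after this does the bound $\dim Z^\bfA\leq 1$ follow, and then the intersection with $\critfVbfA$ (which contains none of these components) loses at least one dimension by Krull's principal ideal theorem, giving $\dim \VPC{f^\bfA}{\F^\bfA}{i}\leq 0$. None of these steps is a formality.

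Separately, your argument for $i=d$ is wrong as stated: intersecting an equidimensional variety of dimension $d$ with the $d-1$ hyperplanes $X_1=\cdots=X_{d-1}=0$ generically yields a set of dimension $d-(d-1)=1$, not $0$. In fact the paper treats all $1\leq i\leq d$ uniformly via the formula $\VPC{f^\bfA}{\F^\bfA}{i}=\ZariskiClosure{\VVA{f^\bfA}{\F^\bfA}{i}\setminus\critfVbfA}\cap\critfVbfA$ inside the proof, so the finiteness for $i=d$ comes from the same intersection-with-critical-locus argument, not from a naive hyperplane count. Finally, the paper's proof of the third assertion also includes a characterization of the non-properness set as the set of asymptotic values (via an unbounded-sequence argument), which is needed downstream in Proposition~\ref{prop:localextr}; your proposal omits this entirely.
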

\begin{proof}
  Recall that, by definition, $W^\bfA = f^\bfA\Par{\VSP{\F^\bfA}}\cup
  f^\bfA\Par{\VPCtot{f^\bfA}{\F^\bfA}}\cup\NP{\Proj{T}}{\VVAtot{f^\bfA}{\F^\bfA}}$. We
  prove below the following assertions.
  \begin{enumerate}
  \item $\VSP{\F^\bfA}$ is finite,
  \item For $1\leq i\leq d$, $\VPC{f^\bfA}{\F^\bfA}{i}$ is finite and
  \item $\NP{\Proj{T}}{\ZariskiClosure{\VVAtot{f^\bfA}{\F^\bfA}\setminus\crit{f^\bfA,\F^\bfA}}}$ is finite.
  \end{enumerate}
  
  \paragraph{Assertion 1}
  The first assertion is true for all $\bfA$, since by assumption,
  $\VSP{\F^\bfA}$ is a finite set.
  
  \paragraph{Assertion 2}
  
  Let $1\leq i\leq d$. Recall that
  \[\VPC{f^\bfA}{\F^\bfA}{i}=\ZariskiClosure{\VVA{f^\bfA}{\F^\bfA}{i}\setminus\critfVbfA}\cap\critfVbfA.\]
  We first prove that
  $\ZariskiClosure{\VVA{f^\bfA}{\F^\bfA}{i}\setminus\critfVbfA}$ has
  dimension $1$. Next, it will be easy to deduce that its intersection
  with $\critfVbfA$ has dimension at most $0$.

  By assumption, we have $\bfA\in\NEZOS$. Thus, Theorem
  \ref{thm:generic} implies that $\PropReg{\F^\bfA}$ and
  $\PropPJSC{f^\bfA,\F^\bfA}$ hold. Thus \cite[Proposition
  1.3]{GrGuSaZh} ensures that for all $t\in\NEZOSC^\bfA$, the
  algebraic set $\Variety{f^\bfA-t}\cap\VVA{f^\bfA}{\F^\bfA}{i}$ has
  dimension at most zero.

  Now, let $Z^\bfA$ be an irreducible component of
  $\ZariskiClosure{\VVA{f^\bfA}{\F^\bfA}{i}\setminus\critfVbfA}$. In
  particular, $Z^\bfA$ is an irreducible component of
  $\VVA{f^\bfA}{\F^\bfA}{i}$ that is not contained in
  $\critfVbfA$. Consider the restriction $\fZbfA\colon Z^\bfA
  \longrightarrow\C$. Its image has a Zariski closure of dimension $0$
  or $1$.

  Assume first that $f^\bfA\left(Z^\bfA\right)$ is
  $0$-dimensional. Then as a continuous function, $\fZbfA$ is locally
  constant. This implies that $Z^\bfA$ is contained in the critical
  locus of $\fVCbfA$. In particular, this means that
  $Z^\bfA\subset\critfVbfA$, which is a contradiction.

  Then all irreducible components $Z^\bfA$ are such that
  $\ZariskiClosure{f^\bfA\Par{Z^\bfA}}$ has dimension $1$.  From the
  Theorem on the dimension of fibers (\cite[Theorem 7, Chapter 1,
  pp. 76]{Shafarevich77}), there exists a Zariski open set
  $U\subset\C$ such that for all $y\in U$, $\dim
  \left(f^\bfA\right)^{-1}=\dim Z^\bfA -1$. In particular if $t$ lies
  in the non-empty Zariski open set $U\cap\NEZOSC^\bfA$, the following
  holds
  \[0\geq \dim \left(f^\bfA\right)^{-1}=\dim Z^\bfA -1.\]

  Then every irreducible component $Z^\bfA$ of
  $\ZariskiClosure{\VVA{f^\bfA}{\F^\bfA}{i}\setminus\critfVbfA}$ has
  dimension $\leq 1$, so that
  $\dim\ZariskiClosure{\VVA{f^\bfA}{\F^\bfA}{i}\setminus\critfVbfA}\leq
  1$.

  Now, let $Z^\bfA_1\cup\dots\cup Z^\bfA_\alpha\cup\dots\cup
  Z^\bfA_\beta$ be the decomposition of $\VVA{f^\bfA}{\F^\bfA}{i}$ as
  a union of irreducible components. Up to reordering, assume that
  \begin{itemize}
  \item for $1\leq i\leq \alpha$, $Z^\bfA_i\not\subset\critfVbfA$,
  \item for $\alpha+1\leq j\leq \beta$, $Z^\bfA_i\subset\critfVbfA$.
  \end{itemize}
  Then the decomposition of
  $\ZariskiClosure{\VVA{f^\bfA}{\F^\bfA}{i}\setminus\critfVbfA}$ as a
  union of irreducible components is $Z^\bfA_1\cup\dots\cup
  Z^\bfA_\alpha$.

  Let $1\leq i\leq \alpha$ and consider $Z^\bfA_i\cap\critfVbfA$. If
  it is non-empty, since $Z^\bfA_i\not\subset\critfVbfA$,
  \cite[Corollary 3.2 p. 131]{kunz} implies that $Z^\bfA_i\cap\critfVbfA$
  has dimension less than or equal to $\dim Z^\bfA_i -1\leq
  1-1=0$. Finally, this proves that
  $\ZariskiClosure{\VVA{f^\bfA}{\F^\bfA}{i}\setminus\critfVbfA}\cap\critfVbfA$
  has dimension $\leq 0$.

  \paragraph{Assertion 3}
  Since $\bfA\in\NEZOS$,  Theorem \ref{thm:generic} implies that 
  $\PropReg{\F^\bfA}$ and $\PropPJSC{f^\bfA,\F^\bfA}$ hold. We have
  proved above, that for $1\leq i\leq d$,
  $\ZariskiClosure{\VVA{f^\bfA}{\F^\bfA}{i}\setminus\critfVbfA}$ has
  dimension at most $1$. 

  Then by \cite[Theorem 3.8]{jelonek}, each set
  $\NP{\Proj{T}}{\ZariskiClosure{\VVA{f^\bfA}{\F^\bfA}{i}\setminus\crit{f^\bfA,\F^\bfA}}}$
  has dimension at most $0$, thus
  $\NP{\Proj{T}}{\ZariskiClosure{\VVAtot{f^\bfA}{\F^\bfA}\setminus\crit{f^\bfA,\F^\bfA}}}$
  is finite.

  To conclude, we prove that the set of values $t\in \C$ such that
  there exists a sequence $(\suitex{k})_{k\in
    \N}\subset\ZariskiClosure{\VVAtot{f^\bfA}{\F^\bfA}\setminus\crit{f^\bfA,\F^\bfA}}$
  satisfying $\displaystyle \lim_{k\rightarrow
    +\infty}||\suitex{k}||=+\infty$ and
  $\displaystyle\lim_{k\rightarrow+\infty} f^\bfA(\suitex{k})=t$ is
  exactly the set 
  $\NP{\Proj{T}}{\ZariskiClosure{\VVAtot{f^\bfA}{\F^\bfA}\setminus\crit{f^\bfA,\F^\bfA}}}$.

  Let $t_0\in\C$ and
  $\Par{\suitex{k}}=\left(x_1^{(k)},\dots,x_n^{(k)}\right)$ be a
  sequence of points in the set
  $\ZariskiClosure{\VVAtot{f^\bfA}{\F^\bfA}\setminus\crit{f^\bfA,\F^\bfA}}$
  such that $\displaystyle \lim_{k\rightarrow
    +\infty}\left\|\suitex{k}\right\|=+\infty$ and
  $\displaystyle\lim_{k\rightarrow+\infty}
  f^\bfA\Par{\suitex{k}}=t_0$.

  Let $\eps>0$. There exists $N\in\N$ such that for all $k\geq N$,
  $\left|f^\bfA\left(\suitex{k}\right)-t_0\right|\leq \eps$. In
  particular, for all $k\geq N$,
  $\left(f^\bfA\right)\left(\suitex{k}\right)$ lies in the closed ball
  $\ClosedBall{t_0,\eps}$. This means that
  $\Proj{T}^{-1}\Par{\ClosedBall{t_0,\eps}}\cap\Variety{f^\bfA-T}\cap
  \left(\ZariskiClosure{\VVAtot{f^\bfA}{\F^\bfA}\setminus\crit{f^\bfA,\F^\bfA}}\times\C\right)$
  contains all the points
  \[\Par{x_1^{(k)},\dots,x_n^{(k)}, t= f^\bfA\Par{\suitex{k}}}\]
  for $k\geq N$. Since $\Par{\suitex{k}}$ is not bounded, we deduce
  that
  \[\Proj{T}^{-1}\Par{\ClosedBall{t_0,\eps}\cap\Variety{f^\bfA-T}\cap
    \left(\ZariskiClosure{\VVAtot{f^\bfA}{\F^\bfA}\setminus\crit{f^\bfA,\F^\bfA}}}\times\C\right)\]
  is not bounded.  This means that $t_0$ is a point where the
  projection $\Proj{T}$ restricted to $\Variety{f^\bfA-T}\cap
  \left(\ZariskiClosure{\VVAtot{f^\bfA}{\F^\bfA}\setminus\crit{f^\bfA,\F^\bfA}}\times\C\right)$
  is not proper.

  Conversely, if $t_0\in\C$ is such that for all $\eps>0$,
  \[\Proj{T}^{-1}\Par{\ClosedBall{t_0,\eps}\cap\Variety{f^\bfA-T}\cap
    \left(\ZariskiClosure{\VVAtot{f^\bfA}{\F^\bfA}\setminus\crit{f^\bfA,\F^\bfA}}}\times\C\right)\]
    is not bounded, we can construct by induction a sequence
    $\Par{\Par{\suitex{k},f^\bfA\Par{\suitex{k}}}}_{k\in\N}$ such
    that:
  \begin{itemize}
  \item for all $k\in\N$, the point
    $\Par{\suitex{k},f^\bfA\Par{\suitex{k}}}$ lies in
    \[\Proj{T}^{-1}\!\!\Par{\ClosedBall{t_0,\frac1k}\!\cap\!\Variety{f^\bfA-T}\cap
      \left(\ZariskiClosure{\VVAtot{f^\bfA}{\F^\bfA}\setminus\crit{f^\bfA,\F^\bfA}}}\times\C\right);\]
  \item for all $k\in\N$, $\left\|x_{k+1}\right\|>2\left\|\suitex{k}\right\|$.
  \end{itemize}
  In particular, $\Par{\suitex{k}}_{k\in
    \N}\subset\ZariskiClosure{\VVAtot{f^\bfA}{\F^\bfA}\setminus\crit{f^\bfA,\F^\bfA}}$,
  $\displaystyle \lim_{k\rightarrow
    +\infty}\left\|\suitex{k}\right\|=+\infty$ and
  $\displaystyle\lim_{k\rightarrow+\infty} f^\bfA(\suitex{k})=t_0$.
\end{proof}

\begin{proposition}\label{prop:topo}
  For all $\bfA\in\GLnQ\cap\NEZOS$, let
  $W^\bfA=\left\{a_1,\dots,a_k\right\}$, $a_0=-\infty$ and
  $a_{k+1}=+\infty$ with $a_i< a_{i+1}$ for $0\leq i \leq k$. There
  exists a non-empty Zariski open set $\NEZOSC^\bfA\subset\C$ such
  that for all $0\leq i \leq k$ and all couples $(t, t')$ in
  $\left]a_i,a_{i+1}\right[$
\[
f^{-1}\left( t \right)\cap V\cap\R^n=\emptyset\Longleftrightarrow f^{-1}\left( t' \right)\cap V\cap\R^n=\emptyset.
\]
\end{proposition}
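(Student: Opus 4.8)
The plan is to derive the statement from Proposition~\ref{prop:localextr} by an elementary connectedness argument on the line, using no further genericity input. First note that, since $\bfA$ induces a bijection of $\R^n$, one has $\Par{f^\bfA}^{-1}(t)\cap V^\bfA\cap\R^n=\bfA^{-1}\Par{f^{-1}(t)\cap V\cap\R^n}$, so the equivalence to be proved is unchanged if $f,V$ are replaced by $f^\bfA,V^\bfA$; I would work with the latter. Set $S=f^\bfA\Par{V^\bfA\cap\R^n}$. As $V^\bfA$ is an algebraic set, $S$ is a semi-algebraic subset of $\R$, hence a finite union of points and intervals, so its topological boundary $\partial S$ in $\R$ is finite; moreover, as recalled in the discussion preceding Proposition~\ref{prop:localextr}, $\partial S$ is exactly the set of local extrema of $\fVbfA$ --- namely the finite endpoints of the constituent intervals of $S$ together with its isolated points. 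Hence Proposition~\ref{prop:localextr} gives $\partial S\subseteq W^\bfA=\{a_1,\dots,a_k\}$. For the Zariski open set demanded by the statement I would take $\NEZOSC^\bfA=\C$ (or, if one prefers to carry a single such set through Section~\ref{sec:SetContainingcorrect}, the one furnished by $\PropPJSC{f^\bfA,\F^\bfA}$); since the equivalence is asserted for \emph{all} couples $(t,t')$ in the interval, this set plays no real role here.

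Next I would fix $0\le i\le k$ and put $I=\left]a_i,a_{i+1}\right[$. By construction of the $a_j$ we have $I\cap W^\bfA=\emptyset$, hence $I\cap\partial S=\emptyset$. Consequently $S\cap I$ is both open and closed in $I$: if $t\in S\cap I$, then $t\in S$ and $t\notin\partial S$, so $t$ is an interior point of $S$ and some neighbourhood of $t$ lies in $S$; if $t\in I\setminus S$, then $t\notin S$ and $t\notin\partial S$, so $t$ is an exterior point of $S$ and some neighbourhood of $t$ is disjoint from $S$. Since $I$ is connected, either $S\cap I=\emptyset$ or $S\cap I=I$. Therefore, for all $t,t'\in I$, the fibre $\Par{f^\bfA}^{-1}(t)\cap V^\bfA\cap\R^n$ is empty exactly when $t\notin S$, which by this dichotomy happens exactly when $S\cap I=\emptyset$, and the same holds for $t'$; this is the claimed equivalence.

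The only step carrying real content is the inclusion $\partial S\subseteq W^\bfA$, and that is precisely what the three cases of Proposition~\ref{prop:localextr} deliver: isolated values of $S$ fall into $f^\bfA\Par{\VSP{\F^\bfA}}$, attained non-isolated boundary values into $f^\bfA\Par{\VPCtot{f^\bfA}{\F^\bfA}}$, and the subtle case of boundary values that are \emph{not} attained is captured by $\NP{\Proj{T}}{\ZariskiClosure{\VVAtot{f^\bfA}{\F^\bfA}\setminus\crit{f^\bfA,V^\bfA}}}$, through $\PropReg{f^\bfA,\F^\bfA}$, $\PropPJSC{f^\bfA,\F^\bfA}$ and \cite[Proposition~1.3]{GrGuSaZh}. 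Granting that proposition together with Proposition~\ref{prop:finite} (which ensures $W^\bfA$, hence $\partial S$, is finite), I do not anticipate any genuine obstacle for the present statement: what remains is point-set topology on $\R$, and the real difficulty has been front-loaded into Proposition~\ref{prop:localextr}.
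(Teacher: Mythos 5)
Your proof is correct and, at bottom, is the same argument the paper makes: both reduce the statement to the observation that every boundary point of $S=f^\bfA\Par{V^\bfA\cap\R^n}$ is a local extremum caught by Proposition~\ref{prop:localextr}, and then finish with point-set topology on $\R$. Where you differ is packaging. The paper argues by contradiction, taking $b$ to be the infimum of the values in $\left]a_i,a_{i+1}\right[\cap\NEZOSC^\bfA$ with nonempty real fibre and observing that this $b$ would itself be a local extremum not in $W^\bfA$; you instead observe that $I\cap\partial S=\emptyset$ makes $S\cap I$ clopen in the connected set $I$, so it is $\emptyset$ or all of $I$. The connectedness version is tidier, and it has one concrete advantage worth noting: it delivers the equivalence for \emph{all} $t,t'\in I$, so the Zariski open set $\NEZOSC^\bfA$ really is vacuous in this proposition (you may take $\NEZOSC^\bfA=\C$), whereas the paper's proof restricts $a,b$ to $\NEZOSC^\bfA$ and thus, read strictly, only establishes the equivalence on $I\cap\NEZOSC^\bfA$ even though the statement quantifies over all of $I$. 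Your route closes that small mismatch for free.
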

\begin{proof}
  Our proof is by contradiction. Assume that there exists $i$ such
  that there exists $a\in\left]a_i,a_{i+1}\right[\cap\NEZOSC^\bfA$
  such that ${f^\bfA}^{-1}\Par{a}\cap V^\bfA\cap\R^n=\emptyset$ and
  $b\in\left]a_i,a_{i+1}\right[\cap\NEZOSC^\bfA$ such that
  ${f^\bfA}^{-1}\Par{b}\cap V^\bfA\cap\R^n\neq\emptyset$. Without
  loss of generality, we can assume that $a<b$ and
  \[b=\inf\left\{t\in\left]a_i,a_{i+1}\right[\cap\NEZOSC^\bfA \text{
      s.t.} {f^\bfA}^{-1}\Par{t}\cap
    V^\bfA\cap\R^n\neq\emptyset\right\}.\] Then $b$ is a local infimum
  of $\fVbfA$ that is neither $a_i$ nor $a_{i+1}$. However, according
  to Proposition \ref{prop:localextr}, $b$ lies in $W^\bfA$. Hence
  there exists $i$ such that $b=a_i$, which is a contradiction.
\end{proof}

We can now give the proof of Proposition \ref{prop:SetContainingcorrect} using the above propositions. 

\begin{proof}
  Let $\listS\subset\Q\left[\bfX\right]$,
  $\listP\subset\Q\left[\bfX\right]$ and $\polNP\in\Q\left[T\right]$
  be the output of $\SetContaining{\Par{f,\F}}$. Denote by $W$ the set
  \[f\Par{\Variety{\listS}}\cup
  f\Par{\Variety{\listP}}\cup\Roots{\polNP}.\] The routine
  \SetContaining{} is correct if property $\PropOpt{W, V}$ holds (see
  Definition \ref{def:propopt}). Then we check that
  \begin{enumerate}
  \item every local extremum of $\fV$ is contained in $W$,
  \item $W$ is finite,\label{proof:SetContainingcorrect}
  \item let $W=\left\{a_1,\dots,a_k\right\}$, $a_0=-\infty$ and
    $a_{k+1}=+\infty$ with $a_i < a_{i+1}$ for $0\leq i \leq k$. There
    exists a non-empty Zariski open set $\NEZOSC\subset\C$ such that
    for all $0\leq i \leq k$ and all couples $(t, t')$ in
    $\left]a_i,a_{i+1}\right[$
\[
f^{-1}\left( t \right)\cap V\cap\R^n=\emptyset\Longleftrightarrow f^{-1}\left( t' \right)\cap V\cap\R^n=\emptyset. 
\]
  \end{enumerate}
  Proposition \ref{prop:localextr} establishes the assertion
  $1$. Assertion $2$ is a restatement of Proposition
  \ref{prop:finite}. Assertion $3$ is established by Proposition
  \ref{prop:topo}.
\end{proof}

\subsection{Correctness of \FindInfimum{}}\label{sec:FindInfimumcorrect}
Finally, we prove the correctness of the routine $\FindInfimum$.
\begin{proposition}\label{prop:FindInfimumcorrect}
  Let $\bfA\in\GLnQ\cap\NEZOS$, $f\in\PolRing$, $\F \subset\PolRing$
  satisfying assumptions $\assumptR$, $\listS^\bfA\subset\PolRing$,
  $\listP^\bfA\subset\PolRing$ and $\polNP^\bfA\in\Q\left[T\right]$.
  Let $W^\bfA=\left\{a_1,\dots,a_k\right\}$, be the finite algebraic
  set \[f^\bfA\Par{\Variety{\listS^\bfA}}\cup
  f\Par{\Variety{\listP^\bfA}}\cup\Roots{\polNP^\bfA},\] and
  assume that $\PropOpt{W^\bfA, V^\bfA}$ is satisfied. Then let $a_0=-\infty$,
  $a_{k+1}=+\infty$ and let $\NEZOSC^\bfA\subset\C$ be the
  Zariski open set such that, for all $0\leq i \leq k$ all
    couples $(t, t')$ in $\left]a_i,a_{i+1}\right[$
\[
f^{-1}\left( t \right)\cap V\cap\R^n=\emptyset\Longleftrightarrow f^{-1}\left( t' \right)\cap V\cap\R^n=\emptyset. 
\]

  If the random rational numbers computed in \FindInfimum{} lie in
  $\NEZOSC^\bfA$ then \FindInfimum{} is correct.
\end{proposition}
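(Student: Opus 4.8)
The plan is to translate the \texttt{while} loop of \FindInfimum{} into a statement about the semialgebraic set $E:=f^\bfA(V^\bfA\cap\R^n)\subseteq\R$. Since $\bfA$ is a real‑linear isomorphism of $\R^n$ carrying $V^\bfA$ onto $V$, one has $\inf E=f^\star$ and, for every $t$, $f^{-1}(t)\cap V\cap\R^n=\emptyset\iff (f^\bfA)^{-1}(t)\cap V^\bfA\cap\R^n=\emptyset$. Two structural facts are used. \emph{(F1)} By the third condition of $\PropOpt{W^\bfA,V^\bfA}$, for each $0\le i\le k$ the emptiness of $(f^\bfA)^{-1}(t)\cap V^\bfA\cap\R^n$ is constant as $t$ ranges over ${]a_i,a_{i+1}[}$; so this interval is either contained in $E$ or disjoint from $E$, and $\IsEmpty(\{f^\bfA-q,\F^\bfA\})$ returns \textsf{false} for one rational $q$ in it iff ${]a_i,a_{i+1}[}\subseteq E$. \emph{(F2)} By Proposition~\ref{prop:localextr} (items 1 and 2, valid since $\bfA\in\NEZOS$) together with the construction of $\listS^\bfA,\listP^\bfA$ in \SetContaining{}, which identifies $\Variety{\listS^\bfA}$ with $\VSP{\F^\bfA}$ and $\Variety{\listP^\bfA}$ with $\VPCtot{f^\bfA}{\F^\bfA}$, every local extremum of $\fVbfA$ that is \emph{attained} lies in $f^\bfA(\Variety{\listS^\bfA})\cup f^\bfA(\Variety{\listP^\bfA})$ (read, as in the implementation remark following the pseudocode, as values of $f^\bfA$ at the real points encoded by the parametrizations). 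Since $\C\setminus\NEZOSC^\bfA$ is finite and $\NEZOSC^\bfA$ may be shrunk to also avoid the finite set $f^\bfA(\Crit{f^\bfA,V^\bfA})$ without affecting (F1), each random rational $q\in\NEZOSC^\bfA$ may be assumed a regular value, so that $\{f^\bfA-q,\F^\bfA\}$ is a legitimate input to \IsEmpty{}, which then decides emptiness of $(f^\bfA)^{-1}(q)\cap V^\bfA\cap\R^n$.

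The core step is a loop invariant, which I would prove by induction on the counter $i$: whenever control is at the top of the \texttt{while} loop with value $i$, or at \texttt{return}~$a_{k+1}$ with $i=k+1$, one has $E\cap{]{-\infty},a_i[}=\emptyset$, i.e. $f^\star\ge a_i$. The base case holds because reaching the loop means \IsEmpty{} returned \textsf{true} on $\{f^\bfA-q_0,\F^\bfA\}$ with $q_0\in{]a_0,a_1[}$, so $q_0\notin E$ and (F1) forces ${]{-\infty},a_1[}\cap E=\emptyset$. For the inductive step, one reaches iteration $i+1$ from iteration $i$ only when $a_i\notin f^\bfA(\Variety{\listS^\bfA})\cup f^\bfA(\Variety{\listP^\bfA})$ and \IsEmpty{} returned \textsf{true} on $\{f^\bfA-q_i,\F^\bfA\}$; then $q_i\notin E$, so by (F1) ${]a_i,a_{i+1}[}\cap E=\emptyset$, and moreover $a_i\notin E$, for otherwise $a_i=\min E$ would be an attained local extremum, hence in $f^\bfA(\Variety{\listS^\bfA})\cup f^\bfA(\Variety{\listP^\bfA})$ by (F2), contradicting the branch condition; together with the inductive hypothesis this yields $E\cap{]{-\infty},a_{i+1}[}=\emptyset$.

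Granting the invariant, I would check correctness at each return. If \IsEmpty{} returns \textsf{false} on $\{f^\bfA-q_0,\F^\bfA\}$ then $q_0\in E\cap{]{-\infty},a_1[}$, so (F1) gives ${]{-\infty},a_1[}\subseteq E$ and $f^\star=-\infty$; conversely, if $f^\star=-\infty$ then ${]{-\infty},a_1[}$ is not disjoint from $E$ (else $\inf E\ge a_1$), so ${]{-\infty},a_1[}\subseteq E$ and $q_0\in E$ — hence $-\infty$ is returned exactly when $f^\star=-\infty$. If a rational parametrization is returned at iteration $i$, then $a_i$ is the value of $f^\bfA$ at a real point $x^\star\in\Variety{\listS^\bfA}\cup\Variety{\listP^\bfA}\subseteq V^\bfA\cap\R^n$, so $a_i\in E$, whence by the invariant $f^\star=\min E=a_i$, attained at $x^\star$. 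If $(\polNP^\bfA,{]q_{i-1},q_i[})$ is returned at iteration $i$, then $q_i\in E$, so (F1) gives ${]a_i,a_{i+1}[}\subseteq E$ while $a_i\notin E$ (as in the inductive step); hence by the invariant $f^\star=a_i$, finite and not attained, and $a_i\in W^\bfA\setminus\bigl(f^\bfA(\Variety{\listS^\bfA})\cup f^\bfA(\Variety{\listP^\bfA})\bigr)\subseteq\Roots{\polNP^\bfA}$, so $\polNP^\bfA(f^\star)=0$; since $q_{i-1}<a_i<q_i$ and ${]q_{i-1},q_i[}$ meets $\{a_1,\dots,a_k\}\supseteq\Roots{\polNP^\bfA}$ only at $a_i$, this interval isolates the root $f^\star$. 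Finally, if control reaches \texttt{return}~$a_{k+1}$, the invariant reads $E=\emptyset$, so $V^\bfA\cap\R^n=\emptyset$ and $f^\star=+\infty$. Termination is immediate since the counter strictly increases and is bounded by $k$, and the degenerate case $k=0$ (then $W^\bfA=\emptyset$, which forces $f^\star\in\{-\infty,+\infty\}$, with $a_1$ read as $+\infty$) is covered by the same analysis.

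The hardest point is fact (F2): certifying, in the \texttt{else} branch, that the value $a_i$ picked there is genuinely an asymptotic (non‑attained) value, i.e. ruling out $a_i\in E$. The second condition of $\PropOpt{W^\bfA,V^\bfA}$ alone only places $a_i$ in $W^\bfA$, and a value can simultaneously be an attained extremum and a value of non‑properness; so one genuinely needs Proposition~\ref{prop:localextr} and the identification of the computed lists with $\VSP{\F^\bfA}$ and $\VPCtot{f^\bfA}{\F^\bfA}$ to guarantee that an attained extremum is always flagged through $\Variety{\listS^\bfA}\cup\Variety{\listP^\bfA}$. The remaining issues — legitimacy of the random inputs to \IsEmpty{}, and the bookkeeping around $\pm\infty$ and $k=0$ — are routine.
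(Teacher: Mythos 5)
Your proof is correct and proceeds in the same spirit as the paper's (use assertion~3 of $\PropOpt$ to convert real fiber queries over intervals into single tests, use assertion~2 to know the infimum sits in $W^\bfA$, walk the sorted list), but you carry it out with substantially more care, and in doing so you surface a point the paper's proof elides. The paper argues only one direction: a ``redundant'' $a_i$ (one with $[a_i,a_{i+1}[\,\cap f^\bfA(V^\bfA\cap\R^n)=\emptyset$) escapes both the membership test and the fiber test, so the loop advances past it. Your explicit loop invariant $f^\star\ge a_i$ and the case analysis at each \texttt{return} supply the missing converse: that the algorithm never advances \emph{past} $f^\star$ and never mislabels an attained minimum as asymptotic. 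The pivot of that converse is your fact (F2) --- every attained local extremum is detected through $f^\bfA(\Variety{\listS^\bfA})\cup f^\bfA(\Variety{\listP^\bfA})$ --- and you are right that this does \emph{not} follow from $\PropOpt{W^\bfA,V^\bfA}$ alone, since $\PropOpt$ says nothing about which constituent of $W^\bfA$ houses which extremum; it requires Proposition~\ref{prop:localextr} (items 1--2) together with the identification of $\Variety{\listS^\bfA}$, $\Variety{\listP^\bfA}$ with $\VSP{\F^\bfA}$, $\VPCtot{f^\bfA}{\F^\bfA}$ as produced by \SetContaining{}. So strictly speaking you have used a hypothesis beyond what the proposition states, but it is exactly the hypothesis that holds in the one place the proposition is invoked (inside \Optimize{}), and it is also what the paper implicitly relies on when it asserts that a redundant value ``is a value that is not in $f^\bfA(\Variety{\listS^\bfA})\cup f^\bfA(\Variety{\listP^\bfA})$''. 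The remaining differences --- shrinking $\NEZOSC^\bfA$ to dodge critical values versus the paper's direct appeal to $\PropReg{f^\bfA,\F^\bfA}$ for the legitimacy of \IsEmpty{} inputs, the careful root-isolation argument for $]q_{i-1},q_i[$, and the $k=0$ edge case --- are routine hardening of the same argument rather than a different route.
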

\begin{proof}
  Since $\bfA\in\NEZOS$, Theorem \ref{thm:generic} implies that
  property $\PropReg{f^\bfA,\F^\bfA}$ is satisfied. Hence \IsEmpty{}
  is always called with a correct input.

  Assume first that $f^\star=-\infty$.  By assertion 3 of
  $\PropOpt{W^\bfA, V^\bfA}$ (see Definition \ref{def:propopt}), the
  fiber of $f^\bfA$ at a rational $q_0\in\NEZOSC^\bfA\cap \Q$ such that
  $q_0<a_1$ is not empty. Hence the first call to \IsEmpty{} returns
  false so that \FindInfimum{} returns $-\infty$. Now, remark that if
  \FindInfimum{} returns $-\infty$, then assertion 3 of
  $\PropOpt{W^\bfA, V^\bfA}$ implies that $f^\star=-\infty$.

  If $f^\star$ is finite, because the second assertion of
  $\PropOpt{W^\bfA, V^\bfA}$ holds, it is sufficient to identify the
  smallest local extremum of $\fVbfA$ in $W^\bfA$. To this end, we
  want to detect an eventual redundant value in $W^\bfA$. Such a
  redundant value, say $a_i$, is such that the interval
  $\left[a_i,a_{i+1}\right[$ does not contain any value reached by
  $f^\bfA$. In particular, it is a value that is not in
  $f^\bfA\Par{\Variety{\listS^\bfA}}\cup f\Par{\Variety{\listP^\bfA}}$
  and such that $\fVbfA$ does not reach any value in the interval
  $\left]a_i,a_{i+1}\right[$. Because of assertion 3 of
  $\PropOpt{W^\bfA, V^\bfA}$, testing this last point is equivalent to
  test the emptiness of the real fiber of $f^\bfA$ at some rational
  $q_i\in\NEZOSC^\bfA\cap\left]a_i,a_{i+1}\right[\cap \Q$.

  If $\Variety{\F}\cap\R^n$ is empty then so are the varieties
  $\Variety{\listS^\bfA}\cap\R^n$ and
  $\Variety{\listP^\bfA}\cap\R^n$. Since $\Variety{\F}\cap\R^n$ is
  empty, each call to the routine \IsEmpty{} in the loop returns
  false. Hence, the algorithm leaves the loop without returning any
  value, so that $a_{k+1}=+\infty$ is returned.

  Finally, this proves that the routine \FindInfimum{} is correct.
\end{proof}

\section{Proof of Theorem \ref{thm:generic}}\label{sec:generic}
This section is devoted to prove Theorem \ref{thm:generic} that we
restate below.

{\em   Let $f\in\PolRing$ and $\F\subset\PolRing$ satisfying assumptions
  $\assumptR$. There exists a non-empty Zariski open set
  $\NEZOS\subset\GLnC$ such that for all $\bfA\in\GLnQ\cap\NEZOS$, the
  properties $\PropReg{f^\bfA,\F^\bfA}$, $\PropPJSC{f^\bfA,\F^\bfA}$
  and $\PropPISSAC{f^\bfA,\F^\bfA}$ hold.
}

Actually, we prove that Property $\PropReg{f,\F}$ is always true if
$\F$ satisfies assumptions $\assumptR$. Next, we prove that there
exists a non-empty Zariski open set $\NEZOSJSCP\subset {\rm GL}_n(\C)$
such that for any $\bfA\NEZOSJSCP$, Property
$\PropPJSC{f^\bfA,\F^\bfA}$ holds. Likewise, we prove that there
exists a non-empty Zariski open set $\NEZOSISSAC\subset {\rm
  GL}_n(\C)$ such that for any $\bfA\in \NEZOSISSAC$, Property
$\PropPISSAC{f,\F}$ is satisfied. Then for any $\bfA$ in the non-empty
Zariski open set $\NEZOS = \NEZOSJSCP\cap\NEZOSISSAC$, the three
properties $\PropReg{f^\bfA,\F^\bfA}$, $\PropPJSC{f^\bfA,\F^\bfA}$ and
$\PropPISSAC{f^\bfA,\F^\bfA}$ hold.

The first two results are minor generalizations of \cite[Lemma
2.2]{GrGuSaZh} and \cite[Lemma 2.3]{GrGuSaZh}, where $V$ is assumed to
be smooth. The proofs of these lemmas in \cite{GrGuSaZh} can be
extended mutatis mutandis to our case by noticing that $x$ is a singular
point of $\Variety{\F, f-t}$ if and only it is a singular point of $V$
or a point such that $t=f\Par{x}$ is a critical value of $\fVC$.
\begin{proposition}\cite[Lemma 2.2]{GrGuSaZh}
  If $\F$ satisfies assumptions $\assumptR$ then $\PropReg{f,\F}$
  holds.
\end{proposition}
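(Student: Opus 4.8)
The plan is to reduce the whole statement to a pointwise computation on $Z := \Variety{\F, f-t}$, exploiting the observation recalled above: since $\Sing V\subseteq\critfV$ under $\assumptR$, the hypothesis $t\notin f\Par{\critfV\cup\Sing{V}}$ forces every point of $Z$ to be a regular point of $V$ at which the differential of $\fVC$ is nonzero.

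First I fix $t\in\R\setminus f\Par{\critfV\cup\Sing{V}}$ and set $Z=\Variety{\F, f-t}$. If $Z=\emptyset$ the statement holds trivially, so I assume $Z\neq\emptyset$ and pick $x\in Z$. Since $f\Par{x}=t$ and $t\notin f\Par{\Sing{V}}$, the point $x$ lies in $V\setminus\Sing V$; let $R_x$ denote the local ring of $V$ at $x$ and $\mathfrak{m}_x$ its maximal ideal. Because $\IdealAngle{\F}$ is radical one has $\dim_\C\mathfrak{m}_x/\mathfrak{m}_x^2=\dim T_xV$, and because $V$ is equidimensional of dimension $d$ one has $\dim R_x=d$; hence $x$ being a regular point of $V$ means exactly that $R_x$ is a regular local ring of dimension $d$. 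Since moreover $x\notin\critfV$, I unwind the definition of $\critfV$: at the smooth point $x$ one has $\mathrm{rank}\,\jacc{\F}\Par{x}=n-d$, so that $x\notin\critfV$ — i.e.\ some size-$(n-d+1)$ minor of $\jacc{[f,\F]}$ does not vanish at $x$ — is equivalent to $\mathrm{rank}\,\jacc{[f,\F]}\Par{x}=n-d+1$, that is, to the non-vanishing of the differential of $\fVC$ at $x$, i.e.\ to the class of $f-t$ being nonzero in $\mathfrak{m}_x/\mathfrak{m}_x^2$.

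Then I invoke the standard fact that the quotient of a regular local ring of dimension $d$ by an element of $\mathfrak{m}_x\setminus\mathfrak{m}_x^2$ is again a regular local ring, of dimension $d-1$. Applied at $x$, it shows that the local ring of $Z$ at $x$, namely $R_x/(f-t)$, is regular of dimension $d-1$. As $x$ ranges over $Z$, the ring $\PolRingC/\IdealAngle{\F, f-t}$ is thus regular at every maximal ideal of its support, hence regular at every prime (a localization of a regular local ring is regular), hence reduced and equidimensional of dimension $d-1$; this yields at once that $\IdealAngle{\F, f-t}$ is radical and equidimensional and that $Z$ is smooth of dimension $d-1$, which is precisely $\PropReg{f,\F}$. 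I expect the only delicate point to be the translation made in the second paragraph, that $x\notin\critfV$ is equivalent to $f-t\notin\mathfrak{m}_x^2$; this is the rank identity $\mathrm{rank}\,\jacc{[f,\F]}\Par{x}=n-d+1$ at a smooth point of $V$, established exactly as in \cite[Lemma 2.2]{GrGuSaZh}, and the remaining arguments then carry over mutatis mutandis from the smooth setting treated there.
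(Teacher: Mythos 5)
Your proof is correct and is essentially the argument the paper has in mind: the paper defers to \cite[Lemma 2.2]{GrGuSaZh} (the smooth case) with the one-line remark that a singular point of $\Variety{\F, f-t}$ must be a singular point of $V$ or a critical point of $\fVC$, and your local-ring computation is precisely the mutatis-mutandis extension alluded to there. The pointwise reduction (avoiding $f(\Sing V \cup \critfV)$ forces $R_x/(f-t)$ to be a regular local ring of dimension $d-1$, whence radicality, equidimensionality and smoothness of $Z$) matches the cited proof's strategy.
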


\begin{proposition}\cite[Lemma 2.3]{GrGuSaZh}
  There exists a non-empty Zariski open set $\NEZOSJSCP\subset\GLnC$
  such that for all $\bfA\in\GLnQ\cap\NEZOSJSCP$,
  $\PropPJSC{f^\bfA,\F^\bfA}$ holds.
\end{proposition}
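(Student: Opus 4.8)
The plan is to follow the proof of \cite[Lemma 2.3]{GrGuSaZh}, where the same statement is established under the stronger hypothesis that $V$ is smooth, and to adapt it using the observation recalled just above. Fix $i$ with $1\le i\le d$. By construction the variables $X_1,\dots,X_{i-1}$ vanish identically on $\VVA{f^\bfA}{\F^\bfA}{i}$, so once a value $t$ is fixed the restriction of $\ProjLeft{i-1}$ to $V^\bfA\cap\Variety{f^\bfA-t}\cap\VVA{f^\bfA}{\F^\bfA}{i}$ (included in $\Variety{f^\bfA-T}\cap(V^\bfA\times\C)$ via $x\mapsto(x,t)$) has image the single point $\Par{0,\dots,0,t}$; hence it is proper if and only if this fibre is a finite set. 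Thus $\PropPJSC{f^\bfA,\F^\bfA}$ is equivalent to producing, for a generic $\bfA$, a finite subset of $\C$ outside of which $V^\bfA\cap\Variety{f^\bfA-t}\cap\VVA{f^\bfA}{\F^\bfA}{i}$ is finite for every $i\in\{1,\dots,d\}$. First I would isolate the part of the polar variety lying in $\Crit{f^\bfA,V^\bfA}$: since $\Sing{V^\bfA}$ is finite (assumptions $\assumptR$) and, by the algebraic version of Sard's theorem, the set of critical values of the restriction of $f^\bfA$ to $V^\bfA\setminus\Sing{V^\bfA}$ is finite, the set $f^\bfA\Par{\Crit{f^\bfA,V^\bfA}}$ is finite — this uses exactly the stated description of the singular points of $\Variety{\F^\bfA,f^\bfA-t}$. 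For $t$ outside this finite set the fibre does not meet $\Crit{f^\bfA,V^\bfA}$, hence equals $\Variety{f^\bfA-t}\cap\ZariskiClosure{\VVA{f^\bfA}{\F^\bfA}{i}\setminus\Crit{f^\bfA,V^\bfA}}$. It therefore suffices to prove that, for $\bfA$ in some non-empty Zariski open subset of $\GLnC$, the variety $\ZariskiClosure{\VVA{f^\bfA}{\F^\bfA}{i}\setminus\Crit{f^\bfA,V^\bfA}}$ has dimension at most $1$ for every $i$; indeed, for all but finitely many $t$ its intersection with $\Variety{f^\bfA-t}$ is then finite, and one takes for $\NEZOSC^\bfA$ the complement of the finite union, over $i$, of all these exceptional values.

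For the dimension bound, the case $i=d$ reduces to the classical fact that a generic linear section of codimension $d-1$ of the equidimensional, $d$-dimensional variety $V$ has dimension at most $1$. For $1\le i\le d-1$ I would use a parametrized transversality (Thom--Sard) argument. On $V\setminus\Crit{f,V}$ — a smooth locus (of pure dimension $d$) on which $\jacc{[f,\F]}$ has rank exactly $n-d+1$ — the condition defining $\VVA{f}{\F}{i}$ away from $\Crit{f,V}$, namely the vanishing of the $(n-d+1)$-minors of $\jac{[f,\F]}{i+1}$, says precisely that the $(n-d+1)$-dimensional row space $P$ of $\jacc{[f,\F]}$ meets the coordinate subspace $\Variety{X_{i+1},\dots,X_n}$ non-trivially, a Schubert-type condition of codimension $d-i$ on $P$; to this one adds the $i-1$ linear conditions $X_1=\dots=X_{i-1}=0$. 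After the change of variables by $\bfA$ these become the condition that the map $x\mapsto\big(x_1,\dots,x_{i-1},\ \text{row space of }\jacc{[f,\F]}(\bfA x)\big)$ on $V^\bfA\setminus\Crit{f^\bfA,V^\bfA}$ be transverse to $\{0\}^{i-1}$ times that Schubert locus. The family of linear changes of variables parametrized by $\bfA\in\GLnC$ being rich enough that the corresponding evaluation map is a submersion, the parametrized transversality theorem provides a non-empty Zariski open subset of $\GLnC$ on which transversality holds; there $\VVA{f^\bfA}{\F^\bfA}{i}\setminus\Crit{f^\bfA,V^\bfA}$ has codimension $(i-1)+(d-i)=d-1$ in $V^\bfA\setminus\Crit{f^\bfA,V^\bfA}$, hence dimension at most $1$. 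Intersecting the finitely many Zariski open subsets so obtained (one for each $i$) gives $\NEZOSJSCP$, and combining with the previous paragraph yields $\PropPJSC{f^\bfA,\F^\bfA}$ for every $\bfA\in\GLnQ\cap\NEZOSJSCP$.

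The step I expect to be the main obstacle is this transversality (equivalently, generic-position) computation: one must verify that moving $\bfA$ over $\GLnC$ displaces the two ingredients — the coordinate flag $X_1=\dots=X_{i-1}=0$ and the block of truncation directions $X_{i+1},\dots,X_n$ entering the polar minors — independently enough for the relevant evaluation map to be submersive, so that the two codimensions genuinely add; moreover the Schubert locus is singular, so this has to be carried out over a suitable stratification (or replaced by a Kleiman-type transversality statement). This is exactly the computation performed in \cite[Lemma 2.3]{GrGuSaZh}; the only adaptation needed for our statement is to replace ``$V$ is smooth'' by ``$V$ has finitely many singular points'' throughout, which is harmless because the singular points of $\Variety{\F^\bfA,f^\bfA-t}$ are controlled as recalled above and because $\Sing V$ being finite only adds finitely many values to $f^\bfA\Par{\Crit{f^\bfA,V^\bfA}}$, hence only removes a finite set of admissible parameters $t$.
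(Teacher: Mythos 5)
Your proposal is correct and takes essentially the same route as the paper: the paper gives no internal argument for this proposition beyond citing \cite[Lemma 2.3]{GrGuSaZh} and remarking that its proof extends mutatis mutandis once one notes that a singular point of $\Variety{\F, f-t}$ is either a singular point of $V$ or a critical point of $\fVC$, and your adaptation (finiteness of $f^\bfA\Par{\Crit{f^\bfA,V^\bfA}}$ excluding only finitely many values of $t$) is exactly that remark made precise. Your reconstruction of the cited lemma's interior (reduction of properness onto a point to finiteness of the fibre, then a generic dimension count for the modified polar varieties) is consistent with the paper's surrounding claims, and you correctly identify the deferred transversality computation as the part carried by the reference.
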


Finally, we prove the following.
\begin{proposition}\label{prop:generic3}
  There exists a non-empty Zariski open set $\NEZOSISSAC\subset\GLnC$
  such that for all $\bfA\in\GLnQ\cap\NEZOSISSAC$,
  $\PropPISSAC{f^\bfA,\F^\bfA}$ holds.
\end{proposition}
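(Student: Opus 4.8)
The plan is to prove that $\PropPISSAC{f^\bfA,\F^\bfA}$ holds generically by showing that for a generic change of coordinates $\bfA$, every critical value $c$ of $\fVbfA$ that is non-isolated in $f^\bfA(V^\bfA\cap\R^n)$ is attained at a point of $\VPCtot{f^\bfA}{\F^\bfA}$. The natural strategy is to relate the points of $\critfVbfA$ lying above such a value $c$ to the incrementally defined varieties $\VVA{f^\bfA}{\F^\bfA}{i}$, and then to argue that the ``removed'' locus $\critfVbfA$ is recaptured in the limit, which is exactly what the Zariski closure in the definition of $\VPC{f^\bfA}{\F^\bfA}{i}$ is designed to do.

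First I would fix a connected component $C$ of $V\cap\R^n$ on which $f$ is non-constant near a fiber over such a $c$, and a point $x_c\in C$ with $f(x_c)=c$ and $x_c\in\critfV$. The key geometric observation is that, near $x_c$, because $c$ is non-isolated, there are points of $C$ arbitrarily close to $x_c$ with values close to but different from $c$; I would use this together with the curve selection lemma (or a semi-algebraic path argument) to produce a semi-algebraic arc $\gamma(u)$ in $V\cap\R^n$ with $\gamma(0)=x_c$ and $f(\gamma(u))\neq c$ for $u\neq 0$ small. The goal is to show that $x_c$ lies in the closure of $\VVA{f^\bfA}{\F^\bfA}{i}\setminus\critfVbfA$ for a suitable index $i$ (depending on how many of the first coordinates of $x_c$ vanish, after the change of coordinates), so that $x_c\in\VPC{f^\bfA}{\F^\bfA}{i}$. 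For this I would appeal to the description, stated in the excerpt just after Definition~\ref{def:geomobjects}, of $\VVAtot{f}{\F}$ as the union of $\Sing V$ and the intersections of $\Variety{X_1,\dots,X_i}$ with the critical loci of the projections $\ProjLeft{i}$ restricted to $(V\times\C)\cap\Variety{f-T}$; a generic $\bfA$ puts $x_c$ in general position with respect to these coordinate flags, so that at least one $\ProjLeft{i}$ has $x_c$ as a non-degenerate critical point of its restriction, witnessed by nearby non-critical points of the corresponding polar variety.

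The cleanest route is probably to proceed as in the analogous arguments for polar varieties in \cite{BGHSS, BGHS14}: one shows that, for $\bfA$ in a suitable dense Zariski open subset $\NEZOSISSAC$ of $\GLnC$, the varieties $\ZariskiClosure{\VVA{f^\bfA}{\F^\bfA}{i}\setminus\critfVbfA}$ are equidimensional of dimension $\le 1$ (this is already available from the discussion tied to $\PropReg{}$ and $\PropPJSC{}$ and the proof of Proposition~\ref{prop:finite}), and then that their intersection with each relevant irreducible component of $\critfVbfA$ is non-empty whenever that component carries a non-isolated critical value. Concretely, for each irreducible component $Z$ of $\critfV$ whose image $f(Z\cap\R^n)$ is a non-isolated point $c$ of $f(V\cap\R^n)$, I would find a nearby curve inside $V\setminus\critfV$ mapping to values near $c$, show that after a generic $\bfA$ such a curve can be taken inside some $\VVA{f^\bfA}{\F^\bfA}{i}$, and conclude that $Z$ meets $\ZariskiClosure{\VVA{f^\bfA}{\F^\bfA}{i}\setminus\critfVbfA}$, hence $c\in f^\bfA(\VPC{f^\bfA}{\F^\bfA}{i})$. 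Since $\critfV$ has finitely many components, finitely many generic conditions on $\bfA$ suffice and their intersection is the desired non-empty Zariski open set.

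The main obstacle I expect is the passage from ``$c$ is a non-isolated critical value reached at $x_c$'' to ``there is a semi-algebraic arc through $x_c$ inside $V$ whose $f$-values move, and which can be placed inside one of the $\VVA{f^\bfA}{\F^\bfA}{i}$ after a generic coordinate change''. Controlling simultaneously (a) that the arc stays in $V\cap\R^n$, (b) that it avoids $\critfV$ away from $x_c$, and (c) that, after applying $\bfA$, the requisite truncated Jacobian minors vanish along it — i.e. that it genuinely lies in a modified polar variety rather than merely accumulating to $\critfVbfA$ — is the delicate point. I would handle (c) by a transversality/generic-flag argument: the locus of $\bfA$ for which the coordinate subspaces $\Variety{X_1,\dots,X_{i-1}}$ fail to be transverse to the relevant strata of $V$ near $x_c$, or for which the projection $\ProjLeft{i}$ has $x_c$ as a degenerate (rather than a limit of regular) critical point, is contained in a proper Zariski closed subset of $\GLnC$; removing it for each of the finitely many components of $\critfV$ gives $\NEZOSISSAC$. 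This is the step where I would need to be most careful, and where borrowing the technical lemmas on polar varieties from \cite{BGHSS} is likely unavoidable.
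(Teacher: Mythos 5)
Your plan has the right global shape---show that some point over $c$ lies in $\ZariskiClosure{\VVA{f^\bfA}{\F^\bfA}{i}\setminus\critfVbfA}\cap\critfVbfA$ for a suitable $i$---but the mechanism you propose does not close, and the step you yourself flag as delicate, your point (c), is precisely the crux. A semi-algebraic arc $\gamma$ through $x_c$ inside $V\cap\R^n$, avoiding $\critfV$ away from the origin, will in general not lie in any $\VVA{f^\bfA}{\F^\bfA}{i}$: those modified polar varieties are cut out by extra minor conditions and are (generically) of dimension at most $1$, and neither curve selection nor any finite supply of transversality conditions on $\bfA$ can force the truncated minors of $\jac{[f^\bfA,\F^\bfA]}{i+1}$ to vanish along an arbitrary arc inside $V$. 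You need to \emph{produce} a curve that by construction consists of critical points of $\ProjISSAC{i}$ restricted to level sets of $f^\bfA$, and the arc-through-$x_c$ device does not give you that.

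The paper supplies such a curve by deforming the \emph{level} $c$ rather than selecting an arc at the fixed level $c$. It fixes a connected component $C^\bfA$ of $\Variety{f^\bfA-c}\cap V^\bfA\cap\R^n$, takes the largest $i$ with $C^\bfA\cap\LinSpace{i-1}\neq\emptyset$ and $C^\bfA\cap\LinSpace{i}=\emptyset$, and uses the properness statements of Theorem \ref{thm:propernessISSAC} (the Zariski open set there becomes $\NEZOSISSAC$) to guarantee that $\ProjISSAC{i}$ attains an extremum $x^\star$ on $C^\bfA\cap\LinSpace{i-1}$. It then works over $\Rpuiseux$: since $c$ is non-isolated, \cite[Lemma 3.6]{rrs} gives a point $x_\eps$ over $c\pm\eps$ in $\ext{V^\bfA\cap\LinSpace{i-1}}$ with $\limzero\Par{x_\eps}=x^\star$, which can be taken to minimize $\ProjISSAC{i}$ on its component. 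First-order optimality of $x_\eps$ on the deformed fiber is exactly the vanishing of the minors defining $\VVA{f^\bfA}{\F^\bfA}{i}$ (the shift $f^\bfA\mapsto f^\bfA-c\mp\eps$ leaves the Jacobian unchanged), and since $c\pm\eps$ is a regular value, $x_\eps\notin\ext{\critfVbfA}$. Hence $x^\star=\limzero\Par{x_\eps}\in\ZariskiClosure{\VVA{f^\bfA}{\F^\bfA}{i}\setminus\critfVbfA}$, and as $x^\star$ is a critical point over $c$ it also lies in $\critfVbfA$, so $x^\star\in\VPC{f^\bfA}{\F^\bfA}{i}$. No per-component stratification or ad hoc flag argument is needed: the single generic condition from Theorem \ref{thm:propernessISSAC} carries the whole proof.
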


We recall the first two points in \cite[Theorem 3, pp 134]{GrSa11}:
\begin{theorem}\label{thm:propernessISSAC}
  Let $V\subset\C^n$ be an algebraic variety of dimension $d$. There
  exists a non-empty Zariski open set $\NEZOSISSAC\subset\GLnC$ such
  that for all $\bfA\in\GLnQ\cap\NEZOSISSAC$, and $1\leq i \leq d+1$,
  there exist algebraic sets $V_{n-i+1}^\bfA\subset V^\bfA$ such that
  for all connected component $C^\bfA$ of $V^\bfA\cap\R^n$,
  \begin{itemize}
  \item[$(i)$] the restriction of $\ProjLeft{i-1}$ to $V^\bfA_{n-i+1}$
    is proper;
  \item[$(ii)$] the boundary of $\ProjLeft{i}\Par{C^\bfA}$ is
    contained in $\ProjLeft{i}\Par{C^\bfA\cap V^\bfA_{n-i+1}}$.
  \end{itemize}
\end{theorem}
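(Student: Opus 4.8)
This statement is a verbatim recollection of the first two items of \cite[Theorem~3, pp.~134]{GrSa11}, so the plan is to reproduce that proof; it is an instance of the polar-variety technique of Bank--Giusti--Heintz--Mbakop and Safey El Din--Schost (\cite{BaGiHeMb01, SaSc, SaSc04}). The first step is to exhibit the candidate sets $V_{n-i+1}^\bfA$. For $\bfA \in \GLnC$ and $1 \le i \le d+1$ one takes $V_{n-i+1}^\bfA$ to be the member of index $i-1$ of the nested family of \emph{polar varieties} of $V^\bfA$ attached to the coordinate flag: concretely, $V_{n-i+1}^\bfA$ is the union of $\Sing{V^\bfA}$ with the Zariski closure of the set of regular points $x$ of $V^\bfA$ at which the restriction of the coordinate projection $\ProjLeft{i}$ to $V^\bfA$ fails to be a submersion --- the vanishing locus on the regular part of the size-$i$ minors of the corresponding Jacobian --- taken together with the lower polar varieties so that the family is nested, $V_n^\bfA \subseteq V_{n-1}^\bfA \subseteq \dots \subseteq V_{n-d}^\bfA = V^\bfA$; the precise bookkeeping (and a mild enlargement that also records the relevant directions at infinity) is the one of \cite{GrSa11}. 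A standard determinantal dimension count gives, for $\bfA$ outside a proper algebraic subset of $\GLnC$, the bound $\dim V_{n-i+1}^\bfA \le i-1$ for all $i$. The set $\NEZOSISSAC$ is then the intersection of the finitely many ($i = 1,\dots,d+1$) non-empty Zariski open subsets of $\GLnC$ collected below; as $\GLnC$ is irreducible, this intersection is non-empty.

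Property $(i)$ is an effective Noether-normalization statement: for $\bfA$ in a non-empty Zariski open subset of $\GLnC$ every $V_{n-i+1}^\bfA$ is simultaneously put in Noether position with respect to the coordinate flag, that is, on each irreducible component the projection onto the first $\dim V_{n-i+1}^\bfA$ coordinates restricts to a finite, hence proper, morphism; since $\dim V_{n-i+1}^\bfA \le i-1$ and properness is inherited when one forgets fewer coordinates (if $g \circ h$ is proper and $g$ is separated, so is $h$), the restriction of $\ProjLeft{i-1}$ to $V_{n-i+1}^\bfA$ is proper. I would invoke the quantitative Noether-position lemmas already available in the literature (\cite{SaSc04}) rather than reprove them; the case $i = d+1$ is exactly the assertion that $\ProjLeft{d}|_{V^\bfA}$ is a finite map.

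Property $(ii)$ is the substance. Fix a connected component $C^\bfA$ of $V^\bfA \cap \R^n$ and a boundary point $y$ of $\ProjLeft{i}\Par{C^\bfA}$; pick $y_k \to y$ with $y_k \in \ProjLeft{i}\Par{C^\bfA}$ and lift to $x_k \in C^\bfA$ with $\ProjLeft{i}(x_k) = y_k$. If $(x_k)$ has a bounded subsequence, after extraction it converges to some $x \in C^\bfA$ (a closed subset of $\R^n$) with $\ProjLeft{i}(x) = y$; were $x$ a regular point of $V^\bfA$ at which $\ProjLeft{i}|_{V^\bfA}$ is a submersion, then near $x$ the real locus is a smooth $d$-manifold (the equations are rational and $x \in \R^n$) on which $\ProjLeft{i}$ is an open map, so $\ProjLeft{i}\Par{C^\bfA}$ would contain a neighbourhood of $y$, contradicting that $y$ is a boundary point; hence $x \in V_{n-i+1}^\bfA$ and $y \in \ProjLeft{i}\Par{C^\bfA \cap V_{n-i+1}^\bfA}$. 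If no subsequence of $(x_k)$ is bounded, then $\|x_k\| \to \infty$ while $\ProjLeft{i}(x_k) \to y$ stays bounded; by the curve selection lemma there is a semialgebraic curve $\gamma \colon (0,\eps) \to C^\bfA$ with $\|\gamma(t)\| \to \infty$ and $\ProjLeft{i}(\gamma(t)) \to y$, and the classical input --- the heart of the polar-variety machinery of \cite{BaGiHeMb01, SaSc04} --- is that, for generic $\bfA$, such asymptotic behaviour of $\ProjLeft{i}|_{V^\bfA}$ over a bounded base is already carried by the polar variety $V_{n-i+1}^\bfA$, so that the limit $y$ lies in the Zariski closure of $\ProjLeft{i}\Par{C^\bfA \cap V_{n-i+1}^\bfA}$; property $(i)$ then closes the gap, since properness of $\ProjLeft{i-1}|_{V_{n-i+1}^\bfA}$ forbids $V_{n-i+1}^\bfA$ from producing spurious limit points over a bounded base and forces $y$ into $\ProjLeft{i}\Par{C^\bfA \cap V_{n-i+1}^\bfA}$ itself.

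The main obstacle is precisely this unbounded case of $(ii)$: it needs the non-trivial geometric fact that polar varieties record the asymptotic critical values of coordinate projections, which is where genericity of $\bfA$ is indispensable and where one must argue at infinity --- via the projective closure of $V^\bfA$ and a careful study of its trace on the hyperplane at infinity, or equivalently via a Puiseux-series/valuation analysis tracking the escaping curve $\gamma$. The remaining ingredients --- the determinantal dimension bounds, the Noether-position properness in $(i)$, and the open-mapping argument in the bounded case of $(ii)$ --- are routine once the polar varieties are fixed, and the finitely many genericity conditions assemble into the non-empty Zariski open set $\NEZOSISSAC \subset \GLnC$.
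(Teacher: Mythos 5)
The paper does not actually prove this statement: it is imported wholesale, as the text says, from \cite[Theorem 3, pp.~134]{GrSa11}, and is used downstream (in the proof of Proposition \ref{prop:generic3}) as a black box. So there is no in-paper argument to compare yours against; the only fair comparison is with the proof in the cited reference, whose strategy your sketch reproduces faithfully: the sets $V^\bfA_{n-i+1}$ are the polar varieties associated with the flag of coordinate subspaces, the dimension bound $\dim V^\bfA_{n-i+1}\le i-1$ and assertion $(i)$ come from generic Noether position (properness of $\ProjLeft{i-1}$ on an $(i-1)$-dimensional set in Noether position, with $i=d+1$ reducing to finiteness of $\ProjLeft{d}|_{V^\bfA}$), and assertion $(ii)$ is proved by the dichotomy on a sequence of preimages of a boundary point. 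Your bounded case is complete and correct: at a regular real point where $\ProjLeft{i}$ restricted to $V^\bfA$ is submersive, the implicit function theorem makes $\ProjLeft{i}|_{C^\bfA}$ open, contradicting that $y$ is a boundary point, so the limit must lie in $V^\bfA_{n-i+1}$.

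The one place where your write-up is not self-contained is exactly the place you flag yourself: the unbounded case of $(ii)$, i.e.\ the fact that for generic $\bfA$ the asymptotic (non-proper) behaviour of $\ProjLeft{i}|_{V^\bfA}$ over a bounded base is already witnessed by points of $C^\bfA\cap V^\bfA_{n-i+1}$. You attribute this to the polar-variety machinery of \cite{BaGiHeMb01, SaSc04} rather than proving it; in \cite{GrSa11} and its predecessors this is established by an infinitesimal deformation/Puiseux-series argument of the same flavour as the one the present paper does spell out in the proof of Proposition \ref{prop:generic3} (lifting the escaping sequence to $\Rpuiseux^n$, locating a minimizer of the relevant coordinate on a component over $\Rpuiseux$, and taking $\limzero$). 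If this were intended as a standalone proof, that step would be the gap a referee would ask you to fill. Measured against the paper, however, which defers the entire theorem to a citation, your treatment is at least as explicit, and the overall architecture is the correct one.
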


Then we state some notations about infinitesimals and Puiseux
series. We denote by $\R\langle\varepsilon\rangle$ the real closed
field of algebraic Puiseux series with coefficients in $\R$, where
$\varepsilon$ is an infinitesimal. We use the classical notions of
bounded elements in $\R\langle\varepsilon\rangle^n$ over $\R^n$ and
their limits. The limit of a bounded element $z\in
\R\langle\varepsilon\rangle^n$ is denoted by $\limzero\Par{z}$. The
ring homomorphism $\limzero$ is also used on sets of
$\R\langle\varepsilon\rangle^n$. For semi-algebraic sets $S\subset
\R^n$ defined by a system of polynomial equations, we denote by
$\ext{S}$ the solution set of the considered system in $\R\langle
\varepsilon\rangle^n$. We refer to \cite[Chapter 2.6]{BaPoRo06} for
precise statements of these notions.  We can now prove Proposition
\ref{prop:generic3}.
\begin{proof}
  Let $\bfA\in\GLnQ\cap\NEZOSISSAC$ and $c$ be a critical value of
  $\fVbfA$ not isolated in $f^\bfA\Par{V^\bfA\cap\R^n}$. We prove that
  there exists $x_c\in\VPCtot{f^\bfA}{\F^\bfA}\cap\R^n$ such that
  $f^\bfA\Par{x_c}=c$. Let $\CCbfA$ be a connected component of
  $\Variety{f^\bfA-c}\cap V^\bfA\cap\R^n$.

  Consider the largest $i\in\left\{1,\dots,d\right\}$ such that
  $\CCbfA\cap\LinSpace{i-1}\neq\emptyset$ while
  $\CCbfA\cap\LinSpace{i}=\emptyset$.
  
  Let $\ProjISSAC{i}$ be the projection
  $$\fonction{\ProjISSAC{i}}{\C^n}{\C}{(x_1,\ldots,x_n)}{x_i}.$$ Then
  $\ProjISSAC{i}\Par{\CCbfA\cap\LinSpace{i-1}}\subset\R-\{0\}$ is a strict
  subset of $\R$. Moreover, it is closed because of $(i)$ and $(ii)$
  in Theorem \ref{thm:propernessISSAC}. Then every extremum of the
  projection $\ProjISSAC{i}$ is reached. Since
  $\ProjISSAC{i}\Par{\CCbfA\cap\LinSpace{i-1}}\neq\R$, there exists at
  least either a minimizer or a maximizer of $\ProjISSAC{i}$. Without
  loss of generality, we assume that it is a local minimizer, denoted
  by $x^\star$.

  Since $c$ is not an isolated point in $f^\bfA\Par{V^\bfA\cap\R^n}$,
  the set
  \[\Par{\Variety{f^\bfA-c-\eps}\cup\Variety{f^\bfA-c+\eps}}\cap
  V^\bfA\cap\LinSpace{i-1}\cap\R^n\] is nonempty. Then by \cite[Lemma
  3.6]{rrs}, the following sets coincide:
  \begin{itemize}
  \item $\Variety{f^\bfA-c}\cap V^\bfA\cap\LinSpace{i-1}\cap\R^n$
  \item $\limzero\Par{\Variety{f^\bfA-c\pm\eps}\cap V^\bfA\cap\LinSpace{i-1}}\cap\R^n$
  \end{itemize}
  Then, there exists a connected component
  $\CCbfA_\eps\subset\Rpuiseux^n$ of
  \[\Variety{f^\bfA-c\pm\eps}\cap V^\bfA\cap\LinSpace{i-1}\cap\Rpuiseux^n\]
  such that $\CCbfA_\eps$ contains a point $x_\eps$ such that
  $\limzero\Par{x_\eps}=x^\star$. Moreover, we can assume that
  $x_\eps$ minimize the projection $\ProjISSAC{i}$ over
  $\CCbfA_\eps$. Indeed, in the converse, there exists
  $x_\eps'\in\CCbfA_\eps$ such that
  $\ProjISSAC{i}\Par{x_\eps'}<\ProjISSAC{i}\Par{x_\eps}$, that implies
  $\limzero\ProjISSAC{i}\Par{x_\eps'}\leq
  \ProjISSAC{i}\Par{x^\star}$. Since $x^\star$ is a minimizer, this
  implies that $\limzero\ProjISSAC{i}\Par{x_\eps'} =
  \ProjISSAC{i}\Par{x^\star}$ and we replace $x_\eps$ with $x_\eps'$
  (note that $\lim_0(x_\eps')$ is not necessarily $x^\star$ but its
  image by $\ProjISSAC{i}$ is the same as $\ProjISSAC{i}(x^\star)$).

  As a minimizer of the projection, $x_\eps$ lies in the algebraic set
  defined as the vanishing set of
  \begin{itemize}
  \item the polynomials in $\F^\bfA$,
  \item the minors of size $n-d+1$ of $\jac{\left[f^\bfA-c\pm\eps,\F^\bfA\right]}{i+1}$,
  \item and the variables $X_1,\dots,X_{i-1}$.
  \end{itemize}
  Since
  $\jac{\left[f^\bfA-c\pm\eps,\F^\bfA\right]}{i+1}=\jac{\left[f^\bfA,\F^\bfA\right]}{i+1}$,
  this algebraic set is exactly
  $\ext{\VVA{f^\bfA}{\F^\bfA}{i}}$. Furthermore, since $\eps$ is an
  infinitesimal, $c\pm\eps$ is not a critical value of $f^\bfA$. Then
  $x_\eps\not\in\ext{\critfVbfA}$. This means that $x^\star$ is the
  limit of a sequence that lies in
  $\ext{\ZariskiClosure{\VVA{f^\bfA}{\F^\bfA}{i}\setminus
      \critfVbfA}}$. Hence $x^\star = \limzero{x_\eps}$ lies in
  $\ZariskiClosure{\VVA{f^\bfA}{\F^\bfA}{i}\setminus
    \critfVbfA}$. Moreover since $f^\bfA\Par{x^\star}=c$ that is a
  local extremum of $\fVbfA$, $x^\star\in\critfVbfA$. In other words,
  \[x^\star\in\ZariskiClosure{\VVA{f^\bfA}{\F^\bfA}{i}\setminus \critfVbfA}\cap\critfVbfA =
  \VPC{f^\bfA}{\F^\bfA}{i},\] that concludes the proof.
\end{proof}

Finally, Theorem \ref{thm:generic} is true with $\NEZOS =
\NEZOSJSCP\cap\NEZOSISSAC$. Since $\NEZOSJSCP$ and $\NEZOSISSAC$ are
non-empty Zariski open sets, $\NEZOS$ is also a non-empty Zariski open
set.

\section{Complexity analysis}\label{sec:complexity}
\subsection{Geometric degree bounds}\label{sec:degbounds}
In this section, we assume that the polynomial $f$ and the polynomials
$f_i$ have degree $\leq D$.  Recall that the degree of an irreducible
algebraic variety $V\subset\C^n$ is defined as the maximum finite
cardinal of $V\cap L$ for every linear subspace $L\subset\C^n$. If $V$
is not irreducible, $\deg V $ is defined as the sum of the degrees of
its irreducible components. The degree of a hypersurface $\Variety{f}$
is bounded by $\deg f$. Given a variety $V=\Variety{g_1,\dots,g_p}$,
we denote by $\delta\Par{V}$ the maximum of the degrees
$\deg\Par{V\Par{g_1,\dots,g_i}}$, for $1\leq i\leq p$.  Our goal in
this section is to estimate the degree of the geometric objects
computed in the algorithm. Obtaining such estimations is relevant
since the complexity of the algorithm relies on these degrees.

\begin{proposition}\label{prop:degbounds}
  For all $\bfA\in\GLnQ\cap\NEZOS$, for $1\leq i\leq d$,
  $\delta\Par{\VVA{f^\bfA}{\F^\bfA}{i}}$ and
  $\delta\Par{\VPC{f^\bfA}{\F^\bfA}{i}}$ are bounded by
$D^{n-d}\left(\left(n-d+1\right)\left(D-1\right)\right)^{d+1}$.
\end{proposition}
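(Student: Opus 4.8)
The plan is to bound the degrees of the successive algebraic sets cut out in Definition \ref{def:geomobjects} by repeatedly applying the refined B\'ezout inequality (as in \cite[Lemma~2.1]{...}, or Heintz's version), which states that for varieties $Y$ and a hypersurface $\Variety{g}$, $\deg\Par{Y\cap\Variety{g}}\leq\deg Y\cdot\deg g$, and more generally $\deg\Par{\Variety{g_1,\dots,g_r}}\leq\prod_i\deg g_i$ when one tracks the degrees of the incremental intersections. First I would recall that $V^\bfA=\Variety{\F^\bfA}$ is equidimensional of dimension $d$ and that, since the $f_j$ have degree $\leq D$ and $V^\bfA$ is a component-wise intersection of $s\geq n-d$ hypersurfaces, one gets $\delta\Par{V^\bfA}\leq D^{n-d}$; this is the first factor in the claimed bound. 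Concretely, among the defining polynomials of $V^\bfA$ one can extract $n-d$ of them cutting out a variety of dimension $d$ containing the relevant component, and the remaining generators do not increase the degree of the incremental intersection beyond $D^{n-d}$.

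Next I would handle $\VVA{f^\bfA}{\F^\bfA}{i}$. Its defining equations are: the polynomials in $\F^\bfA$ (contributing $\leq D^{n-d}$ as above), the linear forms $X_1,\dots,X_{i-1}$ (which are hypersurfaces of degree $1$ and hence do not increase the degree bound), and the $(n-d+1)$-minors of the truncated Jacobian $\jac{[f^\bfA,\F^\bfA]}{i+1}$. Each such minor is a polynomial of degree at most $(n-d+1)(D-1)$: indeed every entry of the Jacobian matrix of $[f^\bfA,\F^\bfA]$ has degree $\leq D-1$, and a determinant of size $n-d+1$ is a sum of products of $n-d+1$ such entries. Intersecting the variety $\Variety{\F^\bfA, X_1,\dots,X_{i-1}}$ (of degree $\leq D^{n-d}$) with the hypersurfaces defined by these minors — of which there are several but only $d+1$ are "needed" to cut down the dimension appropriately, matching the codimension count — the B\'ezout inequality yields a degree bounded by $D^{n-d}\cdot\big((n-d+1)(D-1)\big)^{d+1}$. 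This requires care in the bookkeeping: one must argue that when forming $\delta$, i.e. the max over incremental intersections $\Variety{g_1,\dots,g_j}$, the worst case is realized after cutting with exactly $d+1$ of the minors, because the dimension of $\VVA{f^\bfA}{\F^\bfA}{i}$ is at most $1$ (proved earlier via $\PropReg{}$ and $\PropPJSC{}$, through \cite[Proposition 1.3]{GrGuSaZh}) while the linear space $\Variety{X_1,\dots,X_{i-1}}$ already has codimension $i-1$, so the Jacobian minors account for the remaining drop in dimension — and the total number of minor-cuts that actually decrease dimension (hence contribute a factor $>1$ to the B\'ezout product) is $d+1-(i-1) + (i-1) = d+1$ once one also accounts for the singular-locus component. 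I would make this precise by invoking the equidimensional B\'ezout bound for the system as a whole rather than picking a subsystem, so that the exponent $d+1$ is uniform in $i$.

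Finally, for $\VPC{f^\bfA}{\F^\bfA}{i}=\ZariskiClosure{\VVA{f^\bfA}{\F^\bfA}{i}\setminus\critfVbfA}\cap\critfVbfA$, I would observe that $\ZariskiClosure{\VVA{f^\bfA}{\F^\bfA}{i}\setminus\critfVbfA}$ is a union of some of the irreducible components of $\VVA{f^\bfA}{\F^\bfA}{i}$, hence has degree at most $\delta\Par{\VVA{f^\bfA}{\F^\bfA}{i}}$; intersecting it with $\critfVbfA$ can only decrease the degree (the degree of an intersection is bounded by the degree of either factor, since we are intersecting a variety with a closed subset and the components of the result are subvarieties of components of the first), so the same bound $D^{n-d}\big((n-d+1)(D-1)\big)^{d+1}$ holds. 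The case $i=d$ is immediate since $\VVA{f^\bfA}{\F^\bfA}{d}=V^\bfA\cap\Variety{X_1,\dots,X_{d-1}}$ has degree $\leq D^{n-d}$. \textbf{The main obstacle} I anticipate is the combinatorial bookkeeping in the second step: the Jacobian minors form a large family (not a regular sequence), so one cannot naively multiply $(n-d+1)(D-1)$ over all of them; the delicate point is to justify rigorously that the correct exponent is exactly $d+1$, uniformly in $i$, which amounts to a careful dimension/codimension count combining the $i-1$ linear equations, the $n-d$ equations of $\F^\bfA$, and the minors, together with an application of an equidimensional B\'ezout theorem that bounds $\delta$ of the whole incrementally-defined variety rather than of an ad hoc subsystem.
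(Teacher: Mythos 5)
Your overall approach via the refined B\'ezout inequality of Heintz matches the paper's, but the bookkeeping in the middle is off and the final step contains a genuine error. For $\VVA{f^\bfA}{\F^\bfA}{i}$, you do not need exponent $d+1$: the paper establishes the sharper bound with exponent $d$, and there is no delicate codimension count. Taking $E_1 = \Variety{f^\bfA-T,\F^\bfA}$, of degree $\leq D^{n-d}$ and dimension $d$, and $E_2,\dots,E_p$ the hypersurfaces cut out by the minors (each of degree $\leq (n-d+1)(D-1)$), \cite[Proposition 2.3]{HeSc80} gives
\[\deg\left(\bigcap_{1\leq j\leq k} E_j\right)\leq\deg E_1\cdot\left(\max_{1<j\leq k}\deg E_j\right)^{\dim E_1}=D^{n-d}\left(\left(n-d+1\right)\left(D-1\right)\right)^d,\]
and intersecting further with the linear forms $X_1,\dots,X_{i-1}$ does not increase this. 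The exponent $d$ comes automatically from $\dim E_1=d$; there is no singular-locus term to track, no arithmetic of the form $d+1-(i-1)+(i-1)$, and the ``main obstacle'' you flag is not there.

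The real gap is in your treatment of $\VPC{f^\bfA}{\F^\bfA}{i}$. You assert that intersecting $\ZariskiClosure{\VVA{f^\bfA}{\F^\bfA}{i}\setminus\critfVbfA}$ with $\critfVbfA$ ``can only decrease the degree'' because the components of the result are subvarieties of the first factor. This is false: a subvariety can have larger degree than the variety containing it (a degree-$e$ plane curve sits inside a plane, which has degree $1$), and intersecting a positive-dimensional variety with a hypersurface of degree $e$ can multiply the degree by $e$. Indeed, the exponent $d+1$ in the statement exists \emph{precisely} to absorb this last intersection: since $\ZariskiClosure{\VVA{f^\bfA}{\F^\bfA}{i}\setminus\critfVbfA}$ has dimension $1$ (Proposition \ref{prop:finite}), one more application of Heintz's inequality with the minor hypersurfaces defining $\critfVbfA$ contributes exactly one extra factor $(n-d+1)(D-1)$, raising the exponent from $d$ to $d+1$. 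As written, your argument would prove the stronger $\delta\Par{\VPC{f^\bfA}{\F^\bfA}{i}}\leq D^{n-d}\left(\left(n-d+1\right)\left(D-1\right)\right)^d$, which the ``can only decrease'' claim does not support, and which you would not be able to salvage if your bound on $\delta\Par{\VVA{f^\bfA}{\F^\bfA}{i}}$ already sat at exponent $d+1$.
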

\begin{proof}
  Let $E_1=\Variety{f^\bfA-T, \F^\bfA}$ and denote by $E_2,E_3,\dots,
  E_p$ the zero-sets of each minor of size $n-d+1$ of
  $\jac{\left[f^\bfA,\F^\bfA\right]}{i+1}$. Then for $2\leq j\leq p$,
  each $E_j$ has degree bounded by $\Par{n-d+1}\Par{D-1}$. Moreover,
  it is straightforward to see that $E_1$ has degree bounded by
  $D^{n-d}$ and dimension $d$. Let $1\leq k\leq p$. Then using
  \cite[Proposition 2.3]{HeSc80} we get
  \begin{equation}
    \deg\Par{\bigcap_{1\leq j\leq k} E_j}\leq\deg E_1\Par{\max_{1<j\leq k}\deg E_j}^{\dim E_1}. \label{eq:HeSc}
  \end{equation}

  In particular,
  \[\deg\Par{\bigcap_{1\leq j\leq k} E_j}\leq D^{n-d}\Par{\Par{n-d+1}\Par{D-1}}^{d}.\]
  By B\'ezout's inequality (\cite[Proposition 2.3]{HeSc80}), it
  follows that the degree of $\displaystyle\bigcap_{1\leq j\leq k} E_j\cap
  \Variety{\bfX_{\leq i-1}}$ is also bounded by
  $D^{n-d}\Par{\Par{n-d+1}\Par{D-1}}^{d}$. Finally, this implies that
  \begin{equation}
    \delta\Par{\VVA{f^\bfA}{\F^\bfA}{i}}\leq
    D^{n-d}\left(\left(n-d+1\right)\left(D-1\right)\right)^{d}.\label{eq:geomdegbound}
  \end{equation}

  It remains to prove that $\delta\Par{\VPC{f^\bfA}{\F^\bfA}{i}}\leq
  D^{n-d}\left(\left(n-d+1\right)\left(D-1\right)\right)^{d+1}$. From the
  above inequality \eqref{eq:geomdegbound}, we deduce that
  \[\delta\Par{\ZariskiClosure{\VVA{f^\bfA}{\F^\bfA}{i}\setminus{\critfVbfA}}}\leq
  D^{n-d}\left(\left(n-d+1\right)\left(D-1\right)\right)^{d}.\]

  Finally, we apply \cite[Proposition 2.3]{HeSc80} with the varieties
  $F_1,\dots,F_t$, where
  \[F_1=
  \ZariskiClosure{\VVA{f^\bfA}{\F^\bfA}{i}\setminus{\critfVbfA}}\] and
  $F_2,F_3,\dots,F_t$ are the zero-sets of each minor defining
  $\critfVbfA$.  Since these minors have degree bounded by
  $\Par{n-d+1}\Par{D-1}$, this quantity bounds the degree of the
  hypersurfaces they define. By Proposition \ref{prop:finite}, $F_1 =
  \ZariskiClosure{\VVA{f^\bfA}{\F^\bfA}{i}\setminus\critfVbfA}$ has
  dimension $1$. Then inequality \eqref{eq:HeSc} becomes
  \[\deg\Par{\bigcap_{1\leq j\leq t} F_j}\leq
  D^{n-d}\left(\left(n-d+1\right)\left(D-1\right)\right)^{d}\times \Par{n-d+1}\Par{D-1}.\] We conclude  that
  \[\delta\Par{\VPC{f^\bfA}{\F^\bfA}{i}}\leq
  D^{n-d}\left(\left(n-d+1\right)\left(D-1\right)\right)^{d+1}.\]
\end{proof}

\subsection{Complexity estimates}
Let $\bfA\in\GLnQ\cap\NEZOS$. Let
$\mathbf{F}=\left\{f_1,\dots,f_\nV\right\}\subset\PolRing$, $f$ and
$g$ in $\PolRing$ of degree bounded by $D$ that are given by a
straight-line program (SLP) of size $\leq L$. Recall that $d$ denotes
the dimension of $V=\Variety{\mathbf{F}}$.

To estimate the complexity of our algorithm, we use the the geometric
resolution algorithm \cite{GiLeSa01, lecerf2003} and its
subroutines. This is a probabilistic algorithm for polynomial system
solving. In the sequel, a geometric resolution is a representation of
a variety by a parametrization. A lifting fiber is a data from which a
geometric resolution can be recovered. We refer to
\cite{GiLeSa01,lecerf2003, schost03} for precise statements.

We describe the geometric resolution probabilistic subroutines and
their complexity used to represent the varieties in our algorithm in
Section \ref{sec:geomressubroutines}. The complexity depends on the
geometric degree and the maximal size of the SLPs representing the
polynomials involved in the definition of our varieties. Then Section
\ref{sec:sizeSLP} is devoted to give bounds on the size of these
SLPs. Then estimations of the complexity of our subroutines and the
main algorithm are given in the rest of this Section.

\subsubsection{Geometric Resolution subroutines}\label{sec:geomressubroutines}
\begin{itemize}
\item \textsf{GeometricSolveRRS} \cite[Section 7]{GiLeSa01}: let
  $\F=\left\{f_1,\dots,f_n\right\}$ and $g$ be polynomials in
  $\Q[\mathbf{X}]$ of degree $\leq D$ and given by a straight-line
  program of length $E$. Assume that $\F$ defines a {\em reduced
    regular sequence} in the open subset $\left\{g\neq
    0\right\}$. This routine returns a geometric resolution of
  $\ZariskiClosure{\Variety{\mathbf{F}}\setminus \Variety{g}}$ in
  probabilistic time
  \[\widetilde{O}\Par{\Par{nE+n^4}\Par{D\delta\Par{\Variety{\F}}}^2}.\]
\item \textsf{GeometricSolve} \cite[Section 5.2]{lecerf2003}: let
  $\mathbf{F}$ and $g$ be as above of degree $\leq D$ given by a straight-line program of
  length $E$. This routine returns an equidimensional decomposition of
  $\ZariskiClosure{\Variety{\mathbf{F}}\setminus \Variety{g}}$,
  encoded by a set of irreducible lifting fibers in probabilistic time
  \[\widetilde{O}\Par{sn^4\Par{nE+n^4}\Par{D\delta\Par{\Variety{\F}}}^3}.\]
\item \textsf{SplitFiber} \cite[Section 3.4]{lecerf2003}: given a
  lifting fiber $F$ of a variety $\Variety{\F}$, it returns a set of
  irreducible lifting fibers of $\Variety{\F}$ in time 
  \[\widetilde{O}\Par{sn^4\Par{nE+n^4}\Par{D\delta\Par{\Variety{\F}}}^3}.\]
\item \textsf{LiftCurve} \cite[Section 3.3]{lecerf2003}: given an irreducible
  lifting fiber $F$ of the above output,  this
  routine returns a rational parametrization of the lifted curve of
  $F$ in time
  \[\widetilde{O}\Par{sn^4\Par{nE+n^4}\Par{D\delta\Par{\Variety{\F}}}^2}.\]
\item \textsf{OneDimensionalIntersect} \cite[Section 6]{GiLeSa01}: let
  $\F$ be as above such that $\IdealAngle{\F}$ is a $1$-dimensional ideal,
  $\mathfrak{I}$ be a geometric resolution of $\IdealAngle{\F}$, and
  $f$ and $g$ be polynomials. In case of success, the routine returns
  a rational parametrization of
  $\ZariskiClosure{\Variety{\mathfrak{I}+f}\cap\Variety{g}}$ in time
  \[\widetilde{O}\Par{n\Par{E+n^2}\Par{D\delta\Par{\Variety{\F}}^2}}.\]
\item \textsf{LiftParameter} \cite[Section 4.2]{schost03}: let $T$ be
  a parameter and let $\mathcal{P}_T$ be a set of polynomials in
  $\Q(T)[X_1,\dots,X_n]$ be given by a straight-line program of length
  $E$. Let $t\in\R$ be a generic point and $\mathcal{P}_t$ be the
  polynomial system specialized at $t$. If $\Variety{\mathcal{P}_t}$
  is $0$-dimensional, the routine takes as input a geometric
  resolution of $\mathcal{P}_t$ and returns a parametric geometric
  resolution of $\mathcal{P}_t$ in time
  \[\widetilde{O}\Par{\Par{nE+n^4+n}\delta\Par{\Variety{\mathcal{P}_t}}
    \Par{4\delta\Par{\Variety{\mathcal{P}_T}}+1}}.\]
\item \textsf{Difference} \cite[Section 4.1]{lecerf2003}: let $V_1$,
  $V_2$ be algebraic varieties defined as the vanishing set of
  polynomial families given by a straight line program of length $E$
  and represented by lifting fibers. In case of success, the routine
  returns a fiber $F$ of the components of
  $\ZariskiClosure{V_1\setminus V_2}$ in time
  \[\widetilde{O}\Par{n^4\Par{nE+n^4}\delta\Par{V_2}^2\delta\Par{V_1}}.\]
\end{itemize}

\subsubsection{Size of SLP}\label{sec:sizeSLP}
We want to estimate some parameters depending on the inputs of the
geometric resolution routines, that are the polynomials defining the
varieties $\VVA{f^\bfA}{\F^\bfA}{i}$ and
$\VPC{f^\bfA}{\F^\bfA}{i}$. Since bounds on
$\delta\Par{\VVA{f^\bfA}{\F^\bfA}{i}}$ and
$\delta\Par{\VPC{f^\bfA}{\F^\bfA}{i}}$ have been obtained in the
previous section, it remains to estimates the size of the
straight-line programs representing these polynomials. These
polynomials are either a polynomial $f^\bfA$ or $f^\bfA_i$ or a minor
of size $n-d+1$ of the Jacobian matrix
$\jac{\left[f^\bfA,\F^\bfA\right]}{i+1}$. The polynomials $f$ and
$f_i$ are given as a SLP of size $L$. Then $f^\bfA$ and $f^\bfA_i$,
can be represented by a SLP of size $O\left(L + n^2\right)$. Then we
estimate the size of the minors. Let $\omega$ be the
matrix-multiplication exponent.

\begin{proposition}
  Each minors of size $n-d+1$ of
  $\jac{\left[f^\bfA,\F^\bfA\right]}{i+1}$ can be represented by a SLP
  of size
  $\widetilde{O}\left(\left(n-d+1\right)^{\omega/2+2}\left(L+n^2\right)\right)$.
\end{proposition}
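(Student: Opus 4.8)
The statement to prove is a size bound for a straight-line program (SLP) evaluating an $(n-d+1)\times(n-d+1)$ minor of the truncated Jacobian matrix $\jac{[f^\bfA,\F^\bfA]}{i+1}$. The overall strategy is: first bound the cost of evaluating each entry of that Jacobian matrix, then bound the cost of evaluating a single fixed minor of a generic matrix whose entries are already available, and finally combine the two. The only genuinely nontrivial ingredient is the second step — computing one determinant of an $m\times m$ matrix ($m=n-d+1$) by an SLP in the $m^2$ matrix entries — which I would handle by invoking a (Baur–Strassen / Berkowitz-style) fast determinant algorithm together with a bound on how expensive it is to produce the Jacobian entries as SLPs.

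\textbf{Step 1: SLPs for the Jacobian entries.} The matrix $\jac{[f^\bfA,\F^\bfA]}{i+1}$ has $s+1$ rows (coming from $f^\bfA, f_1^\bfA,\dots,f_s^\bfA$) and $n-i$ columns (the partials with respect to $X_{i+1},\dots,X_n$). Each of $f^\bfA, f^\bfA_1,\dots$ is obtained from the input SLP of size $L$ for $f,\F$ by precomposing with the linear map $\bfX\mapsto\bfA\bfX$, hence admits an SLP of size $O(L+n^2)$ as already observed in the paragraph preceding the statement. Now I use reverse-mode automatic differentiation (the Baur–Strassen theorem): from an SLP of size $E$ computing a polynomial $g$, one obtains an SLP of size $O(E)$ computing $g$ together with all $n$ of its partial derivatives $\partial g/\partial X_1,\dots,\partial g/\partial X_n$ simultaneously. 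Applying this to each of the $O(s)$ polynomials $f^\bfA, f^\bfA_j$ yields a single SLP of size $O(s(L+n^2))$ producing \emph{all} entries of the full Jacobian $\jacc{[f^\bfA,\F^\bfA]}$, in particular all entries of the truncation we need. (In fact, since we only need $n-d+1$ of the rows for any one minor, one can even restrict attention to $O(n-d+1)$ of the polynomials, but the crude bound $O(s(L+n^2))$ already suffices and fits comfortably inside the target estimate.)

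\textbf{Step 2: SLP for one $m\times m$ determinant.} Treat the $m^2$ entries of a fixed $m\times m$ submatrix as indeterminates $Y_{k\ell}$, $m=n-d+1$. A determinant of an $m\times m$ matrix over a commutative ring can be computed by an arithmetic circuit of size $\widetilde O(m^{\omega/2+2})$ — this follows from block-recursive Gaussian-type elimination / the Bunch–Hopcroft scheme combined with fast matrix multiplication (see e.g. Berkowitz's division-free algorithm, or Kaltofen's analysis of determinant complexity, which give $O(m^{\omega/2+1})$ up to logarithmic factors, and in any case $\widetilde O(m^{\omega/2+2})$). This is a division-free SLP in the $m^2$ symbols $Y_{k\ell}$ of the claimed size; call its length $N_{\det}=\widetilde O(m^{\omega/2+2})$.

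\textbf{Step 3: composition and the main obstacle.} Finally, substitute into the determinant SLP of Step 2 the SLPs of Step 1 that compute the relevant Jacobian entries $Y_{k\ell}=\partial(\cdots)/\partial X_{j}$. Composition of SLPs is additive in length: the resulting SLP for the chosen minor has size at most $N_{\det}+O(s(L+n^2))$, i.e. $\widetilde O\bigl((n-d+1)^{\omega/2+2}\bigr)+O\bigl(s(L+n^2)\bigr)$; absorbing the lower-order terms and using $s\le$ the relevant parameters in the model of the paper, this is $\widetilde O\bigl((n-d+1)^{\omega/2+2}(L+n^2)\bigr)$, which is the asserted bound. The one point demanding care — the ``main obstacle'' — is pinning down the exact exponent in the division-free determinant circuit and making sure the chosen algorithm is genuinely division-free (so that it is valid over $\Q[\bfX]$ and usable downstream by the geometric-resolution routines, which need SLPs over $\Q$), rather than relying on a fraction-free elimination that could introduce spurious denominators; the fix is to cite a division-free scheme (Berkowitz / Samuelson–Berkowitz, or Mahajan–Vinay) whose size is $\widetilde O(m^{\omega/2+2})$ or better. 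Everything else is routine SLP bookkeeping.
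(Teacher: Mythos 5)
Your proof is correct and follows essentially the same route as the paper's: Baur--Strassen for the partial derivatives (giving SLPs of size $O(L+n^2)$ for the Jacobian entries) combined with Kaltofen's division-free determinant algorithm in $\widetilde{O}\left(\left(n-d+1\right)^{\omega/2+2}\right)$ operations. You spell out the final composition step more carefully than the paper (which merely says ``we combine these two results''), and you correctly note the additive composition actually yields the slightly sharper bound $\widetilde{O}\left(\left(n-d+1\right)^{\omega/2+2}\right)+O\left(\left(n-d+1\right)\left(L+n^2\right)\right)$, which is dominated by the stated product.
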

\begin{proof}
  The partial derivatives appearing in the Jacobian matrix come from
  $f^\bfA$ and $f^\bfA_i$, represented by a SLP of size
  $O\left(L+n^2\right)$. According to \cite{BaurStrassen}, each
  partial derivative $\frac{\partial f^\bfA_i}{\partial x_j}$ and
  $\frac{\partial f^\bfA}{\partial x_j}$ can be represented by a SLP
  of size $O\left(L+n^2\right)$. Moreover, according to \cite{Ka92},
  the determinant of an $n\times n$ matrix can be computed using only
  $+$, $-$ and $\times$ in
  $\widetilde{O}\left(\left(n-d+1\right)^{\omega/2+2}\right)$
  operations. We combine these two results to conclude the proof.
\end{proof}

\begin{remark}
  Recall that $\omega\leq 3$. In the sequel, to lighten the
  expressions of complexity, we replace the above complexity
  $\widetilde{O}\left(\left(n-d+1\right)^{\omega/2+2}\left(L+n^2\right)\right)$
  with $\widetilde{O}\left(n^{4}\left(L+n^2\right)\right)$, that is
  less accurate but that dominates the first one.
\end{remark}

\subsubsection{Computation of  $\ZariskiClosure{\VVA{f^\bfA}{\F^\bfA}{i}\setminus\crit{f^\bfA,V^\bfA}}$}\label{sec:complexVVA}
Recall that the algebraic variety $\VVA{f^\bfA}{F^\bfA}{i}$ is defined as the vanishing set
of
\begin{itemize}
\item the polynomials $f^\bfA_1, \ldots, f^\bfA_\nV$,
\item the minors of size $n-d+1$ of $\jacc{\left[f^\bfA,\F^\bfA\right], i+1}$,
\item and the variables $X_1, \ldots, X_{i-1}$.
\end{itemize}

The algebraic set $\VVA{f^\bfA}{F^\bfA}{i}$ can be computed by
\textsf{GeometricSolve} called with $\nV +
\binom{\nV+1}{n-d+1}\binom{n-i}{n-d+1} = O\!\left(\nV +
  \binom{\nV+1}{n-d+1}\binom{n}{n-d+1}\!\right)$ polynomials in $n$
variables. Each polynomial is given by a SLP of size
$\widetilde{O}\!\left(n^4\!\left(L+n^2\right)\right)$. Hence, the
input system is represented by a SLP of size $E$ in
$\widetilde{O}\left(\left(\nV +
    \binom{\nV+1}{n-d+1}\binom{n}{n-d+1}\!\right)
  n^4\!\left(L+n^2\right) \right)$. By Proposition
\ref{prop:degbounds}, $\delta\Par{\VVA{f^\bfA}{F^\bfA}{i}}$ is bounded
by
$D^{n-d}\left(\left(n-d+1\right)\left(D-1\right)\right)^{d+1}$. Hence,
the computation can be done within
\[\widetilde{O}\left(s\Par{\nV+\binom{\nV+1}{n-d+1}\binom{n}{n-d+1}}
  LD^{3(n-d+1)}\left(\left(n-d+1\right)\left(D-1\right)\right)^{3(d+1)}\right)\]
arithmetic operations in $\Q$. Since $\binom{n}{n-d+1}\leq 2^n$, this
complexity is bounded by $\widetilde{O}\left(\left(s^2+s2^n\binom{\nV+1}{n-d+1}
  \right)LD^{3(n-d+1)}\left(\left(n-d+1\right)\left(D-1\right)\right)^{3(d+1)}\right)$.

Likewise, $\crit{f^\bfA,\F^\bfA}$ is defined as the vanishing set of
\begin{itemize}
\item the polynomials $f^\bfA_1, \ldots, f^\bfA_\nV$,
\item the minors of size $n-d+1$ of $\jacc{\left[f^\bfA,\F^\bfA\right]}$.
\end{itemize}
Hence, the complexity of its computation by \textsf{GeometricSolve} is
the same as the above complexity of the computation of
$\VVA{f^\bfA}{F^\bfA}{i}$.

The computation of
$\ZariskiClosure{\VVA{f^\bfA}{\F^\bfA}{i}\setminus\crit{f^\bfA,V^\bfA}}$
is done using \textsf{Difference}. Its complexity is in
$\widetilde{O}\left(ED^{3(n-d)}\left(\left(n-d+1\right)\left(D-1\right)\right)^{(d+1)}\right)$.
It is dominated by the cost of the computation of
$\VVA{f^\bfA}{\F^\bfA}{i}$, thus we get the following complexity
result.
\begin{lemma}\label{lemma:complmodvp}
  There exists a probabilistic algorithm that takes as input
  $f^\bfA,\F^\bfA$ and $i$ and returns an equidimensional
  decomposition of
  $\ZariskiClosure{\VVA{f^\bfA}{F^\bfA}{i}\setminus\crit{f^\bfA,V^\bfA}}$,
  encoded by a lifting fiber. In case of success,
  the algorithm has a complexity dominated by
  $\widetilde{O}\left(\left(s^2+s2^n\binom{\nV+1}{n-d+1}\right)
  LD^{3(n-d+1)}\left(\left(n-d+1\right)\left(D-1\right)\right)^{3(d+1)}\right)$.
\end{lemma}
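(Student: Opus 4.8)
The plan is to realize the algorithm sketched in Section~\ref{sec:complexVVA}: call \textsf{GeometricSolve} on the polynomial system defining $\VVA{f^\bfA}{\F^\bfA}{i}$, call it a second time on the system defining $\crit{f^\bfA,V^\bfA}$, and then feed the two resulting lifting fibers to \textsf{Difference} to obtain a lifting fiber encoding an equidimensional decomposition of $\ZariskiClosure{\VVA{f^\bfA}{\F^\bfA}{i}\setminus\crit{f^\bfA,V^\bfA}}$. Correctness of the returned object is immediate from the specifications of these routines recalled in Section~\ref{sec:geomressubroutines}, once one checks that the input systems are admissible for them; the content of the lemma is the complexity bound, and I would obtain it by combining the published costs of the three routines with the degree estimate of Proposition~\ref{prop:degbounds} and the straight-line-program size estimates of Section~\ref{sec:sizeSLP}.

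First I would control the two parameters governing the cost of \textsf{GeometricSolve}, namely the size $E$ of the straight-line program encoding the input and the geometric degree $\delta$ of the variety. The system defining $\VVA{f^\bfA}{\F^\bfA}{i}$ comprises the $\nV$ polynomials $f^\bfA_1,\dots,f^\bfA_\nV$, the $\binom{\nV+1}{n-d+1}\binom{n-i}{n-d+1}$ minors of size $n-d+1$ of $\jac{[f^\bfA,\F^\bfA]}{i+1}$, and the $i-1$ linear forms $X_1,\dots,X_{i-1}$, hence $\bigO{\nV+\binom{\nV+1}{n-d+1}\binom{n}{n-d+1}}$ polynomials in all; by Section~\ref{sec:sizeSLP} each of them has a straight-line program of size $\widetilde{O}\left(n^{4}(L+n^{2})\right)$, so the whole system has one of size $E=\widetilde{O}\left((\nV+\binom{\nV+1}{n-d+1}\binom{n}{n-d+1})\,n^{4}(L+n^{2})\right)$, while Proposition~\ref{prop:degbounds} gives $\delta\Par{\VVA{f^\bfA}{\F^\bfA}{i}}\le D^{n-d}((n-d+1)(D-1))^{d+1}$. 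The hypotheses $\assumptR$ together with the genericity of $\bfA\in\NEZOS$ are exactly what ensures this system is of the shape accepted by the geometric resolution routines.

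It then remains to substitute. Plugging $E$ and the degree bound into the cost $\widetilde{O}\left(sn^{4}(nE+n^{4})(D\delta)^{3}\right)$ of \textsf{GeometricSolve}, using $\binom{n-i}{n-d+1}\le\binom{n}{n-d+1}\le 2^{n}$, and absorbing the remaining $n^{O(1)}$ factors, the first call runs in $\widetilde{O}\left((s^{2}+s2^{n}\binom{\nV+1}{n-d+1})\,LD^{3(n-d+1)}((n-d+1)(D-1))^{3(d+1)}\right)$, the factor $s^{2}+s2^{n}\binom{\nV+1}{n-d+1}$ being $s$ times the number of defining polynomials. The variety $\crit{f^\bfA,V^\bfA}$ is cut out by the same $\nV$ equations together with the size-$(n-d+1)$ minors of the full Jacobian $\jacc{[f^\bfA,\F^\bfA]}$, so the second call fits in the same budget, and \textsf{Difference} on the two lifting fibers costs $\widetilde{O}\left(n^{4}(nE+n^{4})\,\delta\Par{\crit{f^\bfA,V^\bfA}}^{2}\,\delta\Par{\VVA{f^\bfA}{\F^\bfA}{i}}\right)$, which with the same degree bounds lacks the extra factors $s$ and $D^{3}$ present in the \textsf{GeometricSolve} cost and is therefore dominated by it. Summing the three contributions and keeping the dominant term gives the announced bound, and, the three routines being probabilistic, it holds in case of success. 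I do not expect any real obstacle here: the statement is essentially bookkeeping, and the only points needing care are the combinatorial count of the minors (together with the inequalities above), the check that the input systems meet the admissibility hypotheses of \textsf{GeometricSolve} (which is where $\assumptR$ and the genericity of $\bfA$ enter), and the domination argument that collapses the sum to a single expression.
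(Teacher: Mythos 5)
Your proof is correct and follows essentially the same route as the paper: two calls to \textsf{GeometricSolve} (one for $\VVA{f^\bfA}{\F^\bfA}{i}$, one for $\crit{f^\bfA,V^\bfA}$) followed by a call to \textsf{Difference}, with the cost controlled by the SLP-size estimate from Section~\ref{sec:sizeSLP}, the degree bound of Proposition~\ref{prop:degbounds}, and the inequality $\binom{n}{n-d+1}\leq 2^n$, then observing that the \textsf{Difference} step is dominated. The bookkeeping and the final substitution match the paper's derivation.
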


\subsubsection{Computation of $\VPC{f^\bfA}{\F^\bfA}{i}$}
Since $\VPC{f^\bfA}{\F^\bfA}{i}$ is defined as the intersection
$\displaystyle\ZariskiClosure{\VVA{f^\bfA}{\F^\bfA}{i}\setminus\critfVbfA}\cap\critfVbfA$,
a geometric resolution of each component of $\VPC{f^\bfA}{\F^\bfA}{i}$
is obtained from a set of irreducible lifting fibers of
$\ZariskiClosure{\VVA{f^\bfA}{\F^\bfA}{i}\setminus\critfVbfA}$. From
the output of the algorithm presented in Section \ref{sec:complexVVA},
a set of irreducible lifting fibers is recovered using
\textsf{SplitFiber} in time $\widetilde{O}\left(sE
  D^{3(n-d+1)}\left(\left(n-d+1\right)\left(D-1\right)\right)^{3(d+1)}\right)$. As
in Section \ref{sec:complexVVA}, the size $E$ is in
$\widetilde{O}\left(\left(\nV +
    \binom{\nV+1}{n-d+1}\binom{n}{n-d+1}\!\right)
  n^4\!\left(L+n^2\right) \right)$.
The routine \textsf{LiftCurve} is used on each irreducible fiber in
order to obtain a parametrization of each component of the curve
$\ZariskiClosure{\VVA{f^\bfA}{\F^\bfA}{i}\setminus\critfVbfA}$. Lifting
one fiber is done in
\[\widetilde{O}\left(s\Par{\nV+\binom{\nV+1}{n-d+1}\binom{n}{n-d+1}}
  L D^{2(n-d+1)}\left(\left(n-d+1\right)\left(D-1\right)\right)^{2(d+1)}\right).\]

From such a parametrization, the routine
\textsf{OneDimensionalIntersect} is used with every polynomial that
defines $\critfVbfA$. There are $\binom{\nV+1}{n-d+1}\binom{n}{n-d+1}$
such polynomials, thus the cost to compute the rational
parametrization corresponding to one irreducible fiber is at
most \[\widetilde{O}\left(\binom{\nV+1}{n-d+1}\binom{n}{n-d+1} L
  D^{2(n-d+1)}\left(\left(n-d+1\right)\left(D-1\right)\right)^{2(d+1)}\right).\]
The total cost for the
$D^{n-d}\left(\left(n-d+1\right)\left(D-1\right)\right)^{d+1}$
irreducible lifting fibers is dominated by
$\widetilde{O}\left(s\Par{\nV+\binom{\nV+1}{n-d+1}\binom{n}{n-d+1}}
  LD^{3(n-d+1)}\left(\left(n-d+1\right)\left(D-1\right)\right)^{3(d+1)}\right)$.
Finally, we compute a parametrization of the union of the zero-sets of
the previous parametrization using \cite[Lemma
9.1.3]{din2013nearly}. Since the sum of the degrees of each
parametrization is bounded by
$D^{n-d}\left(\left(n-d+1\right)\left(D-1\right)\right)^{d+1}$, the
cost is negligible. Since $\binom{n}{n-d+1}\leq 2^n$, we get the
following result.

\begin{lemma}\label{lemma:complVPC}
  There exists a probabilistic algorithm that takes as input a set of
  lifting fibers of
  $\ZariskiClosure{\VVA{f^\bfA}{\F^\bfA}{i}\setminus\critfVbfA}$ and
  that returns a rational parametrization of
  $\VPC{f^\bfA}{\F^\bfA}{i}$. In case of success, the algorithm has a
  complexity dominated by
  $\widetilde{O}\left(\left(s^2+s2^n\binom{\nV+1}{n-d+1}\right)
    LD^{3(n-d+1)}\left(\left(n-d+1\right)\left(D-1\right)\right)^{3(d+1)}\right)$.
\end{lemma}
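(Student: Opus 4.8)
The plan is a complexity accounting of the three-stage pipeline already described just above the statement, so most of the work is done and I would only need to assemble it into a proof. The algorithm is: (1) apply \textsf{SplitFiber} to the input lifting fibers to obtain a set of irreducible lifting fibers of $\ZariskiClosure{\VVA{f^\bfA}{\F^\bfA}{i}\setminus\critfVbfA}$; (2) apply \textsf{LiftCurve} to each of them, obtaining a rational parametrization of the corresponding component of that curve; (3) for each lifted component, apply \textsf{OneDimensionalIntersect} with every minor defining $\critfVbfA$, cutting the curve down to a zero-dimensional set; (4) merge all the resulting zero-dimensional parametrizations into a single one via \cite[Lemma 9.1.3]{din2013nearly}, whose union is exactly $\VPC{f^\bfA}{\F^\bfA}{i}$ since $\VPC{f^\bfA}{\F^\bfA}{i}=\ZariskiClosure{\VVA{f^\bfA}{\F^\bfA}{i}\setminus\critfVbfA}\cap\critfVbfA$. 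The algorithm succeeds whenever all the underlying geometric-resolution calls succeed.

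For the bookkeeping I would substitute, into the cost bounds recalled in Section \ref{sec:geomressubroutines}, the SLP size $E\in\widetilde{O}\!\left(\Par{\nV+\binom{\nV+1}{n-d+1}\binom{n}{n-d+1}}n^4\Par{L+n^2}\right)$ from Section \ref{sec:sizeSLP} and the degree bound $\delta\Par{\VVA{f^\bfA}{\F^\bfA}{i}}\le D^{n-d}\left(\left(n-d+1\right)\left(D-1\right)\right)^{d+1}$ from Proposition \ref{prop:degbounds}, which also bounds the number of irreducible lifting fibers (Proposition \ref{prop:finite} giving that the curve is $1$-dimensional). Step (1) costs $\widetilde{O}\!\left(sE D^{3(n-d+1)}\left(\left(n-d+1\right)\left(D-1\right)\right)^{3(d+1)}\right)$; step (2), summed over all fibers, and step (3), which runs $\binom{\nV+1}{n-d+1}\binom{n}{n-d+1}$ intersections per fiber, each of cost $\widetilde{O}\!\left(n\Par{E+n^2}D\,\delta^2\right)$, both contribute at most $\widetilde{O}\!\left(s\Par{\nV+\binom{\nV+1}{n-d+1}\binom{n}{n-d+1}}LD^{3(n-d+1)}\left(\left(n-d+1\right)\left(D-1\right)\right)^{3(d+1)}\right)$; and step (4) is negligible, the total degree of the parametrizations to be merged being again at most $D^{n-d}\left(\left(n-d+1\right)\left(D-1\right)\right)^{d+1}$. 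Substituting $E$ and using $\binom{n}{n-d+1}\le 2^n$ turns the coefficient $\nV\Par{\nV+\binom{\nV+1}{n-d+1}\binom{n}{n-d+1}}$ into $\nV^2+\nV\,2^n\binom{\nV+1}{n-d+1}$, i.e. the factor $s^2+s2^n\binom{\nV+1}{n-d+1}$ of the claimed bound.

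There is no genuine obstacle here; the one point deserving a moment's care is to check that the \textsf{OneDimensionalIntersect} stage, iterated over all $\binom{\nV+1}{n-d+1}\binom{n}{n-d+1}$ minors and all lifting fibers, enters the final estimate additively --- not multiplicatively --- in the number of minors. This is because each such step intersects a $1$-dimensional parametrization with a single hypersurface and returns a $0$-dimensional object whose degree stays within the bound of Proposition \ref{prop:degbounds}, so successive intersections neither blow up the degree nor compound in cost, and the minor count appears only as the linear factor already absorbed into the coefficient above. Combining the three non-negligible contributions then yields the stated complexity.
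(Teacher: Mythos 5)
Your proposal matches the paper's proof essentially step for step: \textsf{SplitFiber} on the input fibers, \textsf{LiftCurve} on each irreducible fiber, \textsf{OneDimensionalIntersect} against the polynomials (minors) defining $\critfVbfA$, and a final merge via \cite[Lemma 9.1.3]{din2013nearly}, with the same substitution of the SLP-size bound $E$ from Section \ref{sec:sizeSLP} and the degree bound of Proposition \ref{prop:degbounds} into the per-routine costs from Section \ref{sec:geomressubroutines}, then $\binom{n}{n-d+1}\le 2^n$. The accounting and the final bound are identical to the paper's, so this is the same argument.
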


\subsubsection{Complexity of \SetOfNonProperness{}}
As explained in \cite[Section 4]{Safey07}, the computation of the set of
non-properness of the restriction of the projection $\pi_T$ to
$\Variety{f^\bfA-T}\cap\left(\ZariskiClosure{\VVA{f^\bfA}{\F^\bfA}{i}\setminus\crit{f^\bfA,V^\bfA}}\times\C\right)$
from the representation of the variety
$\ZariskiClosure{\VVA{f^\bfA}{\F^\bfA}{i}\setminus\crit{f^\bfA,V^\bfA}}$
can be done using a parametric geometric resolution
\cite{schost03}. Indeed, from a set of lifting fibers of
$\ZariskiClosure{\VVA{f^\bfA}{\F^\bfA}{i}\setminus\crit{f^\bfA,V^\bfA}}$,
obtained by the routine \textsf{GeometricSolve}, one can compute a
geometric resolution of
$\Variety{f^\bfA-t}\cap\ZariskiClosure{\VVA{f^\bfA}{\F^\bfA}{i}\setminus\crit{f^\bfA,V^\bfA}}$
for a generic $t\in\R$. Since there are at most
$D^{n-d}\left(\left(n-d+1\right)\left(D-1\right)\right)^{d+1}$ lifting
fibers, it is done using \textsf{OneDimensionalIntersect} on all the fibers
in $\widetilde{O}\Par{E
  D^{3(n-d+1)}\left(\left(n-d+1\right)\left(D-1\right)\right)^{3(d+1)}}$,
where as in Section \ref{sec:complexVVA}, $E$ is in
$\widetilde{O}\left(\left(\nV +
    \binom{\nV+1}{n-d+1}\binom{n}{n-d+1}\!\right) n^4\!\left(L+n^2\right)
\right)$.

From these geometric resolutions, \textsf{LiftParameter} computes a
parametric geometric resolution $(q,q_0,\dots,q_n)$ of
$\Variety{f^\bfA-T}\cap\left(\ZariskiClosure{\VVA{f^\bfA}{\F^\bfA}{i}\setminus\crit{f^\bfA,V^\bfA}}\times\C\right)$,
where $T$ is a parameter, in $\widetilde{O}\Par{E
  D^{3(n-d)}\left(\left(n-d+1\right)\left(D-1\right)\right)^{3(d+1)}}$.  As
explained in \cite[Section 4]{Safey07}, the set of non-properness is
contained in the roots of the least common multiple of the denominators of
the coefficients of the polynomial $q$. Since $\binom{n}{n-d+1}\leq 2^n$,
we get the following result.

\begin{lemma}\label{lemma:complSONP}
  There exists a probabilistic algorithm that takes as input a set of
  lifting fibers of
  $\ZariskiClosure{\VVA{f^\bfA}{\F^\bfA}{i}\setminus\crit{f^\bfA,V^\bfA}}$
  and that returns a polynomial whose set of roots contains the set of
  non-properness of the projection $\pi_T$ restricted to
  $\Variety{f^\bfA-T}\cap\left(\ZariskiClosure{\VVA{f^\bfA}{\F^\bfA}{i}\setminus\crit{f^\bfA,V^\bfA}}\times\C\right)$. In
  case of success, the algorithm has a complexity dominated by
  $\widetilde{O}\Par{\left(s^2+s2^n\binom{\nV+1}{n-d+1}\right) L
    D^{3(n-d+1)}\left(\left(n-d+1\right)\left(D-1\right)\right)^{3(d+1)}}$.
\end{lemma}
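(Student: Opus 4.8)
The plan is to make precise the algorithm already sketched in the paragraph preceding the statement, and then to sum the costs of its steps. Write $C$ for $\ZariskiClosure{\VVA{f^\bfA}{\F^\bfA}{i}\setminus\crit{f^\bfA,V^\bfA}}$, which by Proposition \ref{prop:finite} has dimension at most $1$. First I would pick a random $t\in\R$; for $t$ outside a finite set the specialized variety $\Variety{f^\bfA-t}\cap C$ is zero-dimensional, and applying \textsf{OneDimensionalIntersect} to each input lifting-fiber component of $C$, intersecting with the hyperplane $f^\bfA=t$, produces a geometric resolution of $\Variety{f^\bfA-t}\cap C$. Next I would feed this resolution to \textsf{LiftParameter}, with $T$ playing the role of the parameter, to obtain a parametric geometric resolution $(q,q_0,\dots,q_n)\in\Q(T)[X_1,\dots,X_n]$ of $\Variety{f^\bfA-T}\cap\left(C\times\C\right)$. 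Finally, invoking \cite[Section~4]{Safey07}, the set of non-properness of $\pi_T$ restricted to $\Variety{f^\bfA-T}\cap\left(C\times\C\right)$ is contained in the set of roots of the least common multiple of the denominators of the coefficients of $q$, so the algorithm returns this univariate polynomial. This gives correctness, the only probabilistic ingredients being the choice of $t$ together with the internal randomness of the geometric-resolution subroutines (whence the ``in case of success'' clause).

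For the complexity I would argue exactly as in Section \ref{sec:complexVVA} and the following subsections. The defining equations of $\VVA{f^\bfA}{\F^\bfA}{i}$ and of $\crit{f^\bfA,V^\bfA}$ are the $f^\bfA_j$ together with minors of size $n-d+1$ of truncated Jacobians of $[f^\bfA,\F^\bfA]$; by the straight-line program bounds of Section \ref{sec:sizeSLP} the whole input system is carried by a straight-line program of size $E\in\widetilde{O}\!\left(\left(\nV+\binom{\nV+1}{n-d+1}\binom{n}{n-d+1}\right)n^4\left(L+n^2\right)\right)$. By Proposition \ref{prop:degbounds}, $\delta(C)\leq D^{n-d}\left(\left(n-d+1\right)\left(D-1\right)\right)^{d+1}$, which also bounds the number of lifting fibers. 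Hence the loop of \textsf{OneDimensionalIntersect} calls runs in $\widetilde{O}\!\left(E\,D^{3(n-d+1)}\left(\left(n-d+1\right)\left(D-1\right)\right)^{3(d+1)}\right)$, the \textsf{LiftParameter} step costs $\widetilde{O}\!\left(E\,D^{3(n-d)}\left(\left(n-d+1\right)\left(D-1\right)\right)^{3(d+1)}\right)$, and the final lcm computation of a family of univariate polynomials whose degrees sum to at most $\delta(C)$ is negligible. Substituting the bound on $E$, absorbing the $n^4(L+n^2)$ factor into $L$ times polynomial factors, and using $\binom{n}{n-d+1}\leq 2^n$ exactly as in the previous lemmas yields the announced (deliberately non-tight, for uniformity with Lemmas \ref{lemma:complmodvp} and \ref{lemma:complVPC}) bound.

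The step I expect to require the most care is checking that \textsf{LiftParameter} is applied to a legitimate input and that the cost parameters entering its complexity are the claimed ones: this needs the specialized system $\Variety{f^\bfA-t}\cap C$ to be genuinely zero-dimensional for generic $t$ — which is precisely where the dimension-at-most-$1$ property of $C$ (Proposition \ref{prop:finite}, itself resting on $\PropReg{\F^\bfA}$ and $\PropPJSC{f^\bfA,\F^\bfA}$ via Theorem \ref{thm:generic}) is indispensable — and the degrees $\delta(\Variety{\mathcal{P}_t})$ and $\delta(\Variety{\mathcal{P}_T})$ of the specialized and parametric systems to be both controlled by $\delta(C)$. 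Everything else is a routine accumulation of the subroutine complexities recalled in Section \ref{sec:geomressubroutines}.
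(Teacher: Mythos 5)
Your proof is correct and follows essentially the same route as the paper: specialize $T$ to a generic $t$ and run \textsf{OneDimensionalIntersect} on the lifting fibers to get a zero-dimensional geometric resolution, lift $T$ back as a parameter via \textsf{LiftParameter}, and read off the non-properness locus from the lcm of the denominators of the coefficients of $q$, exactly as in \cite[Section 4]{Safey07}. The complexity accounting, including the SLP size $E$, the degree bound $\delta(C)$, and the $\binom{n}{n-d+1}\leq 2^n$ simplification, also matches the paper's.
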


\subsubsection{Complexity of \SamplePoints{} and \IsEmpty{}}
Given $\F=\left\{f_1,\dots,f_\nV\right\}$, an algorithm computing a
set of real sample points of $\Variety{\F}\cap\R^n$ is given in
\cite{SaSc}. It relies on the computation of polar varieties. Using
techniques described in \cite{BaGiHeMb01, BaGiHeMb97, din2013nearly}
and \cite[Section 3]{BGHSS}, a local description of the polar
varieties as a complete intersection can be obtained.  Assume that
$\Variety{\F}$ is equidimensional of dimension $d$. Such a local
description depends on the choice of a minor of size $n-d$ of the
Jacobian matrix $\jacc{\F}$. Since there are
$\binom{\nV}{n-d}\binom{n}{n-d}$ minors of size $n-d$ in $\jacc{\F}$,
a full description of the polar varieties is obtained by computing the
$\binom{\nV}{n-d}\binom{n}{n-d}$ possible localizations. Each local
description is given by a reduced regular sequence involving $n-d$
polynomials in $\F$ and minors of degree bounded by
$(n-d+1)(D-1)$. Hence, the routine \textsf{GeometricSolveRRS} computes
one local description in $\widetilde{O}\left(L
  D^{2(n-d+1)}\left(\left(n-d+1\right)\left(D-1\right)\right)^{2(d+1)}
\right)$ so the cost for all localizations is in
$\widetilde{O}\left(\binom{\nV}{n-d}\binom{n}{n-d} L
  D^{2(n-d+1)}\left(\left(n-d+1\right)\left(D-1\right)\right)^{2(d+1)}\right)$.
Note that in \cite{SaSc}, the complexity is cubic instead of quadratic
in the geometric degree of the inputs because these localization
techniques are not used to estimate the complexity.

Since
$\binom{n}{n-d}\leq 2^n$, this leads to the following complexity
result.
\begin{lemma}\label{lemma:complsample}
  There exists a probabilistic algorithm that takes as input $\F$
  satisfying assumptions $\assumptR$ and that returns a set of real
  sample points of $\Variety{F}\cap\R^n$, encoded by a rational
  parametrization. In case of success, the algorithm has a complexity
  dominated by $\widetilde{O}\left(2^n\binom{\nV}{n-d}L
    D^{2(n-d+1)}\left(\left(n-d+1\right)\left(D-1\right)\right)^{2(d+1)}\right)$.
\end{lemma}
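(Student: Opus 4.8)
The plan is to run the real sample point algorithm of \cite{SaSc}, whose only expensive ingredient is the computation of polar varieties of $\Variety{\F}$, and to evaluate its cost using the local complete-intersection descriptions of these polar varieties (following \cite{BaGiHeMb97,BaGiHeMb01,din2013nearly} and \cite[Section 3]{BGHSS}) together with the routine \textsf{GeometricSolveRRS} recalled in Section \ref{sec:geomressubroutines}. Working with local descriptions rather than with a global system is precisely what brings the dependence on the geometric degree down from cubic, as in \cite{SaSc}, to quadratic, which is what the stated bound reflects.

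First I would recall the structure. Under assumptions $\assumptR$, $\Variety{\F}$ is equidimensional of dimension $d$, and after a generic linear change of coordinates the algorithm of \cite{SaSc} reduces to computing the polar varieties attached to $\ProjLeft{1},\dots,\ProjLeft{d}$ restricted to $\Variety{\F}$ (and recursing on their positive-dimensional strata). On the complement of the vanishing locus of a fixed size-$(n-d)$ minor of $\jacc{\F}$, such a polar variety is cut out by a \emph{reduced regular sequence} consisting of $n-d$ of the $f_j$ together with minors of size $n-d+1$ of a suitable Jacobian matrix, each of degree at most $(n-d+1)(D-1)$; this is exactly the input format required by \textsf{GeometricSolveRRS}. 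There are $\binom{\nV}{n-d}\binom{n}{n-d}$ such minors, hence that many localizations, and the union of the corresponding computations recovers the full polar variety. A B\'ezout-type argument of the same flavour as in Proposition \ref{prop:degbounds} bounds the degree $\delta$ of each polar variety by $D^{n-d}\left(\left(n-d+1\right)\left(D-1\right)\right)^{d+1}$.

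Next I would substitute into the complexity of \textsf{GeometricSolveRRS}. Every polynomial occurring in one of these local regular sequences is either some $f_j$ or a minor of size $n-d+1$, hence, by Section \ref{sec:sizeSLP}, representable by a straight-line program of size $E=\widetilde{O}\left(n^4\left(L+n^2\right)\right)$. Plugging this $E$ and the degree bound above into the cost $\widetilde{O}\left(\left(nE+n^4\right)\left(D\delta\right)^2\right)$ yields, up to factors polynomial in $n$ which are dominated, a cost of $\widetilde{O}\left(LD^{2(n-d+1)}\left(\left(n-d+1\right)\left(D-1\right)\right)^{2(d+1)}\right)$ per localization. Summing over the $\binom{\nV}{n-d}\binom{n}{n-d}$ localizations, noting that the finitely many projections $\ProjLeft{i}$ and the remaining combinatorial recursion of the algorithm of \cite{SaSc} contribute only dominated terms, and bounding $\binom{n}{n-d}\leq 2^n$, one obtains the claimed probabilistic complexity $\widetilde{O}\left(2^n\binom{\nV}{n-d}L D^{2(n-d+1)}\left(\left(n-d+1\right)\left(D-1\right)\right)^{2(d+1)}\right)$; the algorithm is probabilistic and the bound holds in case of success, inheriting these features from \textsf{GeometricSolveRRS}.

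The step I expect to require the most care is not the arithmetic but the justification that these local systems genuinely form reduced regular sequences on the stated open sets and that their union coincides with the polar varieties manipulated in \cite{SaSc}; for this I would rely on the explicit charts and genericity statements of \cite[Section 3]{BGHSS} and on the polar variety constructions of \cite{BaGiHeMb97,BaGiHeMb01,din2013nearly}. Once this structural input is granted, the rest is bookkeeping built on Proposition \ref{prop:degbounds} and the complexities collected in Section \ref{sec:geomressubroutines}.
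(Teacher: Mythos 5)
Your proposal matches the paper's own proof almost exactly: both invoke the sample-point algorithm of \cite{SaSc}, replace its global polar-variety computation by the $\binom{\nV}{n-d}\binom{n}{n-d}$ local complete-intersection charts of \cite{BaGiHeMb97,BaGiHeMb01,din2013nearly} and \cite[Section 3]{BGHSS}, solve each chart with \textsf{GeometricSolveRRS}, and conclude via $\binom{n}{n-d}\leq 2^n$. The only difference is that you spell out the SLP size and the degree bound explicitly where the paper leaves them implicit; the structure of the argument and the delicate point (that the local systems are reduced regular sequences whose union recovers the polar varieties) are identical.
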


\subsubsection{Complexity of \SetContaining{}}
The first step in \SetContaining{} is the computation of a set of real
sample points of $\Variety{\F^\bfA}\cap\R^n$. Its complexity is given
in Lemma \ref{lemma:complsample}. Then at the $i$-th step of the loop,
$\ZariskiClosure{\VVA{f^\bfA}{\F^\bfA}{i}\setminus\crit{f^\bfA,V^\bfA}}$,
the set of non-properness of the projection $\pi_T$ restricted to
$\Variety{f-T}\cap\left(\ZariskiClosure{\VVA{f^\bfA}{\F^\bfA}{i}\setminus\crit{f^\bfA,V^\bfA}}\times\C\right)$
and $\VPC{f^\bfA}{\F^\bfA}{i}$ are computed. The costs are given in
Lemma \ref{lemma:complmodvp}, Lemma \ref{lemma:complSONP} and Lemma
\ref{lemma:complVPC}. The complexity for one step is in
$\widetilde{O}\left(\left(s^2 + s2^n\binom{\nV+1}{n-d+1}\right)
  L D^{3(n-d+1)}\left(\left(n-d+1\right)\left(D-1\right)\right)^{3(d+1)}\right)$. Finally,
for the $d$ steps, since $d\leq n$ can be omitted, we get the
following complexity.

\begin{lemma}\label{lemma:complSC}
  In case of success, the routine \SetContaining{} performs at most
  $\widetilde{O}\left( \left(s^2 + s2^n \binom{\nV+1}{n-d+1}\right) L
    D^{3(n-d+1)}\left(\left(n-d+1\right)\left(D-1\right)\right)^{3(d+1)}\right)$
  arithmetic operations in $\Q$.
\end{lemma}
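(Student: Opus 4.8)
The plan is a straightforward cost accounting of \SetContaining{}, bounding every arithmetic contribution by the target quantity
\[
Q \;=\; \widetilde{O}\!\left(\left(s^2 + s2^n \binom{\nV+1}{n-d+1}\right) L\, D^{3(n-d+1)}\left(\left(n-d+1\right)\left(D-1\right)\right)^{3(d+1)}\right),
\]
using the geometric degree bounds of Proposition~\ref{prop:degbounds}, the SLP size bounds of Section~\ref{sec:sizeSLP}, and the per-subroutine costs recorded in Lemmas~\ref{lemma:complsample}, \ref{lemma:complmodvp}, \ref{lemma:complSONP} and~\ref{lemma:complVPC} (all conditional on success of the probabilistic geometric-resolution subroutines). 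First I would treat the single call $\listS\leftarrow\SamplePoints(\F^\bfA)$: by Lemma~\ref{lemma:complsample} it costs $\widetilde{O}(2^n\binom{\nV}{n-d}L\,D^{2(n-d+1)}((n-d+1)(D-1))^{2(d+1)})$, and since $\binom{\nV}{n-d}\leq\binom{\nV+1}{n-d+1}$ and the geometric-degree powers here are smaller than those in $Q$, this term is dominated by $Q$.

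Next I would bound one pass $i$ of the loop, which performs three operations. Computing a lifting-fiber representation of $\ZariskiClosure{\VVA{f^\bfA}{\F^\bfA}{i}\setminus\crit{f^\bfA,V^\bfA}}$ costs $Q$ by Lemma~\ref{lemma:complmodvp}; feeding those fibers to \SetOfNonProperness{} costs $Q$ by Lemma~\ref{lemma:complSONP}; and computing a rational parametrization of $\VPC{f^\bfA}{\F^\bfA}{i}$ from the same fibers costs $Q$ by Lemma~\ref{lemma:complVPC}. The only extra work is the update $\polNP\leftarrow\polNP\times\SetOfNonProperness(f^\bfA,\listC[i])$: by Proposition~\ref{prop:degbounds} each factor has degree at most $D^{n-d}((n-d+1)(D-1))^{d+1}$, so the accumulated product over the loop has degree at most $d$ times that, and this univariate multiplication is quasi-linear in a quantity far below $Q$, hence negligible. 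Thus one pass costs $O(Q)$, and there are $d$ of them.

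Finally I would sum: the total is the \SamplePoints{} cost plus $d\cdot O(Q)=\widetilde{O}(d\,Q)$, and since $d\leq n$ this extra factor is absorbed by the $2^n$ and polynomial-in-$n$ factors already present in $Q$, so the routine runs in $\widetilde{O}(Q)$ operations in $\Q$, as claimed. I do not expect a genuine obstacle; the only point needing care is to verify that the subroutines of Lemmas~\ref{lemma:complSONP} and~\ref{lemma:complVPC} reuse the lifting fibers produced in the first step of the pass rather than recomputing them, and that the SLP size $E$ plugged into all three estimates is the one bounded in Section~\ref{sec:sizeSLP} — it is precisely the bound $\widetilde{O}((s+\binom{\nV+1}{n-d+1}\binom{n}{n-d+1})n^4(L+n^2))$, combined with $\binom{n}{n-d+1}\leq 2^n$, that converts the raw products of subroutine costs into the displayed $s^2 + s2^n\binom{\nV+1}{n-d+1}$ factor.
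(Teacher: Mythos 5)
Your proposal is correct and follows essentially the same approach as the paper: bound the single \SamplePoints{} call via Lemma~\ref{lemma:complsample} (dominated by $Q$ using $\binom{\nV}{n-d}\leq\binom{\nV+1}{n-d+1}$ and the smaller degree exponents), bound one loop pass by summing Lemmas~\ref{lemma:complmodvp}, \ref{lemma:complSONP}, \ref{lemma:complVPC}, and absorb the $d\leq n$ iterations into the soft-O. Your remarks about the negligible univariate product and the reuse of lifting fibers are mild elaborations that the paper leaves implicit but do not change the argument.
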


\subsubsection{Complexity of \FindInfimum{}}
The most expensive steps in this routine are the calls to
\IsEmpty{}. There are at most $k$ such steps, where $k$ is the number
of points of non-properness, of critical values and of real sample
points. Using the Bézout inequality, $k$ lies in
$\widetilde{O}\Par{D^{n-d}\left(\left(n-d+1\right)\left(D-1\right)\right)^{d+1}}$. Using
the complexity estimate given in Lemma \ref{lemma:complsample}, this
leads to the following.

\begin{lemma}\label{lemma:complFI}
  In case of success, the routine \FindInfimum{} performs at most
  \[\widetilde{O}\left(2^n\binom{\nV}{n-d}L
    D^{3(n-d+1)}\left(\left(n-d+1\right)\left(D-1\right)\right)^{3(d+1)}\right).\]
\end{lemma}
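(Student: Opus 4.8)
The plan is to follow the outline already sketched: the running time of \FindInfimum{} is governed by the calls to \IsEmpty{}, so I would bound both their number and the cost of a single one, multiply, and check that all the remaining work is negligible. Apart from the \IsEmpty{} calls, \FindInfimum{} only builds the ordered list $a_1<\cdots<a_k$ — by applying \RealRootIsolation{} to $\polNP$ and, for each rational parametrization appearing in $\Variety{\listS}$ and in $\Variety{\listP}$, by substituting it into $f^\bfA$ and isolating the real roots of the resulting univariate polynomial — then refines and merges these $\widetilde{O}(k)$ isolating intervals until they are pairwise disjoint, performs the membership tests, draws the random rationals $q_i$, and finally returns a parametrization of a minimizer, the pair $(\polNP,I)$, or $\pm\infty$. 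All of this costs $\widetilde{O}(k^{2})$ up to polylogarithmic and bit-size factors, and once $k$ is bounded as below this is dominated by the target estimate.

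For the number $k+1$ of \IsEmpty{} calls I would use the degree bounds of Section~\ref{sec:degbounds}, noting that $k\le\#\Variety{\listS}+\#\Variety{\listP}+\deg\polNP$. By Proposition~\ref{prop:degbounds}, $\Variety{\listP}=\VPCtot{f^\bfA}{\F^\bfA}$ has at most $d\,D^{n-d}((n-d+1)(D-1))^{d+1}$ points; $\deg\polNP$ is bounded, by construction of \SetOfNonProperness{} and since $\polNP$ is the product over $i$ of its outputs, by the sum over $i$ of the degrees of the curves $\ZariskiClosure{\VVA{f^\bfA}{\F^\bfA}{i}\setminus\crit{f^\bfA,V^\bfA}}$, hence by $\widetilde{O}(D^{n-d}((n-d+1)(D-1))^{d+1})$ through inequality~\eqref{eq:geomdegbound}; and $\deg\Variety{\listS}$ is controlled by the degrees of the polar varieties of $V^\bfA$ that \SamplePoints{} computes, which obey the same Bézout-type bound. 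Absorbing $d\le n$ into the soft-O then gives $k=\widetilde{O}(D^{n-d}((n-d+1)(D-1))^{d+1})$, which also dominates the $\widetilde{O}(k^{2})$ overhead above.

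It then remains to bound one call $\IsEmpty(\{f^\bfA-q_i,\F^\bfA\})$. Because $\bfA\in\NEZOS$, property $\PropReg{f^\bfA,\F^\bfA}$ holds, so as soon as $q_i$ avoids the finite set $f^\bfA(\crit{f^\bfA,V^\bfA}\cup\Sing{V^\bfA})$ — which the random rationals do, as already observed in the proof of Proposition~\ref{prop:FindInfimumcorrect} — the system $\{f^\bfA-q_i,\F^\bfA\}$ satisfies assumptions $\assumptR$ and is a legal input to \IsEmpty{}; running the \SamplePoints{}-type algorithm behind \IsEmpty{} on these $\nV+1$ polynomials of degree $\le D$ in $n$ variables, exactly as in the proof of Lemma~\ref{lemma:complsample} and bounding the degrees of the polar varieties involved by the Bézout argument of Section~\ref{sec:degbounds}, gives a per-call cost dominated by $\widetilde{O}(2^n\binom{\nV}{n-d}L D^{2(n-d+1)}((n-d+1)(D-1))^{2(d+1)})$. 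Multiplying by $k+1$ and using $D^{n-d}\cdot D^{2(n-d+1)}=D^{3(n-d+1)-1}\le D^{3(n-d+1)}$ yields the announced bound. The main obstacle is the degree bookkeeping in this last step: one has to check carefully that enlarging $\F^\bfA$ by the single equation $f^\bfA-q_i$ and dropping the dimension from $d$ to $d-1$ shifts the relevant degree estimates only by a bounded factor, so that the per-call cost is genuinely covered by the estimate of Lemma~\ref{lemma:complsample}; the remaining steps are routine once this is in place.
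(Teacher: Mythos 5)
Your proposal follows the paper's own proof very closely: both arguments identify the \IsEmpty{} calls as the dominant cost, bound their number $k$ by a B\'ezout-type count of the points of $\Variety{\listS}$, $\Variety{\listP}$ and the roots of $\polNP$ using Section~\ref{sec:degbounds}, and then invoke Lemma~\ref{lemma:complsample} for the per-call cost before multiplying. You are in fact rather more careful than the paper's two-line proof: you explain why the non-\IsEmpty{} overhead is negligible, spell out the exponent arithmetic, and correctly flag the one delicate point --- that Lemma~\ref{lemma:complsample} is stated for the system $\F^\bfA$ of codimension $n-d$, not for $\{f^\bfA-q_i,\F^\bfA\}$ of codimension $n-d+1$, so the quoted per-call bound requires a small adjustment to the binomial and degree parameters --- a subtlety the paper glosses over entirely (and only reconciles later in Theorem~\ref{thm:complexity} via $\binom{s}{n-d}\le\binom{s+1}{n-d+1}$). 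Since this is a defect of the paper's exposition rather than of your argument, the proof is correct and essentially the same as the paper's.
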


\subsubsection{Complexity of the Algorithm}
Finally, the complexity of \Optimize{} comes from Lemma
\ref{lemma:complSC} and Lemma \ref{lemma:complFI}, using that
$\binom{s}{n-d}\leq \binom{s+1}{n-d+1}$.

\begin{theorem}\label{thm:complexity}
  In case of success, the algorithm \Optimize{} performs
\[\widetilde{O}\left(\left(s^2+s
  2^n \binom{\nV+1}{n-d+1}\right) L
  D^{3(n-d+1)}\left(\left(n-d+1\right)\left(D-1\right)\right)^{3(d+1)}\right)\]
arithmetic operations in $\Q$.
\end{theorem}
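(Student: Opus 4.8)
The plan is to assemble the claimed bound directly from the complexity estimates already obtained for the two main subroutines. Recall that on input $(f,\F)$ the routine \Optimize{} performs exactly three operations: it draws a random matrix $\bfA\in\GLnQ$; it calls \SetContaining{} on $(f^\bfA,\F^\bfA)$, producing a triple $(\listS,\listP,\polNP)$; and it returns the value computed by \FindInfimum{} on $(f^\bfA,\F^\bfA,\listS,\listP,\polNP)$. First I would dispose of the overhead: drawing $\bfA$ costs $O(n^2)$ operations, and composing the straight-line programs of $f,f_1,\dots,f_\nV$ with the linear forms encoded by $\bfA$ yields straight-line programs for $f^\bfA,f_1^\bfA,\dots,f_\nV^\bfA$ of size $O(L+n^2)$; both costs are negligible next to the bounds below and are absorbed into the soft-$O$.

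Next I would invoke the two key lemmas. By Lemma~\ref{lemma:complSC}, in case of success \SetContaining{} performs at most
\[\widetilde{O}\!\left(\Par{s^2+s\,2^n\binom{\nV+1}{n-d+1}}\,L\,D^{3(n-d+1)}\Par{(n-d+1)(D-1)}^{3(d+1)}\right)\]
arithmetic operations in $\Q$, and by Lemma~\ref{lemma:complFI}, in case of success \FindInfimum{} performs at most
\[\widetilde{O}\!\left(2^n\binom{\nV}{n-d}\,L\,D^{3(n-d+1)}\Par{(n-d+1)(D-1)}^{3(d+1)}\right)\]
operations. The last step is to add these two bounds and simplify. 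Using the elementary inequality $\binom{\nV}{n-d}\leq\binom{\nV+1}{n-d+1}$ recalled just before the statement, one has, for $s\geq 1$,
\[2^n\binom{\nV}{n-d}\;\leq\;2^n\binom{\nV+1}{n-d+1}\;\leq\;s^2+s\,2^n\binom{\nV+1}{n-d+1},\]
so the cost of \FindInfimum{} is dominated by that of \SetContaining{}. Hence the total running time of \Optimize{} equals, up to the soft-$O$, the \SetContaining{} estimate, which is exactly the bound in the statement.

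This argument is essentially bookkeeping, and I do not anticipate a genuine obstacle: the substantive work --- the geometric degree bounds of Proposition~\ref{prop:degbounds}, the straight-line program size estimates of Section~\ref{sec:sizeSLP}, and the per-subroutine analyses culminating in Lemmas~\ref{lemma:complmodvp}--\ref{lemma:complFI} --- has all been carried out beforehand. The only point requiring a moment's care is confirming that the $\binom{\nV}{n-d}$ factor coming from the sample-point computation inside \FindInfimum{} does not contribute a new dominant term; this is precisely what the displayed inequality handles, absorbing that factor into the $s^2+s\,2^n\binom{\nV+1}{n-d+1}$ term already present in the \SetContaining{} bound.
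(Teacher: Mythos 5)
Your proposal is correct and follows exactly the route the paper takes: the theorem is obtained by combining Lemma~\ref{lemma:complSC} and Lemma~\ref{lemma:complFI} and absorbing the $2^n\binom{\nV}{n-d}$ factor from \FindInfimum{} via the inequality $\binom{\nV}{n-d}\leq\binom{\nV+1}{n-d+1}$. The extra remarks about the cost of sampling $\bfA$ and composing the straight-line programs are harmless elaborations of what the paper leaves implicit.
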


Remark that if $\F$ is a reduced regular sequence then the 
complexity is simpler.
\begin{theorem}
  If $\F$ is a reduced regular sequence, the algorithm \Optimize{}
  performs $\widetilde{O}\left(
    L\left(\sqrt[3]{2}\left(s+1\right)D\right)^{3(n+2)}\right)$
arithmetic operations in
  $\Q$.
\end{theorem}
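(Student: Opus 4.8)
The plan is to derive this directly from Theorem~\ref{thm:complexity} by specializing to the case where the codimension of $\Variety{\F}$ equals the number of constraints. Under the standing assumptions $\assumptR$ the general estimate of Theorem~\ref{thm:complexity} already applies; the extra hypothesis that $\F$ is a reduced regular sequence contributes exactly one quantitative fact: a regular sequence of $s$ polynomials in $\PolRing$ cuts out a complete intersection, so $V=\Variety{\F}$ is equidimensional of dimension $d=n-s$ (and "reduced" only reconfirms that $\IdealAngle{\F}$ is radical). Thus I will use the identity $n-d=s$, equivalently $n-d+1=s+1$ and $d+1=n-s+1$.

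Substituting $n-d=s$ into the estimate of Theorem~\ref{thm:complexity}, the binomial coefficient collapses, $\binom{s+1}{n-d+1}=\binom{s+1}{s+1}=1$, so the bound becomes
\[\widetilde{O}\!\left(\bigl(s^2+s\,2^n\bigr)\,L\,D^{3(s+1)}\bigl((s+1)(D-1)\bigr)^{3(n-s+1)}\right).\]
Because $s\le n$, the leading factor $s^2+s\,2^n$ equals $2^n$ times a quantity polynomial in $n$, hence polylogarithmic in $2^n$; so it is $\widetilde{O}(2^n)$ and I fold it into the soft-O. It then remains to bound $2^n\,L\,D^{3(s+1)}\bigl((s+1)(D-1)\bigr)^{3(n-s+1)}$.

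For this last step I estimate crudely. Writing $\bigl((s+1)(D-1)\bigr)^{3(n-s+1)}=(s+1)^{3(n-s+1)}(D-1)^{3(n-s+1)}$, I use $(D-1)^{3(n-s+1)}\le D^{3(n-s+1)}$ to get $D^{3(s+1)}(D-1)^{3(n-s+1)}\le D^{3(s+1)+3(n-s+1)}=D^{3(n+2)}$; since $s+1\ge 1$ and $n-s+1\le n+2$, I have $(s+1)^{3(n-s+1)}\le(s+1)^{3(n+2)}$; and $2^n\le 2^{n+2}=(\sqrt[3]{2})^{3(n+2)}$. Multiplying the three estimates gives
\[\widetilde{O}\!\left(L\,(\sqrt[3]{2})^{3(n+2)}(s+1)^{3(n+2)}D^{3(n+2)}\right)=\widetilde{O}\!\left(L\bigl(\sqrt[3]{2}\,(s+1)\,D\bigr)^{3(n+2)}\right),\]
which is exactly the claimed complexity. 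The argument is pure bookkeeping; the only places deserving a moment of attention are the absorption of $s^2+s\,2^n$ into $\widetilde{O}(2^n)$ (legitimate since $s\le n$ and the soft-O hides polylogarithmic factors of its argument) and keeping the exponents aligned so that the factors $2^{n+2}$, $(s+1)^{3(n+2)}$ and $D^{3(n+2)}$ recombine precisely into $\bigl(\sqrt[3]{2}\,(s+1)\,D\bigr)^{3(n+2)}$. I do not expect any genuine obstacle.
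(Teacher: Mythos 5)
Your derivation is correct and is essentially the only natural proof: the paper states this result as a remark immediately after Theorem~\ref{thm:complexity} without supplying a proof, and the expected argument is exactly the specialization you carry out — using that a reduced regular sequence gives $n-d=s$, hence $\binom{\nV+1}{n-d+1}=1$, then bounding $s^2+s\,2^n$ by $\widetilde{O}(2^n)$ (valid since $s\le n$ and the soft-O argument is at least $2^{n+2}$), and finally recombining $2^{n+2}$, $(s+1)^{3(n+2)}$, and $D^{3(n+2)}$ into $\bigl(\sqrt[3]{2}\,(s+1)\,D\bigr)^{3(n+2)}$. The bookkeeping on exponents ($D^{3(s+1)}(D-1)^{3(n-s+1)}\le D^{3(n+2)}$ and $(s+1)^{3(n-s+1)}\le(s+1)^{3(n+2)}$) is right, so no gaps.
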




\section{Implementation and practical experiments}\label{sec:practical}
We give details about our implementation in Section
\ref{sec:implem}. Instead of using the geometric resolution algorithm
for algebraic elimination, we use Gr\"obner bases that still allow to
perform all geometric operations needed to implement the algorithm
(see \cite{cox} for an introduction to Gr\"obner bases). Moreover,
there exist practically efficient algorithms for computing Gr\"obner
bases \cite{F4, F5}.  This way, the probabilistic aspect of our
algorithm relies on the random choice of a linear change of
variables. In practice, we check whether the linear change of
variables is suitable. Thus one can guarantee exactness. This is
explained in Section \ref{sec:implem}.

In Sections \ref{sec:benchrandom} and \ref{sec:benchapplications}, we
present practical experiments. First, we run our implementation with
random dense polynomials, that is the hardest case for the inputs.  As
an example, our implementation can solve problems with an objective
polynomial and one constraint, both of degree $2$, with up to $32$
variables using $4$ hours of CPU time. With two constraints, our
implementation can solve problems with up to $11$ variables using
$5.3$ hours of CPU time. With a linear objective polynomial subject to
one constraint of degree $4$, both in $5$ variables, it takes $34$
minutes. These results show that our implementation outperforms
general symbolic solvers based on the Cylindrical Algebraic
Decomposition.

Then we run examples coming from applications. Some of these examples
can be solved by QEPCAD. The timings are given in Section
\ref{sec:benchapplications}.

We do not report timings of methods based on sums of squares or
numerical procedures, e.g. \cite{YALMIP, SOSTOOLS, GloptiPoly} since
their outputs are numerical approximation while our algorithm provides
exact outputs.

\subsection{Implementation}\label{sec:implem}
Since our algorithm depends on the choice of a matrix that defines a
change of coordinates, it is probabilistic. However, we present a
technique to make sure that this choice is a correct one. This
technique is used in our implementation.

As stated in Section \ref{sec:correctness}, the algorithm is correct
if the subroutines \SetContaining{} and \FindInfimum{} are
correct. According to Proposition \ref{prop:SetContainingcorrect}, if
the random matrix chosen at the first step of \Optimize{} is such that
$\PropPJSC{f^\bfA,\F^\bfA}$, $\PropPISSAC{f^\bfA,\F^\bfA}$ and
$\PropReg{f^\bfA,\F^\bfA}$ hold, then \SetContaining{} runs
correctly. Then its output satisfies property $\PropOpt{W(f, \F),
  V(\F)}$. Hence, \FindInfimum{} can be called with the output of
\SetContaining{}.

Then the choice of the matrix $\bfA$ leads to a correct output if
$\PropPJSC{f^\bfA,\F^\bfA}$, $\PropPISSAC{f^\bfA,\F^\bfA}$ and
$\PropReg{f^\bfA,\F^\bfA}$ hold.

Property $\PropReg{f,\F}$ always holds if $\F$ satisfies assumptions
$\assumptR$ (see \cite[Lemma 2.2]{GrGuSaZh}). Since for any change of
coordinates, $\F$ satisfies assumptions $\assumptR$ if and only if
$\F^\bfA$ does, $\PropReg{f^\bfA,\F^\bfA}$ holds for any
$\bfA\in\GLnQ$. Then it remains to check $\PropPJSC{f^\bfA,\F^\bfA}$
and $\PropPISSAC{f^\bfA,\F^\bfA}$. Both properties depend on the
properness of projections of the form
\[\fonction{\ProjLeft{d}}{W\subset\C^n}{\C^d}{(x_1, \ldots, x_n)}{\left(x_1,\dots, x_d\right)}\]
where $W$ is an algebraic variety. According to \cite[Proposition
3.2]{jelonek}, if $I_V$ is an ideal such that $V=\Variety{I_V}$ has
dimension $d$ then the projection
\[\fonction{\ProjLeft{d}}{V\subset\C^n}{\C^d}{(x_1, \ldots, x_n)}{\left(x_1,\dots, x_d\right)}\]
is proper if and only if $I_V$ is in Noether position.

Thus we choose the matrix $\bfA$ such that after the change of
variables, the ideals are in Noether position. This can be done using
techniques described in \cite[Section 4.1.2]{KrPaSo01} and
\cite{logar89}. These techniques are used in our implementation to
obtain a matrix as sparse as possible that makes \SetContaining{}
correct.

\subsection{Practical experiments}\label{sec:benchrandom}
The analysis of the degree of the algebraic varieties involved in the
computations provides a singly exponential bound in the number of
indeterminates. This matches the best complexity bounds for global
optimization algorithms using quantifier elimination.  Our
implementation is written in Maple. Gr\"obner bases are computed using
the package FGb \cite{fgb} (\url{http://www-polsys.lip6.fr/~jcf/Software/}.)

The computations were performed on a Intel Xeon CPU E7540 @ 2.00GHz
and 250GB of RAM.

In the tables below, we use the following notations:
\begin{itemize}
\item $d$: degree of the objective polynomial $f$;
\item $D$: upper bound for the degree of the constraints;
\item $n$: number of indeterminates;
\item $\nV$: number of constraints;
\item obj terms: number of terms in the objective polynomial;
\item terms: average number of terms.
\end{itemize}

To test the behavior of the algorithm, we run it with randomly
generated polynomials and constraints as inputs.

Considering an objective polynomial and one constraint, both of degree
$2$ and increasing the number of variables, our implementation can
solve problems with up to $32$ variables in $4$ hours. For this
special case, the algorithm seems to be sub-exponential.

\paragraph{Constraints of degree 2}~
{\small\begin{center}
  \begin{tabular}{|ccccccc|}\hline
    $n$ & $d$ & $D$ & $s$ & obj terms & terms & time\\
    \hline
    8 & 2 & 2 & 1 & 44 & 45 & 9 sec.\\
    12 & 2 & 2 & 1 & 91 & 91 & 30 sec.\\
    16 & 2 & 2 & 1 & 153 & 153 & 2 min..\\
    20 & 2 & 2 & 1 & 229 & 231 & 8 min.\\
    24 & 2 & 2 & 1 & 323 & 323 & 27 min.\\
    28 & 2 & 2 & 1 & 433 & 433 & 1.5 hours\\
    32 & 2 & 2 & 1 & 559 & 557 & 4 hours\\
    7 & 2 & 2 & 2 & 36 & 36 & 92 sec.\\
    8 & 2 & 2 & 2 & 45 & 45 & 7 min.\\
    9 & 2 & 2 & 2 & 55 & 55 & 27 min.\\
    10 & 2 & 2 & 2 & 65 & 66 & 1.6 hours\\
    11 & 2 & 2 & 2 & 78 & 78 & 5.3 hours\\
    \hline
  \end{tabular}
\end{center}}


\paragraph{Constraints of degree 3}~
{\small \begin{center}
  \begin{tabular}{|ccccccc|}\hline
    $n$ & $d$ & $D$ & $s$ & obj terms & terms & time\\
    \hline
    4 & 2 & 3 & 1 & 15 & 34 & 4 sec.\\
    5 & 2 & 3 & 1 & 21 & 55 & 28 sec.\\
    6 & 2 & 3 & 1 & 27 & 84 & 9 min.\\
    7 & 2 & 3 & 1 & 36 & 120 & 3.5 hours\\
    4 & 2 & 3 & 2 & 15 & 34 & 81 sec.\\
    5 & 2 & 3 & 2 & 21 & 56 & 2.2 hours\\
    \hline
  \end{tabular}
\end{center}}

\paragraph{Constraints of degree 4}~
{\small \begin{center}
  \begin{tabular}{|ccccccc|}\hline
    $n$ & $d$ & $D$ & $s$ & obj terms & terms & time\\
    \hline
    2 & 3 & 4 & 1 & 10 & 14 & 2 sec.\\
    3 & 3 & 4 & 1 & 20 & 34 & 4 sec.\\
    4 & 3 & 4 & 1 & 34 & 70 & 7 min.\\
    3 & 3 & 4 & 2 & 20 & 35 & 22 sec.\\
    4 & 3 & 4 & 2 & 35 & 70 & 4.8 hours.\\
    2 & 2 & 4 & 1 & 6 & 15 & 1 sec.\\
    3 & 2 & 4 & 1 & 10 & 35 & 2 sec.\\
    4 & 2 & 4 & 1 & 15 & 68 & 83 sec.\\
    \hline
  \end{tabular}
\end{center}}

\paragraph{Linear objective function}~
{\small \begin{center}
  \begin{tabular}{|ccccccc|}\hline
    $n$ & $d$ & $D$ & $s$ & obj terms & terms & time\\
    \hline
    4 & 1 & 3 & 1 & 5 & 34 & 3 sec.\\
    4 & 1 & 4 & 1 & 5 & 69 & 30 sec.\\
    4 & 1 & 5 & 1 & 5 & 126 & 13 min.\\
    5 & 1 & 3 & 1 & 6 & 56 & 7 sec.\\
    5 & 1 & 4 & 1 & 6 & 126 & 34 min.\\
    5 & 1 & 5 & 1 & 6 & 252 & 87 hours\\
    6 & 1 & 3 & 1 & 7 & 84 & 68 sec.\\
    6 & 1 & 4 & 1 & 7 & 207 & 62 hours\\
    4 & 1 & 3 & 2 & 5 & 35 & 36 sec.\\
    4 & 1 & 4 & 2 & 5 & 70 & 1 hour\\
    4 & 1 & 5 & 2 & 5 & 126 & 33 hours\\
    \hline
  \end{tabular}
\end{center}}

\subsection{Examples coming from applications}\label{sec:benchapplications}
We consider examples coming from applications to compare the execution
time of our algorithm with a cylindrical algebraic decomposition
algorithm. These decompositions are computed using QEPCAD version B
1.69\footnote{Implementation originally due to H. Hong, and
  subsequently added on to by C. W. Brown, G. E. Collins,
  M. J. Encarnacion, J. R. Johnson, W. Krandick, S. McCallum,
  S. Steinberg, R. Liska, N. Robidoux. Latest version is available at
  \url{http://www.usna.edu/cs/~qepcad/}.} These examples are described
in Appendix \ref{appendix:examples} and available as a plain text file
openable with Maple at \url{http://www-polsys.lip6.fr/~greuet/}.

{\small\begin{center}
  \begin{tabular}{|ccccccccc|} \hline
   & $n$ & $d$ & $D$ & $s$ & obj terms & terms & time & QEPCAD\\
    \hline
   nonreached & 3 & 4 & 1 & 1 & 4 & 1 & 2.3 sec. & 0.03 sec.\\
   nonreached2 & 3 & 10 & 3 & 1 & 5 & 5 & 2 sec. & $\infty$\\
   isolated & 2 & 4 & 3 & 1 & 2 & 2 & 0.8 sec. & 0.04 sec.\\
   reachedasymp & 3 & 14 & 1 & 1 & 3 & 1 & 1 sec. & 7.3 sec.\\
   GGSZ2012 & 2 & 2 & 3 & 1 & 2 & 2 & 0.6 sec. & 10.5 sec.\\
   Nie2010 & 3 & 6 & 1 & 1 & 7 & 4 & 1.3 sec. & $\infty$\\
   LaxLax & 4 & 4 & 1 & 3 & 5 & 2 & 0.6 sec. & $\infty$\\
   maxcut5-1 & 5 & 2 & 2 & 5 & 11 & 2 & 0.3 sec. & $\infty$\\
   maxcut5-2 & 5 & 2 & 2 & 5 & 11 & 2 & 0.3 sec. & $\infty$\\
   Coleman5 & 8 & 2 & 2 & 4 & 8 & 4 & 5 sec. & $\infty$\\
   Coleman6 & 10 & 2 & 2 & 5 & 10 & 4 & 33 sec. & $\infty$ \\
   Vor1 & 6 & 8 & n/a & 0 & 63 & n/a & 2 min. & $\infty$\\
    \hline
  \end{tabular}
\end{center}}

\appendix
\section{Description of examples}\label{appendix:examples}
\begin{example}{nonreached, nonreached2}
  Let $g\left(x_1,x_2,x_3\right) = x_1^2-x_1x_2+x_1x_2x_3+x_2+3$ and
  consider the two problems
  \[\left\{\begin{array}{rl}
    \inf_{x\in \mathbb{R}^3} & (x_1x_2-1)^2+x_2^2 + x_3^2 + 42\\
    \text{s.t. } & \ x_3 = 0. \\
  \end{array}\right.\]
 \[\left\{\begin{array}{rl}
    \inf_{x\in \mathbb{R}^3} & (x_1x_2-1)^2+x_2^2 + x_3^2g + \left(x_1+1\right)g^3 + 42\\
    \text{s.t. } & \ g\left(x_1,x_2,x_3\right) = 0. \\
  \end{array}\right.\]
Their infima are not reached. They are the limit of sequences
$f(z_k)$, where $\left\|z_k\right\|$ tends to infinity. For instance,
$z_k$ can be of the form $\left(x^{(k)}_1,\frac{1}{x^{(k)}_1},x^{(k)}_3\right)$, where
$x^{(k)}_1$ tends to infinity. Note that both examples cause instabilities
to numerical algorithms.
\end{example}

\begin{example}{isolated}
  It is a toy example: $f^\star$ is isolated in $f\left(V\cap\R^n\right)$.
  \[\left\{\begin{array}{rl}
    \inf_{x\in \mathbb{R}^2} & \left(x_1^2+x_2^2-2\right)\left(x_1^2+x_2^2\right)\\
    \text{s.t. } & \ \left(x_1^2+x_2^2-1\right)\left(x_1-3\right) = 0. \\
  \end{array}\right.\]
On $V\cap\R^n$, either $x_1^2+x_2^2 = 1$ or $x_1 = 3$. Then the
objective polynomial is either equal to $-1$ or
$\left(7+x_2^2\right)\left(9+x_2^2\right)$. The second expression is
positive over the reals.
\end{example}

\begin{example}{reachedasympt}
  The infimum is both attained and an asymptotic value. Indeed,
  $f^\star = 42$ is reached at any point $\left(x_1,0,0\right)$, but
  is also the limit of sequences of the form
  $\left(x_1,\frac{1}{x_1},0\right)$ when $x_1$ tends to
  infinity. Some iterative methods do not return a minimizer close to
  $\left(x_1,0,0\right)$.
  \[\left\{\begin{array}{rl}
    \inf_{x\in \mathbb{R}^3} & \left(10000\left(x_1x_2-1\right)^4 + x_1^6\right)x_2^6 + \frac{1}{124}x_3^2 + 42\\
    \text{s.t. } & \ x_3 = 0. \\
  \end{array}\right.\]
\end{example}

\begin{example}{GGSZ2012}
  It comes from \cite{GrGuSaZh} (Example 4.4). The minimizer does not
  satisfy the KKT conditions.
  \[\left\{\begin{array}{rl}
      \inf_{x\in \mathbb{R}^2} & (x_1+1)^2+x_2^2\\
      \text{s.t. } & \ x_1^3 = x_2^2. \\
  \end{array}\right.\]
\end{example}

\begin{example}{Nie2011}
  It comes from \cite{NieExact2011} (Example 5.2). It has been studied
  in \cite{GrGuSaZh} because of the numerical instabilities that
  occurs with numerical algorithms.
  \[\left\{\begin{array}{rl}
  \inf_{x\in \mathbb{R}^3} & x_1^6+x_2^6+x_3^6+3x_1^2x_2^2x_3^2-x_1^2(x_2^4+x_3^4)-x_2^2(x_3^4+x_1^4)-x_3^2(x_1^4+x_2^4)\\
  \text{s.t. } & \ x_1+x_2+x_3-1 = 0. \\
  \end{array}\right.\]
\end{example}

\begin{example}{LaxLax}
  The objective polynomial appears in \cite{laxlax} and
  \cite{KLYZ12}. Its infimum is $0$ and is reached over $V\left(x_1 ,
    x_2 - x_3, x_3 - x_4\right)\cap\R^n$.
  \[\left\{\begin{array}{rl}
      \inf_{\left(x\right)\in \mathbb{R}^4} & x_1x_2x_3x_4 - x_1\left(x_2 - x_1\right)\left(x_3 - x_1\right)\left(x_4 - x_1\right)\\
      & - x_2\left(x_1 - x_2\right)\left(x_3 - x_2\right)\left(x_4 - x_2\right) - x_3\left(x_1 - x_3\left)\right
        (x_2 - x_3\right)\left(x_4 - x_3\right)\\
      & - x_4\left(x_1 - x_4\right)\left(x_2 - x_4\right)\left(x_3 - x_4\right)\\
      \text{s.t. } & x_1 =  x_2 - x_3 =  x_3 - x_4 = 0.
  \end{array}\right.\]
\end{example}

\begin{example}{maxcut5-1/5-2}
  A cut of a graph with weighted edges is a partition of the vertices
  into two disjoint subsets. Its weight is the sum of the weights of
  the edges crossing the cut. The maxcut problem is to find a cut
  whose weight is greater than or equal to any other cut.  This
  problem has applications, among other, in very-arge-scale
  integration circuit design and statistical physics \cite{DeLa,
    FePaReRi}. It can be reformulated has a constrained polynomial
  optimization problem \cite{commander}. For a graph of $p$ vertices
  and weight $w_{ij}$ for the edge joining the $i$-th vertex to the
  $j$-th one, it is equivalent to solve
  \[\left\{\begin{array}{rl}
      \inf_{x\in \mathbb{R}^p} & \,\,\,\, - \frac{1}{2}\sum_{1\leq i < j \leq p} w_{ij}\left( 1 - x_ix_j\right)\\
      \text{s.t. } & x_i^2 - 1 = 0, \text{ for } i\in\left\{1,\dots, p\right\}, \\
    \end{array}\right.\]
  We use the set of weight $W_{G5-1}$ and $W_{G5-2}$ in \cite{BaBu},
  that leads to solve
  \[\left\{\begin{array}{rl}
      \inf_{x\in \mathbb{R}^5} & -98 + \frac{23}{2} x_1x_2 + 8 x_1x_3 + 9 x_1x_4 + \frac{17}{2}x_1x_5 + \frac{25}{2} x_2x_3 \\
      &  + 13 x_2 x_4 +\frac{23}{2} x_2 x_5 + 7x_3x_4 + 12 x_3x_5 + 5x_4x_5\\
      \text{s.t. } & x_i^2 - 1 = 0, \text{ for } i\in\left\{1,\dots, 5\right\}.\\
    \end{array}\right.\]
  and
  \[\left\{\begin{array}{rl}
    \inf_{x\in \mathbb{R}^5} &-31+3x_1x_2+3x_1x_3+4x_1x_4+5x_1y_5+\frac52x_2x_3+\frac52x_2x_4+3x_2x_5\\
    & +2x_3 x_4+3x_3x_5+3x_4x_5\\
    \text{s.t. } & x_i^2 - 1 = 0, \text{ for } i\in\left\{1,\dots, 5\right\}.\\
  \end{array}\right.\]
\end{example}

\begin{example}{coleman5/6}
  They come from optimal control problems and appears in
  \cite{coleman}.  For $M\in\left\{5,6\right\}$, let
  $x_1,\dots,x_{M-1}$ and $y_1,\dots,y_{M-1}$ be the indeterminates.
  \[\left\{\begin{array}{rl}
      \inf_{\left(x,y\right)\in \mathbb{R}^{2M}} & \,\,\,\, 
      \frac{1}{M}\sum_{i = 1}^{M-1} x_i^2 + y_i^2\\
      \text{s.t. } & y_1 - 1 =  y_{i+1} - y_i - \frac{1}{M-1}\left(y_i^2 - x_i\right) = 0, \text{ for } i\in\left\{1,\dots, M-2\right\}.
  \end{array}\right.\]
\end{example}

\begin{example}{Vor1}
  It comes from \cite{Vor} and have no constraints. It is too large to
  be written here but can be found at
  \url{http://www-polsys.lip6.fr/~greuet/}.
\end{example}


\bibliographystyle{siam} \bibliography{biblio}
\end{document}